\documentclass[10pt,leqno,oneside]{extarticle}



\usepackage{mathptmx} 


\usepackage{titlesec}
\titleformat*{\section}{\large\bfseries} 
\titleformat*{\subsection}{\large\bfseries} 


\usepackage{cite}

\usepackage[small]{caption}
\usepackage{amsmath}
\usepackage{amssymb}
\usepackage{amsfonts}
\usepackage{setspace}
\usepackage{latexsym}
\usepackage{mathrsfs}
\usepackage{epsfig}
\usepackage{amsthm}
\usepackage{yfonts}
\usepackage{graphicx}
\usepackage{color}

\usepackage{MnSymbol}

\usepackage{enumitem}
\setitemize{wide}




\newcommand{\n}{\noindent}
\newcommand{\be}{\begin{equation}}
\newcommand{\ee}{\end{equation}}
\newcommand{\ben}{\begin{displaymath}}

\newcommand{\een}{\end{displaymath}}

\newcommand{\vs}{\vspace{0.2cm}}


\newtheorem{Proposition}{Proposition}
\newtheorem{Theorem}{Theorem}

\newtheorem{Lemma}{Lemma}

\newtheorem{Corollary}{Corollary}


\usepackage{fancyhdr}
\pagestyle{fancy}

\AtBeginDocument{\thispagestyle{plain}}
\fancypagestyle{plain}{
\fancyhead{}
\fancyfoot{}
\fancyhead[LE,RO]{}
\fancyhead[LO,RE]{}
\fancyfoot[R]{\thepage}

\headsep = 20pt
\addtolength{\headwidth}{.2cm}
}

\fancyhead{}
\fancyfoot{}
\fancyhead[LE,RO]{\fontsize{11}{11}\selectfont
\slshape\rightmark}
\fancyhead[LO,RE]{\fontsize{11}{11}\slshape \nouppercase \leftmark}
\fancyfoot[R]{\thepage}




  




      \addtolength{\hoffset}{-1.5cm}\addtolength{\textwidth}{1.5cm}\addtolength{\textheight}{2.7cm}\addtolength{\oddsidemargin}{.7cm}\addtolength{\evensidemargin}{-0.2cm}\headsep = 10pt \addtolength{\voffset}{-1.4cm}\linespread{1.05}\addtolength{\headwidth}{1.5cm}





\begin{document}

\thispagestyle{plain}

\n {\huge On the shape of rotating black-holes}\\

\n {\sc Martin Reiris}

\n {\small martin@aei.mpg.de}\\

\n {\sc Maria Eugenia Gabach Clement}

\n {\small megabach@gmail.com}\\

\n \textsc{Max Planck Institute f\"ur Gravitationasphysik \\ Golm - Germany}\\

\n \begin{minipage}[l]{11cm}\begin{spacing}{.9}{\small
We give a thorough description of the shape of rotating axisymmetric stable black-hole (apparent) horizons applicable in dynamical or stationary regimes. It is found that rotation manifests in the widening of their central regions ({\it rotational thickening}), limits their global shapes to the extent that stable holes of a given area $A$ and angular momentum $J\neq 0$ form a precompact family ({\it rotational stabilization}) and enforces their whole geometry to be close to the extreme-Kerr horizon geometry at almost maximal rotational speed ({\it enforced shaping}). The results, which are based on the stability inequality, depend only on $A$ and $J$. In particular they are entirely independent of the surrounding geometry of the space-time and of the presence of matter satisfying the strong energy condition. A complete set of relations between $A$, $J$, the length $L$ of the meridians and the length $R$ of the greatest axisymmetric circle, is given. We also provide concrete estimations for the distance between the geometry of horizons and that of the extreme Kerr, in terms only of $A$ and $J$. Besides its own interest, the work has applications to the Hoop conjecture as formulated by Gibbons in terms of the Birkhoff invariant, to the Bekenstein-Hod entropy bounds and to the study of the compactness of classes of stationary black-hole space-times.}
\end{spacing}\vspace{.3cm}{\sc PACS}: 04.70.Bw, 02.40.-k
\end{minipage}

\vs
\section{Introduction.}

Apparent horizons have been used successfully since decades as the localization of the event horizon along the time evolution \cite{MR0424186}.
In the last years however, they have acquired and even bigger mathematical relevance by the finding that they are stable in a very precise sense \cite{Andersson:2007fh}, \cite{Andersson:2007gy}. 
Based in these new developments we give here a thorough description of the shape of rotating ($J\neq 0$) stable horizons of axisymmetric space-times, only in terms of their area $A$ and their angular momentum $J$. 

The remarkable fact that there are strict constraints on the geometry of axisymmetric apparent horizons arising 
merely from $A$ and $J$ is unique to 3+1 dimensions and differs drastically from what occurs even in 4+1 dimensions 
where extraordinary new phenomena seem to emerge \cite{Lehner:2011wc}. In this article we explore the shape of such horizons to gain insight 
about the shape of realistic black holes in our universe.

Celestial bodies tend to be spherical due to gravity. It is expected that whenever enough and slowly rotating mass is gathered close enough 
together, the resultant gravity will pull equally in all directions and a spherical shape will result. Thus, stars and planets,
on the whole, are close to spherical. 
But when fast rotating matter condensates, the deviations from sphericity of the final shape could be quite common and not necessarily negligible or small.
The most noticeable of these deformations is a flattening perpendicular to the rotation axis of the spinning objects, resulting in 
configurations that become ever more oblate for increasingly rapid rotation. The largest known rotational flattening of a 
star in our galaxy is present in the star Achernar (the ninth-brightest star in the night), which  is spinning so fast that the ratio of the equatorial radius $R_{\rm e}$ to the polar radius 
$R_{\rm p}$ deviates drastically from one, reaching the outstanding $R_{\rm e}/R_{\rm p} \approx1.56$ (\cite{DomicianodeSouza:2003rq}). This implies a flattening
$f:=1-R_{\rm p}/R_{\rm e} \approx 0.35$. The significance of the deviation is evident when comparing it with the flattening of the Sun $f\approx 5\times 10^{-5}$, the Earth $f\approx 3.35\times 10^{-3}$ or Saturn $f\approx 9.79\times 10^{-2}$. Yet, in all these cases, even in the extreme Achernar, flattening is largely a classical phenomenon associated to rotation and with General Relativity playing no role. For Einstein's theory to be significant, the rotational period $T$ of the object must be of the order of $4\pi GM/c^{3}$, or (in geometrized units) the dimensionless quotient
\ben
\Gamma:=\frac{1}{2\Omega M}
\een
where $\Omega$ is the angular velocity, must be of the order of the unity. Achernar, in particular, has $\Gamma \approx 10^{10}$ and even higher ratios hold for the 
other astrophysical objects mentioned above. In Newtonian mechanics, and for uniformly rotating bodies of constant density, 
the quotient $\Gamma$ is closely related to the quotient 
\begin{equation}
\tilde{\Gamma}=\frac{A}{8\pi |J|}
\end{equation}
which depends only on the mass and the geometry of the physical system. Indeed, for spheres we have 
$\tilde{\Gamma}=\frac{5}{2} \Gamma$ and for cylinders with radius equal to their height we have 
$\tilde{\Gamma}=\Gamma$. The quotient $\tilde{\Gamma}$ is also meaningful for axisymmetric 
black holes and will be used fundamentally all through the article.

In comparison to the examples before where $\Gamma$ is exceedingly large, the situation drammatically changes when 
considering millisecond pulsars. For instance a pulsar with a rotational period of one millisecond and of two solar masses, would have $\Gamma \approx 8.3$. To date, the highest rotating pulsar known is PSR J1748-2446ad, with a period $1.4$ milliseconds, and a mass between one or two solar masses. Even a conservative mass of one and a half solar masses would give $\Gamma \approx 15.5$. Two questions thus naturally arise: Are the shapes of millisecond pulsars affected by their high rotations? Is General Relativity playing any role?.

Let us move now to see what occurs to the Kerr black-hole horizons which are by nature General Relativistic. From now on it will be conceptually advantageous to think of horizons as the surfaces of ``abstracted" celestial bodies possessing a mass $M$, an area $A$, an angular momentum $|J|$ and a rotational velocity $\Omega$, all like most ordinary celestial bodies would have\footnote{The abstraction is so useful indeed that, at times, it can be a bit perplexing}. The metrics of the Kerr horizons carry the expressions
\be\label{KHM}
h=\Sigma\, d\theta^{2} + (2M r)^{2}\, \Sigma^{-1}\sin^{2}\theta\, d\varphi^{2}
\ee
where 
\ben
\Sigma=r^{2}+|J|^{2}M^{-2}\cos^{2}\theta,\qquad \text{and}\qquad\ r=M+\sqrt{M^{2}-|J|^{2}M^{-2}}.
\een
The three most basic measures of ``size'' of an axisymmetric black-hole are its area $A$, the length $R$ of its great circle, that is, the length of the greatest axisymmetric orbit, and the length $L$ of the meridian, which is the distance between the poles, as is described in Figure [F1]. For the Kerr black-holes these parameters are given by
\ben
A=8\pi M r,\qquad R=4\pi M,\qquad \text{and}\qquad L=\int_{0}^{\pi}\sqrt{r^{2}+|J|^{2}M^{-2}\cos^{2}\theta}\, d\theta.
\een

\begin{figure}[h]
\centering
\includegraphics[width=9cm,height=8cm]{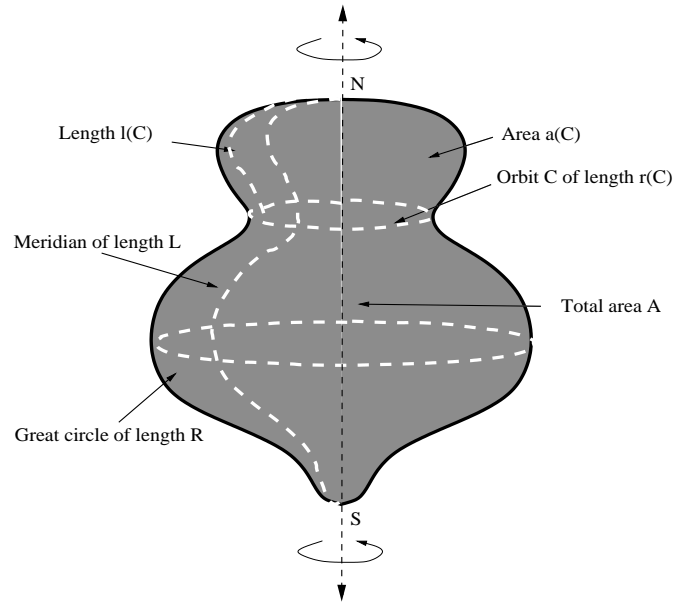}
\caption{Representation of a distorted (dynamical or stationary) axisymmetric horizon and the main geometric parameters.}
\label{Fig1}
\end{figure} 

\n If we fix the mass but increase the rotation from $|J|=0$ all the way until the greatest angular momentum a horizon can hold at $|J|=M$, then the length $R$ of the great circle remains constant but the length $L$ of the meridian deacrases monotonically\footnote{As a direct implicit computation of its derivative shows.}. The flattening $\tilde{f}:=1-2L/R$, in particular, passes from $\tilde{f}=0$ when $|J|=0$ to a maximum $\tilde{f} \approx 0.36$ when $|J|=M$ (note that the flattening coefficient $\tilde{f}$ is not the same as $f$). To compare, 
the Achernar star has $\tilde{f}=0.17$. As expected then, the more the black-holes rotate the more oblate they become.  
Observe that, as $|J|$ varies from $|J|=0$ to $|J|=M$, the quantities $\Gamma=1/2\Omega M$ and $\tilde{\Gamma}=A/8\pi |J|$ for the Kerr-horizons, vary from $\Gamma=\tilde{\Gamma}=\infty$ all the way down to $\Gamma=\tilde{\Gamma}=1$. In particular for the extreme horizon, which is the most oblate one, with $\tilde{f}\approx 0.36$, we have $1/2\Omega M=A/8\pi|J|=1$. Then, although the Kerr horizons are by nature General Relativistic, their rotational flattening is markedly manifest only when $1/2\Omega M\approx A/8\pi |J|\approx 1$.      

It is worth mentioning that none of the rotating Kerr-horizons (i.e. when $|J|\neq 0$) are exactly metrical spheroids and their oblate shapes are not so simple to visualize. To get a better graphical understanding one could isometrically embed them into Euclidean space. This can be done for small values of $|J|$, obtaining then nice oblate spheroidal shapes \cite{Gibbons:2009qe}, but there is a maximum value of $|J|$ (less than $M$) after which isometric embeddings into Euclidean space are no more available
\footnote{To roughly see that such a maximum must exist observe that when $|J|=M$, namely for the fastest rotating black-hole, we have $A=16\pi M^{2}$ and $R=4\pi M$, in particular the areas of the discs $D_{N}$ and $D_{S}$ enclosed by the great circle are both equal to $8\pi M^{2}$. If an isometric embedding exists then the great circle would map into a circle $C$ in Euclidean space and of radius $2M$, but then the discs $D_{N}$, $D_{S}$ would both have to map into the flat disc filling 
$C$, because this is the only disc with boundary $C$ having area $4\pi M^{2}$. All this is a manifestation of the fact that for $|J|$ high the Gaussian curvature of the horizon becomes rather negative near the two poles.}. 
A detailed discussion of these issues is presented in \cite{Gibbons:2009qe} including an analysis of isometric embeddings of horizons into the hyperbolic space.   
For reference, a convenient way to depict axisymmetric holes is the following. For every rotational orbit $C$ let $a(C)$ be the area of the disc enclosed by $C$ and containing the north pole $N$, and let $r(C)$ be the length of $C$. Then on a $(r,a)$-grid, graph $r(a)$, shift it upwards by $A/2$ and flip it around the $\{r\}$-axis (to have the north up). The result is the representation of the black-hole. In the Figure \ref{Fig2} we show the corresponding graphs of the Schwarzschild and extreme-Kerr holes of the same mass (equal to $1/2$). For Schwarzschild in particular the graph is a semicircle. The flattening due to rotation is then evident. We will use this type of representation again in Figure \ref{PON}.    

\begin{figure}[h]
\centering
\includegraphics[width=9cm,height=5cm]{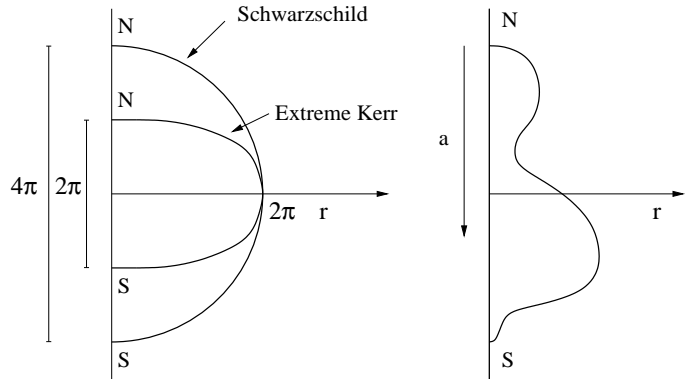}
\caption{Left: The visualization of the Schwarzschild and extreme Kerr black holes in the (r,a)-grid. Right: The representation of the geometry in Figure \ref{Fig1}.}
\label{Fig2}
\end{figure} 

We will investigate flattening and other effects that rotation causes over the shape of black-holes, and we will do so, as we said, only in terms of $A$ and $J$. The reason why it is useful in axisymmetry to control the geometry of horizons in terms of $A$ and $J$ can only be exemplified as follows. Suppose that a single compact body (part of an axisymmetric system) evolves in such a way that at a certain time-slice $\{t_{0}\}$ it is surrounded by trapped surfaces signaling the beginning of gravitational collapse and the emergence of a black hole. At the slice $\{t_{0}\}$ and at any other subsequent slice $\{t\}$, the apparent horizon $H_{t}$ is located at the boundary of the trapped region \cite{MR0424186}, \cite{Andersson:2007gy}. As the material body sinks deep inside the hole the outside region of the apparent horizons stays empty and their angular momentum is conserved, i.e. $J(H_{t})=J$. Moreover, at every time slice $\{t\}$ the universal inequality $8\pi |J|\leq A(H_{t})$ holds \cite{2011PhRvD..84l1503J},\cite{2011PhRvL.107e1101D},\cite{Hennig:2008yw} and we also expect the validity of the Penrose inequality $A(H_{t}) \leq 8\pi M^{2}$, where $M$ is the ADM-mass which is also conserved. Thus, in this scenario we have $J(H_{t})=J$, $8\pi |J|\leq A(H_{t})$ and we expect to have $A(H_{t})\leq 8\pi M^{2}$. Hence, every quantity or property of stable horizons that is proved to be controlled only by the area $A$ and the angular momentum $J$, will be also controlled on the apparent horizons in the process of gravitational collapse. 

One of the first attempts to give information about the shape of black holes goes back to the Hoop conjecture, formulated  by 
Thorne \cite{1818} in 1972. It reads ``{\it Horizons form when and only when a mass $m$ gets compacted onto a region whose 
circumference in every direction is less than or equal to $4\pi M$}". According to this conjecture, the circumference around the 
region must be bounded in every direction, and hence, a thin but long body of given mass would 
not necessarily  evolve to form a horizon. Unfortunately, the impreciseness of Thorne's statement had made this 
heuristic conjecture difficult to state, approach and ultimately, to prove. In this article we assume the presence of a black hole and investigate its geometric properties. In this sense necessary conditions for the formation of black-holes are presented. Particularly, we will validate the picture of the (reciprocal) Hoop conjecture as formulated by Gibbons \cite{Gibbons:2009xm}. This is done in Proposition \ref{CORHO}.

Well defined, intrinsic and useful measures of shape are important in the study of the geometry of black hole horizons. To define them one possibility is to  use a background, well known configuration, to compare with. For rotating black holes, the extreme Kerr black-hole plays a key role, and will be used therefore as the reference metric. In this regard in Theorem \ref{Thmpoint} we are able to estimate the ``distance'' from a given horizon to the extreme Kerr horizon (of the same $J$) only in terms of $A$ and $J$.
One can also red consider global quantities like $R,L$ or $A$ or one can construct dimension-less coefficients, like the flatness coefficient $\tilde{f}=1-2L/R$ mentioned before, that give an intrinsic notion of deformation. Gibbons \cite{Gibbons:2009xm}, \cite{Gibbons:2012ac} for instance, studies the length of the shortest non 
trivial closed geodesic $\ell$ and the Birkhoff's invariant $\beta$. To demonstrate their usefulness he proves that if the surface admits an 
antipodal isometry and that the Penrose inequality holds, then $\ell\leq\sqrt{\pi A}$ and $\ell\leq 4\pi M$. He conjectures that these inequalities hold in the general case, without antipodal symmetry. In Proposition \ref{CORHO} we come very close to proving it as we will get $\ell \leq \beta \leq 2\sqrt{\pi A}$. 
We present many geometric relations of this kind between $R,L, A$ and $J$ which are resumed and discussed in Theorems \ref{ThmRL} and \ref{ThmRL2} and in Proposition \ref{CORHO}. 

Outermost marginally trapped surfaces (MOTS), of which apparent horizons are an instance, are those for which the outgoing null expansion is zero. Stable MOTS are those which can be deformed outwards while keeping the outgoing null-expansion non-negative (to first order) \cite{Andersson:2007gy}. All the results in this article are stated for stable MOTS. To have a flexible terminology we will refer them from now on simply as stable ``horizons'', ``holes'' or ``black-holes''. 

At first sight the stability property seems to be too simple to have any relevant consequence. But indeed and contrary to this perception the stability is crucial and plays a central role in many features of black-holes. It will be also the main tool to be used here. For this reason 
let us give now a glimpse of the main elements of stability in the axisymmetric setup. 
For an axisymmetric and stable black-hole $H$ in a space-time with matter satisfying the strong energy condition, the stability implies the inequality 
\be\label{INSTI1}
\int_{H} \bigg(|\nabla \alpha|^{2} +\kappa \alpha^{2}\bigg)\, dA\geq \int_{H} \frac{|S|^{2}}{2}\alpha^{2}\, dA  
\ee
for any axisymmetric function $\alpha$ on $H$ \cite{2011PhRvD..84l1503J}. Here $\kappa$ is the Gaussian curvature of $H$ with its induced two-metric $h$ and $S$ is the (intrinsic) Hajic\`ek one-form which is defined by $S(X)=-<k,D_{X} l>/2$ where $l$ and $k$ are outgoing and ingoing future null vectors respectively, normalized to have $<k,l>=-2$ but otherwise arbitrary ($D$ is the covariant derivative of the space-time). In terms of $S$, the Komar angular momentum of $H$ is just
\be\label{JDEF}
J(H)=\frac{1}{8\pi} \int_{H} S(\xi)\, dA
\ee 
where $\xi$ is the rotational Killing field. One can use axisymmetry to further simplify (\ref{INSTI1}). We explain how this is done in what follows. Over any axisymmetric sphere there are unique coordinates $(\theta,\varphi)$, called {\it areal 
coordinates}, on which the metric takes the form
\be\label{METRIC}
h=\bigg(\frac{A}{4\pi}\bigg)^{2}e^{\ \displaystyle-\sigma(\theta)}d\theta^{2}+e^{\ \displaystyle \sigma(\theta)}\sin^{2}\theta d\varphi^{2}
\ee
and where $\partial_{\varphi}=\xi$ is, manifestly, the rotational Killing field over $H$. Regularity at the poles implies
$\sigma(0)=\sigma(\pi)=\ln (A/4\pi)$. The area element is $dA=\frac{A}{4\pi}\sin\theta d\theta d\varphi$
and is thus a multiple of the area element of the unit two-sphere. Then define a rotational potential $\omega=\omega(\theta)$ by imposing 
\begin{align*}
& \frac{d\omega}{d\theta}=\frac{A}{2\pi}\sin\theta S(\partial_{\varphi}),\\
& \omega(0)=-\omega(\pi).
\end{align*}
A direct computation using (\ref{JDEF}) then gives $J=(\omega(\pi)-\omega(0))/8=\omega(\pi)/4$. In terms of the coordinates $(\theta,\varphi)$ and $\omega$, the inequality (\ref{INSTI1}) results into 
\be\label{FINE}
\int_{H}\big( |\nabla \alpha|^{2} + \kappa \alpha^{2} \big)\, \sin\theta d\theta d\varphi \geq  \bigg(\frac{2\pi}{A}\bigg)^{2}\int_{H} \frac{|\nabla \omega|^{2}}{\eta\sin\theta}\alpha^{2}\, d\theta d\varphi
\ee
which is valid for any axisymmetric function $\alpha$. Note that the integrands are independent of $\varphi$ and that therefore the integral in $\varphi$ can be factored out to a $2\pi$. The inequality is set out of the two arguments $\omega$ and  $\sigma$, and for this reason $(\omega,\sigma)$ will be our data. Many times however we will use 
\ben
\eta:=e^{\sigma}\sin^{2}\theta
\een
instead of $\sigma$, and use the data $(\omega,\eta)$ instead of $(\omega,\sigma)$. Of particular interest is $(\omega_{E},\sigma_{E})$, the data of the extreme Kerr horizon with angular momentum $J$, which plays the role of a background data and has the expression 
\ben
\sigma_{E}=\ln \frac{4|J|}{1+\cos^{2}\theta},\qquad \omega_{E}=-\frac{8J\cos\theta}{1+\cos^{2}\theta}.
\een
All the results in this article are based on different uses of the fundamental inequality (\ref{FINE}). The difficulty in each case resides in how to chose the trial functions $\alpha$ to get the desired information over $(\omega, \sigma)$.
Let us illustrate this point with an example that will be important to us many times later. Choosing $\alpha=e^{-\sigma/2}$ in (\ref{FINE}) one obtains \cite{2011PhRvL.107e1101D}
\be\label{I}
A\geq 4\pi e^{\displaystyle (\ \mathcal{M}-8)/8}
\ee
where ${\mathcal{M}}=\mathcal{M}(\omega,\sigma)$ is the functional
\be\label{II}
\mathcal{M}(\omega,\sigma)=\int_{0}^{\pi} \bigg(\sigma'^{2} +4\sigma +\frac{\omega'^{2}}{\eta^{2}}\bigg)\sin\theta d\theta.
\ee    
The crucial fact here is that, regardless of the particular functions $(\omega,\sigma)$ (but with $\omega(\pi)=-\omega(0)=8J$) one has \cite{Acena:2010ws}
\be\label{III}
e^{\displaystyle (\mathcal{M}-8)/8}\geq 2|J|.
\ee
Hence, as shown in \cite{2011PhRvL.107e1101D}, the universal inequality $A\geq 8\pi |J|$ follows by choosing $\alpha=e^{-\sigma/2}$. Equations (\ref{I}), (\ref{II}) and (\ref{III}) will be of great use later. Other choices of $\alpha$ give other kind of information as will be shown during the proofs inside the main text.  

We give now a qualitative overview of our main results. They are discussed in full technical detail in the next Section \ref{MS}. The main results can be resumed in the following three effects due to rotation: (A) {\it Rotational thickening}, (B) {\it Rotational stabilization} and (C) {\it Enforced shaping}.

\vs

(A) {\it Rotational thickening}. In line with the discussion above, the most noticeable effect of rotation is a ``widening'' or ``thickening'' of the bulk of the horizons. The more transparent quantitative estimate supporting this phenomenon is given in 
Theorems \ref{ThmRL} and \ref{ThmRL2}, and states that the length $R$ of the great circle is subject to the lower bound 
\be\label{THICKBH}
\frac{16\pi |J|^{2}}{A}\leq \frac{2|J|}{\delta+\sqrt{\delta^{2}+4}}  \leq \bigg(\frac{R}{2\pi}\bigg)^{2} 
\ee
where
\be
\delta=2\sqrt{\bigg(\frac{A}{8\pi|J|}\bigg)^{2}-1}.
\ee
The meaning of (\ref{THICKBH}) is more evident in black holes with a fixed (non-zero) value of the angular momentum per unit of area, $|J|/A$. Written as $\big(\sqrt{64\pi^{3}|J|/A}\big)\sqrt{|J|}\leq R$, the formula (\ref{THICKBH}) says that the length of the greatest axisymmetric orbit is at least as large as a constant (depending on the ratio angular momentum - area) times the square root of the angular momentum. In simple terms, rotation imposes a minimum (non-zero) value for the length of the greatest circle.

The estimate (\ref{THICKBH}) is somehow elegant but doesn't say whether the greatest circle lies in the ``middle region" of the horizon or ``near the poles", nor does it say anything about the size of other axisymmetric circles. Information about the size of axisymmetric circles in the ``middle regions" can be easily obtained from Theorem \ref{T5}. 
To understand this consider the set of axisymmetric circles $C$ at a distance from the north and south poles greater or equal than one third of the distance between the poles, that is greater or equal than $L/3$. Roughly speaking, the set of such circles ``is the central third" of the horizon. Then, the length $r(C)$ of any such circle is greater than $D(\delta)\sqrt{|J|}$ for a certain function $D(\delta)>0$ (which is a function of the ratio $|J|/A$). This fact, which we prove after the statement of Theorem \ref{T5}, generalizes what we obtained for the great circle and gives further support to the idea of ``thickening by rotation".

\vs
We also show that, provided there is an area bound, the length of the great circle, and therfore the length of any axisymmmetric circle, cannot be arbitrarily large. More precisely we prove also in Theorems \ref{ThmRL} and \ref{ThmRL2} the upper bound  
\be
\bigg(\frac{R}{2\pi}\bigg)^{2}\leq 4|J|\frac{\delta+\sqrt{\delta^{2}+4}}{2} \leq \frac{A}{\pi}.
\ee

There are other related manifestations of the influence of rotation in the shape of horizons which are worth mentioning at this point. For instance we prove in Theorem \ref{ThmRL2} the bounds
\be
D(\delta)\leq \frac{R}{L}\leq 2\sqrt{2}\pi.
\ee
These bounds show that stable rotating horizons of a given area $A$ and angular momentum $J\neq 0$, cannot be arbitrarily oblate nor arbitrarily prolate. This phenomenon is depicted in Figure \ref{PON}. More relations between $R, L, A$ and $J$ are given in Theorem \ref{ThmRL2}.

\begin{figure}[h]
\centering
\includegraphics[width=8cm,height=7cm]{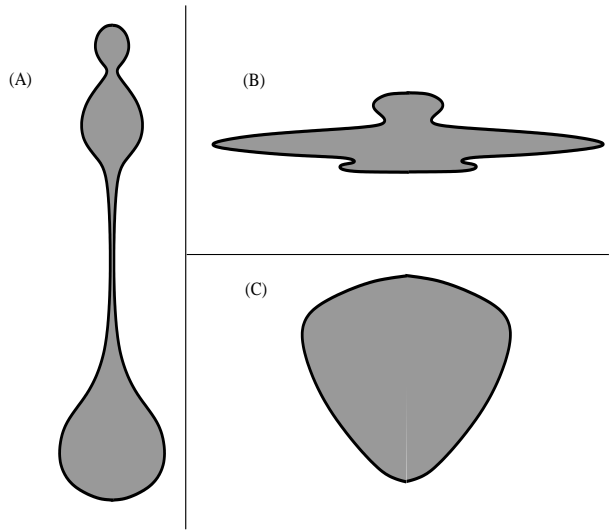}
\caption{When there is a significant rotation $|J|$ in comparison to the area $A$, very prolate or very oblate stable horizons as in (A) or (B)
respectively are forbidden. Instead a shape like (C) would be allowed. The shape of horizons is completely controlled by $\delta=2\sqrt{(A/8\pi|J|)^{2}-1}$.}
\label{PON}
\end{figure}

\vs
(B) {\it Rotational stabilization}. Secondly, we found that rotation stabilizes the shape of stable horizons to such an 
extent that rotating holes of a given area and angular momentum have their entire shapes controlled (and not just their global measures like $R$ or $L$). This is manifest from the pointwise bounds
\be\label{MB2}
\big|\sigma(\theta)-\sigma_{E}(\theta)\big|\leq F(\delta),\qquad \frac{|\omega(\theta)-\omega_{E}(\theta)|}{4|J|}\leq F(\delta)
\ee
for all $\theta\in [0,\pi]$ and for a certain finite function $F(\delta)$, proved in Theorem \ref{Thmpoint}, and which imply the pointwise bounds of the coefficients of the metric $h$ \eqref{METRIC}
%
Still, we are able to prove in Proposition \ref{COMP} the even stronger result that the family of the metrics and potentials of axisymmetric stable horizons of a given area $A$ and angular momentum $J\neq 0$ is precompact (in $C^{0}$). These quantitative facts are specially relevant when applied to apparent horizons in gravitational collapse (as discussed before) revealing a remarkable and unexpected rigidity all along evolution. 

\vs
(C) {\it Enforced shaping}. In third place we found that at very high rotations all the geometry of the horizon tends to that of the extreme Kerr horizon regardless of the presence and type of matter (satisfying the strong energy condition). This claim is also proved in Theorem \ref{Thmpoint}.

\vs
All these results and their applications are discussed in full length in the next sections.

\subsection{Precise statements and further discussions.}\label{MS}

In the sequel we continue using $\delta$ as 
\ben
\delta=2\sqrt{\bigg(\frac{A}{8\pi|J|}\bigg)^{2}-1}.
\een

Our first theorem displays appropriate upper and lower bounds for the length $R$ of the greatest circle. In particular, as commented in $(A)$ above, the lower bound for $R$ in (\ref{Rest1}) shows that rotating black-holes with a given $A$ and $J\neq0$, cannot be arbitrarily ``thin", and the upper bound shows that they cannot be arbitrarily ``thick''.  

\begin{Theorem}\label{ThmRL}
Let $H$ be a stable axisymmetric horizon of area $A$ and angular momentum $J\neq0$. Then the length $R$ of the great circle satisfies 
\be\label{Rest1} 
4|J|\frac{2}{\delta+\sqrt{\delta^{2}+4}}  \leq \left(\frac{R}{2\pi}\right)^{2} \leq 4|J| \frac{\delta+\sqrt{\delta^{2}+4}}{2}.
\ee
\end{Theorem}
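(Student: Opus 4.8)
The plan is to reduce the two-sided estimate to a single algebraic inequality for the length of the largest orbit, and then to extract that inequality from the machinery already in place, in particular from (\ref{I}) together with a sharpened form of (\ref{III}).

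First I would record what $R$ is in terms of the areal data. The axisymmetric orbit at fixed $\theta$ has length $2\pi\sqrt{e^{\sigma(\theta)}\sin^{2}\theta}=2\pi\sqrt{\eta(\theta)}$, so, $\eta$ being continuous on $[0,\pi]$, vanishing at the poles and positive inside, it attains a positive maximum $\eta_{\max}=\eta(\theta_*)$ and $(R/2\pi)^{2}=\eta_{\max}$. Next I would unwind the two stated bounds. Writing $\lambda=(\delta+\sqrt{\delta^{2}+4})/2$ for the positive root of $\lambda^{2}-\delta\lambda-1=0$, one has $\lambda-\lambda^{-1}=\delta$ and $\lambda+\lambda^{-1}=\sqrt{\delta^{2}+4}=A/(4\pi|J|)$ from the definition of $\delta$. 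The claimed inequalities are exactly $4|J|\,\lambda^{-1}\le\eta_{\max}\le 4|J|\,\lambda$, and since $t\mapsto t+t^{-1}$ is convex and takes the common value $\lambda+\lambda^{-1}$ at $\lambda$ and $\lambda^{-1}$, this pair is equivalent to the single inequality
\[
\eta_{\max}+\frac{16|J|^{2}}{\eta_{\max}}\ \le\ \frac{A}{\pi}.
\]
Thus it suffices to prove this one estimate; note it is saturated exactly by the extreme Kerr horizon, where $\eta_{\max}=4|J|$ and $A=8\pi|J|$.

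To obtain it I would lean on (\ref{I}), $A\ge 4\pi e^{(\mathcal{M}-8)/8}$, which is itself sharp at extreme Kerr. Combined with that bound the target reduces to the functional inequality
\[
e^{(\mathcal{M}(\omega,\sigma)-8)/8}\ \ge\ \frac{\eta_{\max}}{4}+\frac{4|J|^{2}}{\eta_{\max}}
\]
for every admissible pair with $\omega(\pi)-\omega(0)=8J$. This strengthens (\ref{III}): its right-hand side is $\ge 2|J|$ by the arithmetic--geometric mean inequality, with equality precisely when $\eta_{\max}=4|J|$, so (\ref{III}) is the $\eta_{\max}$-blind shadow of this finer bound. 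The estimate controls $\eta_{\max}$ from both sides at once, because $m\mapsto m/4+4|J|^{2}/m$ diverges as $m\to\infty$ (forbidding overly oblate, ``thick'' holes) and as $m\to 0$ (forbidding overly prolate, ``thin'' ones), while the left-hand side is pinned by the fixed area through (\ref{I}).

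The substance of the argument, and the step I expect to be the main obstacle, is proving this refined inequality with the correct constant. The natural route is a constrained minimization: minimise $\mathcal{M}$ over admissible data with $J$ fixed and the value of $\eta$ at its maximum prescribed, $\eta(\theta_*)=m$, and show the minimum equals $8+8\ln(m/4+4|J|^{2}/m)$. One can first exploit the scaling $(\sigma,\omega)\mapsto(\sigma+c,e^{c}\omega)$, under which $\mathcal{M}\mapsto\mathcal{M}+8c$ while both sides of the inequality scale by $e^{c}$, to normalise $m$ or $|J|$; the remaining minimisation should be handled through the same convexity (harmonic-map into a negatively curved target) that underlies (\ref{III}), the interior constraint now producing the extra term beyond $2|J|$. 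The delicate points are that $\max_\theta\eta=m$ is a non-smooth sup-constraint, so the constrained minimiser is a broken rather than smooth extremal through $\theta_*$, and that one must verify this broken extremal is genuinely Kerr-like in order to evaluate $\mathcal{M}$ in closed form. For contrast I would note that the crude choice $\alpha=1$ in (\ref{FINE}), paired with Cauchy--Schwarz against the twist, only yields the weaker $\eta_{\max}\ge 16\pi|J|^{2}/A$, which is off by a definite factor at extreme Kerr; this is exactly why the sharp constant forces the variational treatment of $\mathcal{M}$ rather than a one-line test-function estimate.
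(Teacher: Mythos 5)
There is a genuine gap. Your reduction is correct and even elegant: with $(R/2\pi)^{2}=\eta_{\max}$ and $\lambda=(\delta+\sqrt{\delta^{2}+4})/2$, the two bounds are indeed equivalent to $\eta_{\max}+16|J|^{2}/\eta_{\max}\leq A/\pi$, and the sharpened form $e^{(\mathcal{M}-8)/8}\geq \eta_{\max}/4+4|J|^{2}/\eta_{\max}$ is in fact true. But your proposal stops exactly where the proof has to begin: the constrained minimization of $\mathcal{M}$ with the sup-constraint $\max_\theta\eta=m$ is announced, its difficulties (non-smooth constraint, broken extremal, closed-form evaluation) are correctly flagged, and none of it is carried out. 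Worse, that route is structurally unable to succeed as described, because all the information localized at the maximum point $\theta_{*}$ is the single relation $|\eta_{\max}^{2}+\omega(\theta_{*})^{2}-16|J|^{2}|\leq 4|J|\delta\,\eta_{\max}$ (this is (\ref{T31})), and for unknown $\omega(\theta_{*})$ this yields only the upper bound $\eta_{\max}\leq 4|J|\lambda$: when $\eta_{\max}<4|J|$ a nonzero $\omega(\theta_{*})$ pushes $\omega^{2}+\eta^{2}$ back toward $16|J|^{2}$ and destroys any lower bound. The lower bound must come from a \emph{different} point, namely one where $\omega$ vanishes --- such a point exists because $\omega(0)=-4J$ and $\omega(\pi)=4J$ --- and at that crossing (\ref{T31}) forces $\eta\geq 4|J|\lambda^{-1}$, hence $\eta_{\max}\geq 4|J|\lambda^{-1}$. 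Your single-inequality packaging hides this two-point mechanism, and your variational program would have to rediscover it.

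For comparison, the paper's proof is essentially immediate from Theorem \ref{Thmw}(\emph{i}\,) (itself obtained from Lemma \ref{ML} with $\theta_{1}=\theta_{2}$, which is the ``one-line test-function estimate'' refined by the algebra in (\ref{FF2}), not a new variational problem): inequality (\ref{T31}) traps the whole curve $(\omega,\eta)(\theta)$ between the two circular arcs of Figure \ref{Fig-CAGG}; the outer arc gives $R\leq 2\pi\sqrt{(\delta+\sqrt{\delta^{2}+4})/2}\cdot\sqrt{4|J|}$ directly, and the inner arc evaluated at the $\eta$-axis crossing gives the lower bound. If you want to salvage your refined inequality $e^{(\mathcal{M}-8)/8}\geq \eta_{\max}/4+4|J|^{2}/\eta_{\max}$, the honest proof is again this two-point argument applied with $\delta$ replaced by $2\sqrt{(e^{(\mathcal{M}-8)/8}/2|J|)^{2}-1}$, not a sup-constrained minimization.
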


\vs
\n These two bounds are sharp, namely they coincide, when $\delta=0$, in $R/2\pi=2\sqrt{|J|}$ which is the value for the extreme Kerr horizon. This is not a coincidence as we will see below that the whole geometry (for a sequence of horizons) converges to that of the extreme Kerr as $\delta\rightarrow 0$. 

Our second theorem displays fundamental relations between the main global geometric parameters $A,J,L$ and $R$ of axisymmetric stable horizons.

\begin{Theorem}\label{ThmRL2}
Let $H$ be a stable axisymmetric horizon of area $A$ and angular momentum $J\neq0$. Then the length $R$ of great circle and the length $L$ of the meridian obey the relations 
\begin{gather}
 \label{Rest2} \frac{(4\pi)^{2} |J|}{\sqrt{4\pi A}}\leq R\leq \sqrt{4\pi A},\\
\label{Rest4} 2\sqrt{|J|} \leq \sqrt{\frac{A}{2\pi}}\leq L,\\   
\label{Rest5} \frac{A}{L^{2}}\leq \frac{R}{L}\leq2\sqrt{2}\pi.
\end{gather}
Moreover there is $D(\delta)>0$ such that
\be\label{Rest3}
D(\delta)\leq \frac{A}{L^{2}}.
\ee
\end{Theorem}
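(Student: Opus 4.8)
The first step is to set up a dictionary between the global quantities $R,L$ and the data $(\omega,\sigma)$. In the areal coordinates \eqref{METRIC} the orbit $\{\theta=\mathrm{const}\}$ has length $r(\theta)=2\pi e^{\sigma(\theta)/2}\sin\theta=2\pi\sqrt{\eta(\theta)}$, so $\left(\frac{R}{2\pi}\right)^2=\max_{\theta}\eta$, while the meridian has length $L=\frac{A}{4\pi}\int_0^\pi e^{-\sigma/2}\,d\theta$. A one line computation gives $r\,ds=\frac{A}{2}\sin\theta\,d\theta$ along the meridian, hence the identity $A=\int_0^L r(s)\,ds$ and, since $r\le R$, the elementary bound $A\le RL$. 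Several parts of the theorem are now immediate. Inequality \eqref{Rest2} is merely a rewriting of Theorem \ref{ThmRL}: putting $x=A/8\pi|J|$ one has $\delta=2\sqrt{x^2-1}$, so $\tfrac12(\delta+\sqrt{\delta^2+4})=x+\sqrt{x^2-1}$ and $2/(\delta+\sqrt{\delta^2+4})=x-\sqrt{x^2-1}$; using $x-\sqrt{x^2-1}\ge\frac1{2x}$ and $x+\sqrt{x^2-1}\le 2x$, the bounds of Theorem \ref{ThmRL} collapse to $\frac{16\pi|J|^2}{A}\le(R/2\pi)^2\le\frac{A}{\pi}$, which is exactly \eqref{Rest2}. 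The left inequality of \eqref{Rest4} is the universal inequality $8\pi|J|\le A$, and the left inequality of \eqref{Rest5} is nothing but $A\le RL$.

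The remaining content is a two-sided control of $L$, equivalently of $\int_0^\pi e^{-\sigma/2}\,d\theta$. Note first that once the lower bound $L\ge\sqrt{A/2\pi}$ of \eqref{Rest4} is available, the right inequality of \eqref{Rest5} follows at once by combining it with $R\le\sqrt{4\pi A}$, since $R/L\le\sqrt{4\pi A}/\sqrt{A/2\pi}=2\sqrt2\pi$. Thus it remains to bound $L$ from below by $\sqrt{A/2\pi}$ (for \eqref{Rest4}) and from above in a $\delta$-dependent way (for \eqref{Rest3}).

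For the lower bound, the elementary estimate $A\le RL$ only gives $L\ge A/R\ge\sqrt{A/4\pi}$, short of the claim by a factor $\sqrt2$; this loss is genuine, for a thin ``flaring'' neck of everywhere negative curvature collapses $L$ at fixed area, so stability must enter. The plan is to use the functional $\mathcal{M}(\omega,\sigma)$ of \eqref{II}: by \eqref{I} one has $\int_0^\pi(\sigma'^2+4\sigma+\omega'^2/\eta^2)\sin\theta\,d\theta=\mathcal{M}\le 8\ln(A/4\pi)+8$, whence $\int_0^\pi\sigma\sin\theta\,d\theta\le 2\ln(A/4\pi)+2$. Applying Jensen's inequality to the convex function $\sigma\mapsto e^{-\sigma/2}$ with the probability measure $\tfrac12\sin\theta\,d\theta$, and then discarding $\sin\theta\le1$, gives $\int_0^\pi e^{-\sigma/2}\,d\theta\ge\int_0^\pi e^{-\sigma/2}\sin\theta\,d\theta\ge 2e^{-1/2}\sqrt{4\pi/A}$, i.e. $L\ge e^{-1/2}\sqrt{A/\pi}$. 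This already reaches $0.60\sqrt{A/\pi}$, and the remaining gap to $\sqrt{A/2\pi}=0.70\sqrt{A/\pi}$ is exactly the slack created by dropping the nonnegative terms $\sigma'^2$ and $\omega'^2/\eta^2$ in $\mathcal{M}$. Retaining the angular momentum term $\int\omega'^2/\eta^2\,\sin\theta\,d\theta$, which is nonzero precisely because $J\neq0$ and is controlled through \eqref{III} and $\omega(\pi)-\omega(0)=8J$, should close it. I expect this sharpening, and in particular the clean cancellation producing the $J$-independent constant $\sqrt{A/2\pi}$, to be the most delicate point of the proof.

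For the upper bound, write $L=\frac{A}{4\pi}\int_0^\pi\frac{\sin\theta}{\sqrt{\eta}}\,d\theta$, so that an upper bound on $L$ is equivalent to a pointwise lower bound on $\eta$. Away from the poles this is furnished by rotational thickening: the stability inequality \eqref{FINE} with $J\neq0$ forces the central orbits to be long, giving $\eta\ge c(\delta)|J|$ there (this is the mechanism behind Theorem \ref{T5}), while near the poles regularity forces $\eta\sim(A/4\pi)\theta^2$, so that $\sin\theta/\sqrt\eta$ stays bounded and the integral converges. Patching a single pointwise lower bound $\sigma\ge\sigma_E-\mathrm{const}(\delta)$ valid on all of $[0,\pi]$ out of the bulk thickening estimate and the pole behaviour then bounds $\int_0^\pi\sin\theta/\sqrt\eta\,d\theta$ and yields an explicit $D(\delta)>0$ in \eqref{Rest3}. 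This uniform lower bound on $\sigma$, essentially the lower half of the pointwise comparison with the extreme Kerr data, is the main obstacle in this part.
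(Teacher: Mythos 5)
Your reductions are sound and coincide with the paper's: (\ref{Rest2}) follows from Theorem \ref{ThmRL} exactly as you say, the left parts of (\ref{Rest4}) and (\ref{Rest5}) are $A\geq 8\pi|J|$ and $A\leq RL$, and the right part of (\ref{Rest5}) follows from $R\leq\sqrt{4\pi A}$ once $L\geq\sqrt{A/2\pi}$ is known. But the two substantive claims, $L\geq\sqrt{A/2\pi}$ and (\ref{Rest3}), are left with genuine gaps. For the first, your Jensen argument correctly yields $L\geq e^{-1/2}\sqrt{A/\pi}$, short of the claim by the factor $\sqrt{e/2}\approx 1.17$, and the proposed repair (retaining $\int_0^\pi \omega'^2\eta^{-2}\sin\theta\,d\theta$) is not carried out: to close the gap along this route you would need that integral to be at least $4(1-\ln 2)\approx 1.23$ uniformly over all stable data, a nontrivial claim needing its own proof (a Cauchy--Schwarz against $\omega(\pi)-\omega(0)=8J$ produces constants far too weak). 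The paper does not go through $\mathcal{M}$ here at all: it takes the trial function equal to $1-l/l_2$ in arc-length from the north pole, computes $\int_H(|\nabla\alpha|^2+\kappa\alpha^2)\,dA=2\pi-a_2/l_2^2$ using $\kappa=-r''/r$, Gauss--Bonnet and integration by parts (this is (\ref{LTF}) with $l_1=0$), and reads off $a(l)\leq 2\pi l^2$ from stability; setting $l=L$ gives $A\leq 2\pi L^2$ exactly. This is the left inequality (\ref{(a)}) of Theorem \ref{T5}, which the paper simply invokes and which you should prove or cite.

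For (\ref{Rest3}) your sketch is circular as stated. The uniform lower bound $\sigma\geq\sigma_E-C(\delta)$ is half of Theorem \ref{Thmpoint}, and the paper proves Theorem \ref{Thmpoint} \emph{using} (\ref{Rest3}); likewise the bulk thickening estimate $\eta\geq c(\delta)|J|$ on the central third is derived from (\ref{(b)}) of Theorem \ref{T5} only after $D\geq D(\delta)$, i.e.\ (\ref{Rest3}) itself, is available. Nor does Theorem \ref{Thmw} (i) alone supply a pointwise lower bound on $\eta$ away from the poles: the admissible annulus between the two circles touches the $\omega$-axis at $(\pm 4|J|,0)$, so (i) does not exclude $\eta(\pi/2)$ being arbitrarily small with $\omega(\pi/2)$ close to $\pm 4|J|$. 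The paper's actual argument needs the parametrization control of Theorem \ref{Thmw} (ii): Lemma \ref{L2} uses it to locate two orbits $C_1,C_2$ with $a_{12}\geq\tilde a(A)$ and $l_{12}\leq\tilde l(A)$, and Lemma \ref{L1} (another trial-function computation, integrated as a differential inequality for $a(\bar l)$) converts this into $L\leq\max\{5l_{12},\,4l_{12}(A/a_{12})^4\}$. An argument of this type, rather than a pointwise bound on $\sigma$, is what is missing from your proposal.
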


\vs
\n The bounds (\ref{Rest2}) are deduced from (\ref{Rest1}) but are non-sharp at $\delta=0$. The bound $R/2L\leq \sqrt{2} \pi$ expresses that black holes, regardless of the values of $A$ and $J$, cannot be arbitrarily oblate. Note that we would expect the extreme Kerr horizon to be the most flattened black hole, namely we would expect the ratio $R/2L$ to be bounded above by the value of the extreme Kerr horizon, i.e.
$R/2L\leq R_0/2L_0 \sim 0.52 \pi$.  Although non-sharp, the estimation $R/2L\leq \sqrt{2} \pi$ is reasonably good. On the other hand the bound $D(\delta)/2\leq R/2L$ shows that black holes of given $A$ and $J$ cannot be arbitrarily prolate. An expression for $D(\delta)$ can be given explicitly but we will not present it in this article as it is not particularly useful. The existence of $D(\delta)$ will be shown by contradiction. An interesting question is whether one could use the stability inequality (\ref{FINE}), with a suitably chosen probe 
function $\alpha$, to obtain a sharp upper bound on $R/2L$.

Our third theorem displays fundamental relations among the local measures $a,l,r$ of stable rotating holes. Given an axisymmetric orbit $C$, the magnitude $r(C)$ is its length, $l(C)$ is the distance to the north pole $N$ and $a(C)$ is the area of the region enclosed by $C$ and containing the north pole. In the statement below, the parameters $a,l$ and $r$ are defined from the north pole $N$ but of course the same relations hold when they are defined from the south pole $S$.

\begin{Theorem}\label{T5} Let $H$ be a stable axisymmetric horizon. Let $D=A/L^{2}$. Then the following relations hold
\begin{align}
& \label{(a)}\frac{1}{2\pi} a\leq l^{2}\leq \frac{32}{D} a,  \\
&\label{(b)} l \leq \frac{4}{D} r,\  \text{as long as}\  l\leq L/2,\\
&\label{(c)} r^{2}\leq (4\pi e^{4}) a,\ \text{as long as }\ a\leq A/4.
\end{align}
We can use then the inequality $D\geq D(\delta)$ from (\ref{Rest3}) of Theorem \ref{ThmRL2} and (\ref{(a)}), (\ref{(b)}) and (\ref{(c)}) to obtain
\be\label{DD}
a\leq c_{1} l^{2} \leq \frac{c_{2}}{D^{2}(\delta)} r^{2} \leq \frac{c_{3}}{D^{2}(\delta)} a
\ee
as long as $a/A\leq 1/128$ and for certain constants $c_{1},c_{2},c_{3}$. 
\end{Theorem}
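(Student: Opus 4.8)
The plan is to pass to geodesic-polar coordinates centred at the north pole $N$. Writing $l$ for arclength along a meridian (so $l$ is exactly the distance-to-$N$ function of the statement, $l\in[0,L]$), the metric \eqref{METRIC} becomes $dl^{2}+(r/2\pi)^{2}d\varphi^{2}$, where $r(l)=2\pi e^{\sigma/2}\sin\theta=2\pi\sqrt{\eta}$ is the length of the orbit at distance $l$. Three elementary identities drive everything. The coarea relation $a'(l)=r(l)$, so that $a(l)=\int_{0}^{l}r\,dl'$ with $a(L)=A$; the exact formula $a=\tfrac{A}{2}(1-\cos\theta)$, read off from the area element $\tfrac{A}{4\pi}\sin\theta\,d\theta\,d\varphi$; and, at the pole, $r(0)=0$, $r'(0)=2\pi$. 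Gauss--Bonnet on the cap bounded by the orbit at distance $l$ gives the third identity $r'(l)=2\pi-\int_{\mathrm{cap}(l)}\kappa\,dA$. In these terms the three assertions read: \eqref{(a)} is $a\le 2\pi l^{2}$ together with $l^{2}\le (32/D)a$; \eqref{(b)} is the lower bound $r\ge (D/4)\,l$ for $l\le L/2$; and \eqref{(c)} is $r^{2}\le 4\pi e^{4}a$ for $a\le A/4$.

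First I would dispatch the cheap implication: the upper bound $l^{2}\le (32/D)a$ in \eqref{(a)} is a pure corollary of \eqref{(b)}. Integrating $a'=r\ge (D/4)l$ gives $a(l)\ge (D/8)l^{2}$ for $l\le L/2$, hence $a(L/2)\ge A/32$; and for $l\ge L/2$ one has $a(l)\ge A/32\ge (D/32)l^{2}$ since $l\le L$. So once \eqref{(b)} is in hand, half of \eqref{(a)} is free, and it remains to prove the ``upper'' estimates $a\le 2\pi l^{2}$ and \eqref{(c)}, and the ``lower'' estimate \eqref{(b)}.

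For the upper estimates I would use the Gauss--Bonnet identity of the first paragraph. If every subcap $\mathrm{cap}(l')$, $l'\le l$, had nonnegative total curvature then $r'\le 2\pi$, whence $r(l')\le 2\pi l'$ and $a(l)=\int_0^l r\le\pi l^{2}$ outright. In general the horizon carries a zone of negative curvature near the poles, and the real content is that the resulting deficit is controlled: the energy bound $\mathcal{M}\le 8+8\ln(A/4\pi)$ extracted from \eqref{I}--\eqref{III} bounds $\int_H\kappa_-\,dA$ (equivalently the Dirichlet energy of $\sigma$), which is exactly what keeps $a\le 2\pi l^{2}$, the factor $2$ over the curvature-free bound absorbing the deficit. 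For \eqref{(c)} I would first reduce it, using $a=\tfrac{A}{2}(1-\cos\theta)$, $r^{2}=4\pi^{2}e^{\sigma}\sin^{2}\theta$ and $\sigma(0)=\ln(A/4\pi)$, to the pointwise statement $e^{\sigma(\theta)-\sigma(0)}(1+\cos\theta)\le 2e^{4}$ on the cap $\theta\le\pi/3$, and since $1+\cos\theta\le 2$ this amounts to the clean bound $\sigma(\theta)\le\sigma(0)+4$ there. This in turn I would obtain from the same energy bound by a weighted one-dimensional embedding; the constant $e^{4}$ is precisely the signature of passing from integral control of $\sigma'$ to an $L^{\infty}$ bound and exponentiating, and the restriction $a\le A/4$ is what keeps us on the polar cap where $\sigma$ has not yet grown (by rotational thickening) past $\sigma(0)+4$.

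The heart of the matter is \eqref{(b)}, the local thickening bound $r\ge (D/4)l$: it is false for a general axisymmetric sphere (a metric thin near $N$ but fat near $S$ violates it), so it must use stability essentially. I would prove it by feeding into the stability inequality \eqref{FINE} a test function $\alpha$ tailored to the distance coordinate $l$, namely a profile that grows with $l$ up to the orbit $C$ and is cut off beyond it, and then, as in the proof of the global bound \eqref{Rest1}, using Gauss--Bonnet to trade the curvature term $\int\kappa\alpha^{2}$ against the angular-momentum term on the right of \eqref{FINE}, the normalization by the meridian length $L$ being what produces the scale $D=A/L^{2}$. Choosing $\alpha$ so that these terms combine into an integrable differential inequality for $r(l)$, whose integration yields the linear lower bound, is the main obstacle; everything else is bookkeeping. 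Finally I would assemble \eqref{DD}: \eqref{(a)} gives $a\le 2\pi l^{2}$ (so $c_{1}=2\pi$); \eqref{(b)} with $D\ge D(\delta)$ from \eqref{Rest3} gives $l^{2}\le (16/D(\delta)^{2})r^{2}$, hence $c_{1}l^{2}\le (32\pi/D(\delta)^{2})r^{2}$ (so $c_{2}=32\pi$); and \eqref{(c)} gives $r^{2}\le 4\pi e^{4}a$, hence $c_{2}r^{2}/D(\delta)^{2}\le (128\pi^{2}e^{4}/D(\delta)^{2})a$ (so $c_{3}=128\pi^{2}e^{4}$). The hypothesis $a/A\le 1/128$ is exactly what makes both side conditions hold: it gives $a\le A/4$ for \eqref{(c)}, and via the already-proved $l^{2}\le (32/D)a=(32L^{2}/A)a$ it gives $l^{2}\le L^{2}/4$, i.e. $l\le L/2$, for \eqref{(b)}.
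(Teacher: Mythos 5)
Your overall architecture agrees with the paper's in several places: you correctly derive the right half of \eqref{(a)} from \eqref{(b)} by integrating $a'=r\geq (D/4)l$ and extending past $l=L/2$ by monotonicity; you correctly reduce \eqref{(c)} to the pointwise bound $\sigma(\theta)\leq\sigma(0)+4$ on the cap $\theta\leq\pi/3$; and your assembly of \eqref{DD}, including the observation that $a/A\leq 1/128$ forces both $a\leq A/4$ and $l\leq L/2$, matches the paper. But the two ``upper'' estimates are where your proposal breaks down. For $a\leq 2\pi l^{2}$ you invoke the energy bound $\mathcal{M}\leq 8+8\ln(A/4\pi)$ to control $\int_{H}\kappa_{-}\,dA$ and claim the factor $2$ over the curvature-free bound absorbs the deficit. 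This cannot produce the stated \emph{absolute} constant: any bound extracted from $\mathcal{M}$ depends on $A/|J|$ (i.e.\ on $\delta$), and the total negative curvature of a stable horizon is not bounded by a universal constant (it is large near extremality, as the paper itself notes). The paper gets $a\leq 2\pi l^{2}$ in one line from stability alone, with no curvature control: taking $\alpha$ equal to $1$ at the pole and decaying linearly to $0$ at distance $l$, Gauss--Bonnet plus two integrations by parts give the exact identity $\int_{H}(|\nabla\alpha|^{2}+\kappa\alpha^{2})\,dA=2\pi-a/l^{2}$ (the case $l_{1}=0$ of \eqref{LTF}), and non-negativity of the stability form finishes it. The same identity, applied to the tent function peaked at $l_{1}\leq L/2$ with linear ramps to the two poles, yields $2r_{1}(1/l_{1}+1/(L-l_{1}))\geq a_{1}/l_{1}^{2}+(A-a_{1})/(L-l_{1})^{2}$ and hence \eqref{(b)}; this is the step you flag as ``the main obstacle'' and leave open, and note that no trading against the angular-momentum term of \eqref{FINE} occurs --- that term is simply dropped as non-negative.

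The proof of \eqref{(c)} has the same problem in sharper form. Even granting an integral bound on $\sigma'$ from $\mathcal{M}$ (which is itself not immediate, since the $\int 4\sigma\sin\theta\,d\theta$ term in \eqref{II} has no sign), the weight $\sin\theta$ degenerates exactly at the poles, so no ``weighted one-dimensional embedding'' can convert it into the $L^{\infty}$ bound $\sigma\leq\sigma(0)+4$ on the polar cap; and in any case the resulting constant would again depend on $\delta$ rather than being the universal $e^{4}$. The paper instead runs a separate stability argument: it takes $\alpha=e^{-\sigma/2}$ truncated at two level angles $\theta_{1}<\theta_{2}\leq\pi/3$, computes the stability form explicitly via Gauss--Bonnet to obtain \eqref{HHH}, and then argues by contradiction with a first-crossing choice of $\theta_{1},\theta_{2}$ (the first angle where $\sigma=c+4$, and the last prior angle where $\sigma=c$), landing on the impossible $1+e^{-4}\geq 4\cos\theta_{2}\geq 2$. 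So the constant $4$ comes from stability and the geometry of the cap $\theta\leq\pi/3$, not from an energy estimate. In short: your reductions and bookkeeping are right, but all three core inequalities require direct applications of the stability inequality with specific test functions, and the substitute arguments you propose for $a\leq 2\pi l^{2}$ and for \eqref{(c)} would fail.
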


\vs
The theorem can be used to obtain varied information. For instance one can extract concrete bounds for the metric coefficient $e^{\sigma}$ around the poles as follows. First note that in (\ref{(c)}), the condition $a(\theta)/A\leq 1/4$ is equivalent to $\theta\leq \pi/3$. This is because $a(\theta)/A=(1-\cos\theta)/2$ and therefore $a/A\leq 1/4$ is equivalent to $\cos\theta\geq 1/2$. Thus, as $r=2\pi e^{\sigma/2}\sin\theta$ we get from (\ref{(c)}) and for $\theta\leq \pi/3$
\ben
4\pi^{2} e^{\sigma}\sin^{2}\theta\leq 2\pi e^{4} A (1-\cos\theta).
\een
But when $\theta\leq \pi/3$ we have $(1-\cos\theta)\leq (1-\cos\theta)(1+\cos\theta)=\sin^{2}\theta$ and therefore $e^{\sigma(\theta)}\leq Ae^{4}/2\pi$ for all $0\leq \theta\leq \pi/3$. By symmetry the same holds for $\theta\in [2\pi/3,\pi]$.

Another application that we commented in the introduction concerns the length of the axisymmetric circles whose distance to the north and the south poles is greater or equal than one third the distance between the poles, that is greater or equal than $L/3$. For any such circle we claimed that $r(C)\geq D(\delta) \sqrt{|J|}$, a relation which gave further support to the idea of ``thickening by rotation". With the help of Theorem \ref{T5} this is proved as follows. Assume, without loss of generality, that the distance from $C$ to the north pole is less or equal than the distance from $C$ to the south pole (i.e. in the notation of Theorem \ref{T5} assume $l\leq L/2$) and then use that $l\geq L/3$ in combination with (\ref{(b)}) and (\ref{Rest4}).

\vs
More general than Theorem \ref{T5}, our fourth theorem shows, as discussed in (B), that the two-metric of the horizon (and therefore its whole geometry) and the rotational potential are completely controlled in $C^{0}$ by $A$ and $J\neq 0$. It also shows that stable holes with $A/8\pi|J|$ close to one must be close to the extreme Kerr horizon.

\begin{Theorem}\label{Thmpoint}
There is $F(\delta)<\infty$ such that for any stable axisymmetric horizon with angular momentum $J\neq 0$ we have 
\be
\label{ET1}\big|\sigma(\theta)-\sigma_{E}(\theta)\big|\leq F(\delta)\qquad \text{and}\qquad \frac{\big|\omega(\theta)-\omega_{E}(\theta)\big|}{4|J|}\leq F(\delta)
\ee
for any $\theta\in [0,\pi]$.
Moreover, for any angle $0<\theta_{1}<\pi/2$ and $\epsilon>0$ there is $\bar{\delta}(\theta_{1},\epsilon)$ such that for 
any stable horizon with $\delta <\bar{\delta}$ we have
\be
\label{ET3}\max_{\theta\in [\theta_{1},\pi-\theta_{1}]}\bigg\{\ \big|\sigma(\theta)-\sigma_{E}(\theta)\big|\ \bigg\} \leq \epsilon\qquad \text{and} \qquad \max_{\theta\in [\theta_{1},\pi-\theta_{1}]}\bigg\{\ \frac{\big|\omega(\theta)-\omega_{E}(\theta)\big|}{4|J|}\ \bigg\}\ \leq \epsilon.
\ee

\end{Theorem}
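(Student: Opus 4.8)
The plan is to turn closeness in the \emph{value} of the mass functional $\mathcal{M}$ into pointwise closeness of the data $(\omega,\sigma)$ to the extreme Kerr data, exploiting that $(\omega_E,\sigma_E)$ is the unique minimizer of $\mathcal{M}$ and that $\mathcal{M}$ is a (renormalized) geodesically convex harmonic energy into the hyperbolic plane. Since both $\sigma-\sigma_E$ and $\omega/4|J|$ are scale invariant I first normalize $|J|$, so that $(\omega_E,\sigma_E)$ become fixed reference functions. Then I trap the functional: rewriting \eqref{I} as $\mathcal{M}\le 8+8\ln(A/4\pi)$ and using the sharp lower bound \eqref{III}, namely $\mathcal{M}\ge \mathcal{M}_E:=8+8\ln(2|J|)$, attained exactly by the extreme Kerr, gives the \emph{deficit} bound
\[
0\le \mathcal{M}(\omega,\sigma)-\mathcal{M}_E\le 8\ln\frac{A}{8\pi|J|}=4\ln\Big(1+\tfrac{\delta^{2}}{4}\Big),
\]
which is finite for every $\delta$ and tends to $0$ as $\delta\to0$. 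Everything below is driven by this quantity.

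Next I read $(\omega,\sigma)$ as a map $u:[0,\pi]\to\mathbb{H}^{2}$, with $\mathbb{H}^2$ the upper half plane in the variables $(\omega,\eta)$, $\eta=e^{\sigma}\sin^{2}\theta$, carrying its hyperbolic metric. The integrand of $\mathcal{M}$ is the associated energy density minus the $\theta$-dependent, \emph{data-independent} terms produced by $\eta'/\eta=\sigma'+2\cot\theta$; for fixed $A$ and $J$ these (and the boundary terms fixed by $\sigma(0)=\sigma(\pi)=\ln(A/4\pi)$, $\omega(0)=-\omega(\pi)$) cancel in any difference, so $\mathcal{M}(\omega,\sigma)-\mathcal{M}_E$ is the finite relative hyperbolic energy of $u$ and $u_E$. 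Both maps send the poles to the same ideal boundary points and approach them at the same rate, so $\rho(\theta):=\mathrm{dist}_{\mathbb{H}^{2}}(u(\theta),u_E(\theta))$ is bounded with $\rho(0)=\rho(\pi)=\tfrac12\ln(1+\delta^{2}/4)$. As the target is nonpositively curved and $u_E$ is the harmonic minimizer, the geodesic homotopy from $u_E$ to $u$ makes the energy convex, and the standard second-variation estimate yields a bound of the form $\tfrac12\int_{0}^{\pi}(\rho')^{2}\sin\theta\,d\theta\le \mathcal{M}(\omega,\sigma)-\mathcal{M}_E$; hence $\int_0^\pi(\rho')^{2}\sin\theta\,d\theta\le 8\ln(1+\delta^{2}/4)$.

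For the uniform bound \eqref{ET1} I control $\rho$, equivalently $|\sigma-\sigma_E|$ and $|\omega-\omega_E|/4|J|$, pointwise by a function of $\delta$. The upper bound on $\sigma$ combines $r=2\pi e^{\sigma/2}\sin\theta\le R\le\sqrt{4\pi A}$ from \eqref{Rest2} on the equatorial band, where $\sin\theta$ is bounded below, with the estimate $e^{\sigma}\le Ae^{4}/2\pi$ on $[0,\pi/3]\cup[2\pi/3,\pi]$ derived after Theorem \ref{T5}; together with $\sigma_E\ge\ln(2|J|)$ this gives $\sigma-\sigma_E\le$ a function of $\delta$. The lower bound on $\sigma$ comes from \eqref{(b)}, i.e. $r\ge (D(\delta)/4)\,l$ for $l\le L/2$, with \eqref{Rest4}, keeping $e^{\sigma}$ away from zero off the poles; and $|\omega-\omega_E|$ is controlled because $|\omega_E|\le 4|J|$ and the total variation of $\omega$ is bounded through the energy term of $\mathcal{M}$ once $\eta$ is pinched between two functions of $\delta$. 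Converting these into hyperbolic distance produces the finite $F(\delta)$.

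Finally I prove the convergence \eqref{ET3} by contradiction: if some $\theta_{1},\epsilon$ and stable horizons with $\delta_{n}\to0$ had $\max_{[\theta_{1},\pi-\theta_{1}]}|\sigma_{n}-\sigma_E|>\epsilon$, then the deficits, hence $\int(\rho_{n}')^{2}\sin\theta\,d\theta$, tend to $0$; on $[\theta_{1},\pi-\theta_{1}]$, where $\sin\theta\ge\sin\theta_{1}$, Cauchy--Schwarz forces the oscillation of $\rho_{n}$ to vanish and, with \eqref{ET1}, yields (Arzel\`a--Ascoli) a subsequential $C^{0}$ limit there whose $\mathcal{M}$-deficit is zero; lower semicontinuity of $\mathcal{M}$ and rigidity of the minimizer then identify it with the extreme Kerr, so $\rho_\infty\equiv0$, a contradiction. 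The genuinely hard part is the \emph{anchoring} hidden in this last step: the bound $\int(\rho')^{2}\sin\theta\,d\theta\le C(\delta)$ together with $\rho(0)=\rho(\pi)\to0$ does not by itself forbid $\rho$ from rising to a nonzero plateau on $[\theta_{1},\pi-\theta_{1}]$ by climbing steeply through the polar caps, where the weight $\sin\theta$ is negligible. Ruling this out — ensuring the extracted limit is a genuine admissible (smooth, correctly normalized at the poles) configuration, so that the rigidity of the minimizer pins the middle-interval value of $\rho_\infty$ to $0$ rather than leaving it a free constant — is where the near-pole control of Theorem \ref{T5} and the regularity of horizon data at the poles must be used decisively. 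Establishing this no-concentration statement, and not the convexity estimate, is the true obstacle.
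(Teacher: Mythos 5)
Your treatment of the uniform bound \eqref{ET1} is essentially sound and, in its assembly, close in spirit to the paper's: the paper also splits $[0,\pi]$ into polar caps (where it uses the two-sided bound \eqref{DD} from Theorem \ref{T5}, equivalent to your combination of \eqref{(a)}, \eqref{(b)} and the estimate $e^{\sigma}\leq Ae^{4}/2\pi$) and a middle band (where it uses the hyperbolic-distance inequality \eqref{T4-2} rather than your $r\leq R$ argument); your bound on $\omega$ is obtainable even more directly from \eqref{T31}. The convexity/energy-gap estimate $\int_{0}^{\pi}(\rho')^{2}\sin\theta\,d\theta\lesssim \mathcal{M}-\mathcal{M}_{E}$ is a legitimate known technique, although you leave the renormalization of the infinite boundary energies unjustified.

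The proof of \eqref{ET3}, however, has a genuine gap, and it is exactly the one you name yourself and then leave open. ``Lower semicontinuity of $\mathcal{M}$ plus rigidity of the minimizer'' does not close the argument, because in this setting the rigidity you invoke is false at the level of $C^{0}$ limits: as the paper shows in the discussion following Theorem \ref{Thmw}, the conditions forced by $\delta\to 0$ single out not the extreme Kerr datum but the one-parameter family \eqref{E65} of reparametrized unit-semicircle geodesics indexed by $\bar{\theta}$. These limits all satisfy your distance constraints, have vanishing energy deficit in the renormalized sense, and satisfy $\rho(0)=\rho(\pi)=0$ as ideal boundary conditions; yet for $\bar{\theta}\neq\pi/2$ they realize precisely the ``nonzero plateau reached by climbing through the polar caps'' that you describe, because the normalization $\sigma(0)=\sigma(\pi)=\ln(A/4\pi)$ is \emph{not} preserved under the $C^{0}$ convergence of $(\omega_{i},\eta_{i})$ (one computes $e^{\bar{\sigma}(0)}=1/(2\tan^{2}\bar{\theta}/2)\neq e^{\bar{\sigma}(\pi)}=\tan^{2}(\bar{\theta}/2)/2$ unless $\bar{\theta}=\pi/2$). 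So the ``anchoring'' cannot be obtained from admissibility or pole regularity of the limit: the limit is genuinely inadmissible, and one must show that such a limit cannot be approached by \emph{stable} horizons. The paper's mechanism is entirely different from anything in your sketch: a limit with $\bar{\theta}\neq\pi/2$ carries a conical angle defect $\beta_{N}=2\pi(1-\tan^{-2}\bar{\theta}/2)$ that is \emph{negative} at one pole, and the logarithmic trial function $\alpha_{\epsilon}=\ln\ln(e\bar{L}/2l)$ cut off near that pole makes the stability form $\int(|\nabla\alpha|^{2}+\kappa\alpha^{2})\,dA$ negative for the approximating metrics $h^{i}$, contradicting stability. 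Without this (or an equivalent no-concentration argument), your proof of \eqref{ET3} does not go through; as written, you have reduced the theorem to its hardest step and stopped there.
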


\vs
The proof of Theorem \ref{Thmpoint} makes use of the following Theorem \ref{Thmw} which is interesting in itself. The Theorem \ref{Thmw} is stated in the variables $(\omega, \eta)$ instead of $(\omega,\sigma)$ and it will be also convenient to think the datum $(\omega,\eta)$ as a path in the hyperbolic plane $\mathbb{H}^{2}=\{(\omega,\eta), \eta>0\}$ provided with the hyperbolic distance
\be\label{HYPM}
d_{\mathbb{H}^{2}}((\omega_{1},\eta_{1}),(\omega_{2},\eta_{2}))={\rm Arch} \bigg[1+\frac{(\omega_{1}-\omega_{2})^{2}+(\eta_{1}-\eta_{2})^{2}}{2\eta_{1}\eta_{2}}\bigg].
\ee
The reason for this is that the functional ${\mathcal{M}}$, on which the Theorem \ref{Thmw} is based, is up to a boundary term the energy of the paths $(\omega,\eta)$ in the hyperbolic plane and such energy functional is easily analyzable. 

\begin{Theorem}\label{Thmw}
Let $H$ be a stable axisymmetric horizon with $J\neq0$. Then
\begin{enumerate}
\item[{\rm (}i\,{\rm )}] The data $(\omega, \eta)=(\omega(\theta),\eta(\theta))$ satisfies
 \be\label{T31}
\frac{(\eta^{2}+\omega^{2}-16|J|^ 2)^{2}}{\eta^{2}}\leq 16|J|^{2}\delta^{2}.
\ee
\item[{\rm (}ii\,{\rm )}] For any two angles $0<\theta_{1}\leq\theta_{2}<\pi$ denote $q_{1}=(\omega_{1},\eta_{1})=(\omega(\theta_{1}),\eta(\theta_{1}))$, $q_{2}=(\omega_{2},\eta_{2})=(\omega(\theta_{1}),\eta(\theta_{2}))$ and $d_{12}=d_{\mathbb{H}^{2}}(q_{1},q_{2})$. Then
\be\label{T4-2}
\bigg|d_{12}-2\ln\bigg[\frac{\tan \theta_{2}/2}{\tan \theta_{1}/2}\bigg]\bigg|^{2} \leq 12 \bigg( \ln \bigg[\frac{\tan \theta_{2}/2}{\tan \theta_{1}/2}\bigg]\bigg)
\ {\rm Arch}\ (1+\delta^{2}).
\ee

\end{enumerate}

\end{Theorem}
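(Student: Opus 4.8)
The plan is to read the data $(\omega,\eta)$ as a curve in the hyperbolic plane $\mathbb{H}^2=\{(\omega,\eta):\eta>0\}$ and to exploit that the extreme Kerr data trace a complete geodesic. First I would record the two facts that drive everything. Combining the stability consequence (\ref{I}) with the universal bound (\ref{III}) pins the functional $\mathcal{M}$ between $\mathcal{M}_E=8+8\ln(2|J|)$ (the extreme value) and $8+8\ln(A/4\pi)$, so the ``energy gap'' obeys $0\le \mathcal{M}-\mathcal{M}_E\le 8\ln(A/8\pi|J|)=4\ln(1+\delta^2/4)$, where I used $A/8\pi|J|=\sqrt{1+\delta^2/4}$. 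Moreover, reparametrising by $s=\ln\tan(\theta/2)\in(-\infty,\infty)$ turns $\mathcal{M}$, up to a fixed path-independent (divergent) constant and the boundary term $-8\ln(A/4\pi)$ coming from $\sigma(0)=\sigma(\pi)=\ln(A/4\pi)$, into the hyperbolic energy $E=\int(\dot\omega^2+\dot\eta^2)/\eta^2\,ds$ of the curve (dots denoting $d/ds$). A direct computation gives $\omega_E^2+\eta_E^2\equiv 16|J|^2$, so the extreme data lie on the geodesic semicircle $\Gamma=\{\omega^2+\eta^2=16|J|^2\}$, and along it the polar angle $\psi$ satisfies $\ln\tan(\psi/2)=-2s$; hence $\Gamma$ is swept at constant hyperbolic speed $2$ and $d_{\mathbb{H}^2}$ between the extreme points at $\theta_1,\theta_2$ equals $2\ln[\tan(\theta_2/2)/\tan(\theta_1/2)]=:2\Delta$, exactly the target value in (\ref{T4-2}).

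For part (i) I would first rewrite the claim geometrically: the $\sinh$ of the hyperbolic distance from $(\omega,\eta)$ to $\Gamma$ equals $(\omega^2+\eta^2-16|J|^2)/(8|J|\eta)$, so (\ref{T31}) is precisely the statement that the whole curve lies in the tube $\{\sinh d_{\mathbb{H}^2}(\cdot,\Gamma)\le \delta/2\}$. I would prove this confinement from the energy gap: since $\mathbb{H}^2$ is $\mathrm{CAT}(0)$ and $\Gamma$ is the energy-minimising geodesic joining the two ideal endpoints $(\mp 4J,0)$ to which both the curve and the extreme data converge as $\theta\to 0,\pi$, the excess $\mathcal{M}-\mathcal{M}_E$ dominates a convex, non-negative functional of the deviation off $\Gamma$, and the bound $4\ln(1+\delta^2/4)$ on this excess forces the tube bound with the stated constant. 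The hard part of the whole theorem sits here: converting a single integral bound on the gap into a pointwise bound on the deviation, while handling the boundary $\theta\to 0,\pi$ where $\eta\to 0$ and the energies themselves diverge. I expect to control this by a competitor/maximum-principle argument — projecting the curve onto $\Gamma$ can only lower the energy, and the amount lowered measures the maximal deviation — or equivalently by monotonicity of the relevant relative energy.

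For part (ii) I would work on the fixed subinterval $[\theta_1,\theta_2]$, where no boundary issues arise, in the same $s$-parametrisation, with $\Delta=s_2-s_1=\ln[\tan(\theta_2/2)/\tan(\theta_1/2)]$. For the upper bound I apply Cauchy--Schwarz in the form $d_{12}^2\le \Delta\,E_{12}$, where $E_{12}=\int_{s_1}^{s_2}(\dot\omega^2+\dot\eta^2)/\eta^2\,ds$, and then estimate the excess $E_{12}-4\Delta$ over the geodesic value $4\Delta$ by a multiple of $\Delta\,{\rm Arch}(1+\delta^2)$, using the tube confinement from part (i) (which bounds the transverse metric factor $\cosh^2$ by $1+\delta^2/4$) together with the energy gap. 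For the lower bound I use that nearest-point projection onto $\Gamma$ is $1$-Lipschitz in $\mathrm{CAT}(0)$: since $q_1,q_2$ lie in the tube, their projections are at geodesic distance at least $2\Delta-2\,{\rm Arch}(1+\delta^2)$, whence $d_{12}\ge 2\Delta-2\,{\rm Arch}(1+\delta^2)$.

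Finally I would combine the two regimes. When $\Delta$ is large compared with ${\rm Arch}(1+\delta^2)$ the triangle/projection estimates already give $|d_{12}-2\Delta|\le 2\,{\rm Arch}(1+\delta^2)\le\sqrt{12\,\Delta\,{\rm Arch}(1+\delta^2)}$; when $\Delta$ is small the Cauchy--Schwarz bound supplies the sharper $\sqrt{\Delta}$ scaling, while the lower bound is then vacuous because $2\Delta-\sqrt{12\,\Delta\,{\rm Arch}(1+\delta^2)}\le 0$. Squaring and collecting constants yields exactly $|d_{12}-2\Delta|^2\le 12\,\Delta\,{\rm Arch}(1+\delta^2)$. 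The only genuinely delicate point in part (ii) is the energy estimate $E_{12}\le 4\Delta+O(\Delta\,{\rm Arch}(1+\delta^2))$; the rest is routine $\mathrm{CAT}(0)$/Cauchy--Schwarz bookkeeping and constant-chasing.
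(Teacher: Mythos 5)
Your global picture is the right one --- the data as a curve in $\mathbb{H}^{2}$, the extreme--Kerr data as the unit--speed parametrization of the geodesic semicircle $\Gamma$ joining the ideal points $(\mp 4J,0)$, the renormalized hyperbolic energy hiding inside $\mathcal{M}$, and the reformulation of (\ref{T31}) as the tube bound $\sinh d_{\mathbb{H}^{2}}(\,\cdot\,,\Gamma)\leq \delta/2$ --- and this is also how the paper frames the problem. But the step you yourself flag as ``the hard part'' is exactly the step you do not supply, and the mechanism you propose for it does not work as stated. The paper's engine is Lemma \ref{ML}: an \emph{explicit, sharp} lower bound for the renormalized energy of any path joining the two ideal points and passing through two prescribed interior points $q_{1},q_{2}$ at prescribed parameter values, obtained by replacing the curve on $[0,\theta_{1}]$, $[\theta_{1},\theta_{2}]$, $[\theta_{2},\pi]$ by the three geodesic competitors (including the two rays from infinity, whose renormalized energy must be computed by hand precisely because the endpoints lie at infinity and the naive energy diverges). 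Part (i) then falls out by taking $\theta_{1}=\theta_{2}$: the competitor is the broken geodesic \emph{through the specific point} $q(\theta)$, and the product of the two boundary factors in (\ref{MFE}) is algebraically equal to $(\omega^{2}+\eta^{2}-1)^{2}/16\eta^{2}+1/4$. Your alternative --- ``projecting the curve onto $\Gamma$ can only lower the energy, and the amount lowered measures the maximal deviation'' --- is an assertion, not an argument: a $1$-Lipschitz projection gives no quantitative lower bound on the energy drop in terms of the \emph{pointwise} deviation at a given $\theta$, and converting an integrated excess into a pointwise bound is precisely what the broken-geodesic competitor achieves and what your sketch omits.

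The gap is more serious in part (ii). The tube confinement from (i) constrains where the curve lies but says nothing about \emph{how it is parametrized along} $\Gamma$; a curve could sit inside the tube and still linger near one ideal point for most of the parameter interval. Consequently your lower bound --- ``since $q_{1},q_{2}$ lie in the tube, their projections are at geodesic distance at least $2\Delta-2\,\mathrm{Arch}(1+\delta^{2})$'' --- is unjustified: the projections of $q_{1},q_{2}$ onto $\Gamma$ need not be anywhere near the extreme--Kerr points at parameters $\theta_{1},\theta_{2}$, so their separation need not be comparable to $2\Delta$. Likewise your upper bound needs $E_{12}\leq 4\Delta+O(\Delta\,\mathrm{Arch}(1+\delta^{2}))$, which requires lower-bounding the renormalized energies of the two complementary arcs by quantities close to their extreme--Kerr values; those minimal energies depend on $q_{1},q_{2}$ and are again exactly the boundary factors computed in Lemma \ref{ML}. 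In the paper, the control of the parametrization comes from the penalty factor $e^{d_{12}(\alpha_{12}-1)^{2}/2\alpha_{12}}$ in (\ref{MFE}), with $\alpha_{12}=d_{12}/2\Delta$ as in (\ref{DAL}): after showing the prefactor (II) is at least $1/4$ (using the radial projections and the explicit distance formula on the unit circle), the total energy budget forces $d_{12}(\alpha_{12}-1)^{2}/2\alpha_{12}\leq 2\,\mathrm{Arch}(1+\delta^{2})+2\ln(A/2\pi)$, which is (\ref{T4-2}). Nothing in your proposal produces this penalty term, and without it the relation between $d_{12}$ and $2\Delta$ --- the entire content of part (ii) --- is not established. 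Also, as a minor point, your regime-splitting at the end has the scalings backwards: Cauchy--Schwarz gives an error of order $\Delta\,\mathrm{Arch}$, not $\sqrt{\Delta\,\mathrm{Arch}}$, so it is the \emph{small}-$\Delta$ regime it could handle, and the large-$\Delta$ regime is precisely where your projection argument breaks down.
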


\vs
\n Inequality (\ref{T31}) says that the graph of the curve $(\omega,\eta)(\theta)$ in the hyperbolic plane 
$\mathbb{H}^{2}$ lies between the arcs of two circles of centers $(0,2|J|\delta)$ and $(0,-2|J|\delta)$ respectively and both of radius $2|J|\sqrt{4+\delta^{2}}$ (see Figure \ref{Fig-CAGG}). To see this simply observe that (\ref{T31}) implies
\ben
(\eta- 2|J|\delta)^{2}+\omega^{2}\leq 4|J|^{2}(4+\delta^{2})\qquad\text{and}\qquad 
4|J|^{2}(4+\delta^{2})\leq\, (\eta + 2|J|\delta)^{2} +\omega^{2}.
\een
\begin{figure}[h]
\centering
\includegraphics[width=9cm,height=4.5cm]{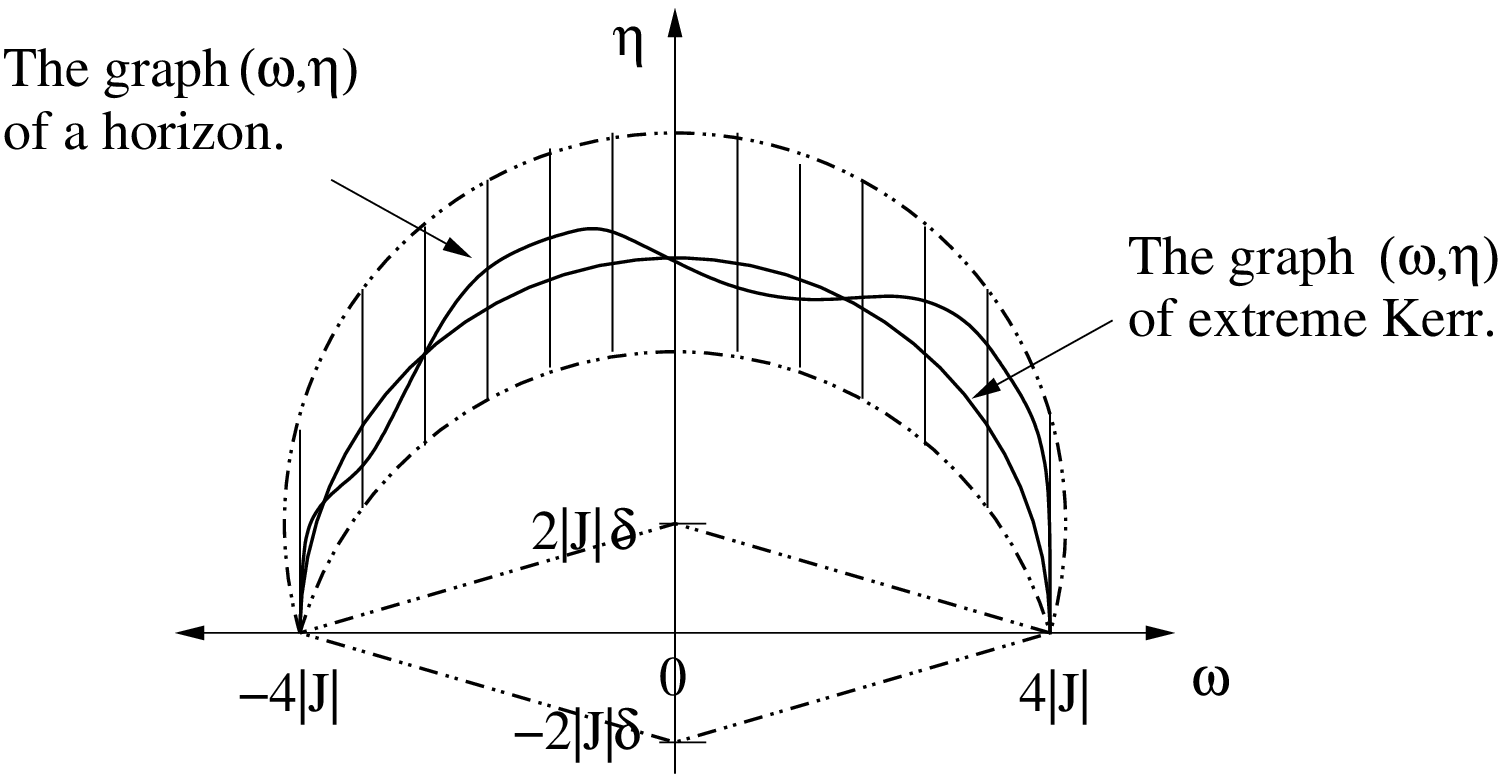}
\caption{}
\label{Fig-CAGG}
\end{figure} 
It is apparent from this that as $A\downarrow 8\pi |J|$ (with $|J|$ fixed), that is, as $\delta\downarrow 0$ and the centers of the circles approach each other, the graph of 
$(\omega,\eta)(\theta)$ gets closer and closer to the unit semicircle which is the graph of extreme Kerr with angular momentum $J$. However this does not imply that 
$(\omega,\eta)(\theta)$, as a parametrized curve, approaches $(\omega_{E},\eta_{E})(\theta)$ as is claimed in Theorem \ref{Thmpoint}. 
It is interesting to see however what occurs if one uses items ({\it i\,}) and ({\it ii\,}) 
in Theorem \ref{Thmw} when $\delta=0$. As we will see this does not imply exactly that $(\omega,\eta)$ is the data of the extreme-Kerr horizon unless we impose that $(\omega(\pi/2),\eta(\pi/2))=(0,1)$. Assume for simplicity of the calculation that $4|J|=1$ (and therefore $A=2\pi$). From (\ref{T31}) one gets 
\be\label{S1}
\eta^{2}+\omega^{2}=1.
\ee
Denote by $\bar{\theta}$ an angle for which $\omega=0$. Because of (\ref{S1}), we also have at this angle $\eta=1$. Using (\ref{T4-2}) with $\theta=\theta_{1}$ and $\bar{\theta}=\theta_{2}$ we obtain
\ben
2\ln \frac{\tan \frac{\theta_{1}}{2}}{\tan\frac{\theta_{0}}{2}}={\rm Arch}\bigg[ 1+\frac{\omega^{2}+(\eta-1)^{2}}{2\eta}\bigg]={\rm Arch} \frac{1}{\eta}
\een
where to obtain the second inequality we have used (\ref{S1}). One can then solve for $\eta$ and once done that use (\ref{S1}) to solve for $\omega$. The result is
\begin{align}
\eta=\frac{2 (\tan^{2} \frac{\theta}{2})\big/(\tan^{2} \frac{\bar{\theta}}{2})}{1+(\tan^{4} \frac{\theta}{2})\big/(\tan^{4} \frac{\bar{\theta}}{2})
},\qquad\omega = \frac{-1+(\tan^{4} \frac{\theta}{2})\big/(\tan^{4} \frac{\bar{\theta}}{2})}{1+(\tan^{4} \frac{\theta}{2})\big/(\tan^{4} \frac{\bar{\theta}}{2})}.
\end{align}
This reduces to the extreme Kerr horizon geometry only when $\bar{\theta}=\pi/2$.

\subsection{Applications.}

\subsubsection{The Hoop conjecture and entropy bounds.}

The following proposition,which is commented below, makes contact with Thorne's Hoop conjecture. 
\begin{Proposition}\label{CORHO} Let $S$ be a stable, axisymmetric, outermost minimal  surface on a maximal axisymmetric and asymptotically flat initial data possibly with matter satisfying the strong energy condition. Then, the length $L$ of the meridian of $S$ and the length $R$ of the great circle satisfy 
\begin{gather}\label{newHoop}
\frac{2\pi |J|}{M}\leq R \leq 8\pi M,\\
\label{EntropyB} |J|^{2}+\bigg(\frac{A}{8\pi}\bigg)^{2}\leq L^{2}M^{2}
\end{gather}
where $M$ is the ADM-mass.
\end{Proposition}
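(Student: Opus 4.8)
The plan is to combine the purely geometric bounds of Theorem~\ref{ThmRL2}, which rest only on the stability inequality \eqref{FINE}, with two external inputs that bring in the ADM mass: the area--angular-momentum inequality $8\pi|J|\le A$ and the Riemannian Penrose inequality $A\le 16\pi M^{2}$. First I would check that the hypotheses are exactly what make both inputs available. Since $S$ is a stable axisymmetric minimal surface on a maximal slice, the minimal-surface second-variation inequality together with the angular-momentum contribution yields an inequality of the form \eqref{INSTI1}, so that \eqref{FINE} holds for $S$ and every conclusion of Theorems~\ref{ThmRL} and \ref{ThmRL2} transfers verbatim to $S$. Moreover, on a maximal slice the Hamiltonian constraint reads $R_{g}=|K|^{2}+16\pi\rho$, so the energy condition (which makes the energy density $\rho$ nonnegative) forces the induced metric $g$ to have nonnegative scalar curvature; as $S$ is an \emph{outermost} minimal surface, the Riemannian Penrose inequality of Huisken--Ilmanen/Bray applies to the initial-data manifold $(\Sigma,g)$ and gives $A\le 16\pi M^{2}$, with $M$ the ADM mass read off from the asymptotics of $g$.

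With these two inputs the proof of \eqref{newHoop} is immediate. For the upper bound, the right inequality in \eqref{Rest2} gives $R\le \sqrt{4\pi A}$, and substituting $A\le 16\pi M^{2}$ yields $R\le \sqrt{64\pi^{2}M^{2}}=8\pi M$. For the lower bound, the left inequality in \eqref{Rest2} gives $R\ge (4\pi)^{2}|J|/\sqrt{4\pi A}$; since $\sqrt{4\pi A}\le 8\pi M$ one may enlarge the denominator to obtain $R\ge 16\pi^{2}|J|/(8\pi M)=2\pi|J|/M$.

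For \eqref{EntropyB} I would bound the two sides separately. From \eqref{Rest4} one has $L^{2}\ge A/2\pi$, and combining this with Penrose in the form $M^{2}\ge A/16\pi$ gives $L^{2}M^{2}\ge (A/2\pi)(A/16\pi)=A^{2}/32\pi^{2}$. On the other hand the area--angular-momentum inequality $|J|\le A/8\pi$ gives $|J|^{2}+(A/8\pi)^{2}\le 2(A/8\pi)^{2}=A^{2}/32\pi^{2}$. Chaining the two estimates yields $|J|^{2}+(A/8\pi)^{2}\le A^{2}/32\pi^{2}\le L^{2}M^{2}$, which is \eqref{EntropyB} (in fact with a factor of $2$ to spare).

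The computations are routine; the one genuine step is the justification of the two inputs, and there the maximal-slice hypothesis is doing the essential work. It is what converts the energy condition into nonnegative scalar curvature of $g$ and thereby makes the \emph{Riemannian} Penrose inequality---rather than the still-conjectural spacetime Penrose inequality---available for the outermost minimal surface $S$. The only point I would check with care is that the stability inequality \eqref{FINE} indeed holds in the minimal-surface/maximal setting with the same Gaussian curvature $\kappa$ and angular-momentum density as in the MOTS case, so that Theorems~\ref{ThmRL} and \ref{ThmRL2} may be invoked unchanged; granting this, no further spacetime information is needed and the bounds depend, as claimed, only on $A$, $J$ and $M$.
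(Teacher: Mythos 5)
Your proposal is correct and follows essentially the same route as the paper: both bounds in \eqref{newHoop} come from inserting the Riemannian Penrose inequality $A\le 16\pi M^{2}$ into \eqref{Rest2}, and \eqref{EntropyB} from combining $A\le 2\pi L^{2}$ (i.e.\ \eqref{Rest4}), $8\pi|J|\le A$ and the Penrose inequality --- the paper merely organizes this last step by bounding $|J|^{2}$ and $(A/8\pi)^{2}$ separately by $L^{2}M^{2}/2$, whereas you bound their sum by the common intermediate quantity $A^{2}/32\pi^{2}$. The only slip is your parenthetical claim of ``a factor of $2$ to spare'': when $8\pi|J|=A$ your estimate $|J|^{2}+(A/8\pi)^{2}\le 2(A/8\pi)^{2}$ is an equality, so there is no uniform extra slack; this does not affect the validity of the proof.
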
  
\begin{proof}[\bf Proof] To obtain (\ref{newHoop}) use the Riemannian Penrose inequality \cite{MR1908823}
\be\label{RPI}
A\leq 16\pi M^{2}
\ee
in equation (\ref{Rest2}). 

To obtain (\ref{EntropyB}) first use $A\leq 2\pi L^{2}$ to get $A^{2}\leq 2\pi A L^{2}$ and then use (\ref{RPI}) on the r.h.s to arrive at 
\be\label{EBOUND1}
\bigg(\frac{A^{2}}{8\pi}\bigg)^{2} \leq \frac{M^{2} L^{2}}{2}.
\ee
Then, from (\ref{Rest4}) we have $|J|\leq L^{2}/4$ and therefore $|J|^{2}\leq |J| L^{2}/4$. Using $|J|\leq 2M^{2}$ (which comes from combining $|J|\leq A/8\pi$ and then (\ref{RPI})) on the r.h.s we arrive at
\be\label{EBOUND2}
|J|^{2}\leq \frac{L^{2}M^{2}}{2}.
\ee
Summing (\ref{EBOUND1}) and (\ref{EBOUND2}) we deduce (\ref{EntropyB}). 
\end{proof}

In \cite{Gibbons:2009xm}, Gibbons proposed that the Birkhoff invariant $\beta$ (see \cite{Gibbons:2009xm} for a definition of $\beta$) of an apparent horizon must verify $\beta\leq 4\pi M$. The aim of Gibbon's proposal was to materialize in a concrete statement Thorne's heuristic  Hoop conjecture.  Quite remarkably we come very close to proving it at least for outermost minimal spheres. Indeed, for an axisymmetric sphere we have always $\beta\leq R$ and therefore from (\ref{newHoop}) we get $\beta\leq 8\pi M$. Whether $8\pi$ instead of Gibbon's $4\pi$ is the right coefficient for $M$ is not known to us. If one expects the Penrose inequality to hold also for apparent horizons, then the argument before would work the same and one would obtain $R\leq 8\pi M$ as well.  
 
On the other hand the equation (\ref{EntropyB}) has a peculiar motivation. In \cite{Bekenstein:1980jp} Bekenstein suggested an upper bound for the entropy of black holes in terms of its ``mass'' $M$ and its ``radius'' $\mathfrak R$, that should hold to guarantee the validity of the generalized second law of Thermodynamics. Bekenstein's suggestion was later extended by Hod \cite{Hod:2000ju} to include angular momentum. According to them the entropy bound should read 
\be\label{EntropyB2}
|J|^{2}+\bigg(\frac{A}{8\pi}\bigg)^{2}\leq {\mathfrak R}^{2}M^{2}.
\ee 
Although the notion of ``radius" is left ambiguously defined, equation (\ref{EntropyB}) shows that (\ref{EntropyB2}) is exactly satisfied if we choose ${\mathfrak R}=L$, that is, the distance from the north to the south pole. Note that by (\ref{Rest1}) we have $4\pi L^{2}\geq A$, showing that $L$ ``qualifies" as a radius according to the point of view of \cite{Bekenstein:1980jp}. 
 
\subsubsection{Compactness of the family of stable rotating horizons.}
 
A remarkable consequence of the results presented in the previous section is that, in appropriate coordinate systems, the space of stable axisymmetric black holes of area $A$ and angular momentum $J\neq 0$ is precompact in the $C^{0}$ topology. This is another strong manifestation of the control that the area and the (non-zero) angular momentum exert on the whole geometry of stable horizons. 

The axisymmetric metric of a horizon can be written in the form 
\ben
h=dl^{2}+\eta(l) d\varphi^{2}
\een
where $l$ varies in $[0,L]$, $\varphi$ in $[0,2\pi]$ and $\eta$ is as before. Instead of the coordinate $l$ we take $x=l/L$. Then $h=L^{2}dx^{2} + \eta(x)d\varphi^{2}$ and $\eta(x):[0,1]\rightarrow \mathbb{R}$ with $\eta(x)=0$ iff $x=0$ or $x=1$. The compactness of the metrics of stable holes is expressed then as follows.
\begin{Theorem}\label{COMP} Let $\{H_{i}\}$ be a sequence of stable rotating horizons having constant area $A$ and angular momentum $J\neq 0$ and having metrics 
\ben
h_{i}=L_{i}^{2}dx^{2}+\eta_{i}(x)d\varphi^{2}.
\een  
Then, there is a subsequence for which the metrics converge in $C^{0}$ to a limit metric
\ben
\bar{h}=\bar{L}^{2} dx^{2} + \bar{\eta}(x)d\varphi^{2}.
\een 
\end{Theorem}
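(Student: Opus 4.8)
The plan is to reduce the statement to the Arzel\`a--Ascoli theorem applied to the coefficient functions $\eta_i\colon[0,1]\to\mathbb{R}$ together with Bolzano--Weierstrass applied to the numbers $L_i$; everything then hinges on producing three $i$-independent bounds: a two-sided bound on $L_i$, a uniform sup-bound on $\eta_i$, and a uniform modulus of continuity for $\eta_i$. The lengths are immediate. Since $A$ and $J$ are fixed along the sequence so is $\delta$, and then \eqref{Rest4} gives $L_i\geq\sqrt{A/2\pi}$ while \eqref{Rest3} gives $L_i^{2}\leq A/D(\delta)$. Thus $\{L_i\}$ lies in a fixed compact subinterval of $(0,\infty)$ and, after passing to a subsequence, $L_i\to\bar L>0$.

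Next I would extract uniform control on $\sigma_i$. By \eqref{ET1} one has $|\sigma_i-\sigma_E|\leq F(\delta)$ pointwise, and since $\sigma_E$ is bounded on $[0,\pi]$ this confines $\sigma_i$ to a fixed interval $[m_\sigma,M_\sigma]$. Two consequences follow. First, because the meridian arclength satisfies $dx/d\theta=\tfrac{A}{4\pi L_i}e^{-\sigma_i/2}$ with $L_i$ and $\sigma_i$ both two-sided bounded, the change of variables $\theta\mapsto x$ is uniformly bi-Lipschitz: there exist $0<c\leq C$, independent of $i$, with $c\leq dx/d\theta\leq C$. Hence a uniform modulus of continuity in $\theta$ transfers to one in $x$, and it suffices to prove equicontinuity of $\eta_i$ in the variable $\theta$ on $[0,\pi]$. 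Second, writing $\eta_i=e^{\sigma_i}\sin^{2}\theta$ yields at once the uniform sup-bound $\eta_i\leq e^{M_\sigma}$ and the quadratic vanishing $\eta_i(\theta)\leq e^{M_\sigma}\theta^{2}$ near $\theta=0$ (and symmetrically $\eta_i(\theta)\leq e^{M_\sigma}(\pi-\theta)^{2}$ near $\theta=\pi$), with constants independent of $i$; in particular $\eta_i(0)=\eta_i(\pi)=0$ uniformly.

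The core of the argument is equicontinuity of $\eta_i$ on compact subintervals of $(0,\pi)$, and for this I would use item (ii) of Theorem \ref{Thmw}. Putting $u=\ln[\tan(\theta_2/2)/\tan(\theta_1/2)]\geq0$, the estimate \eqref{T4-2} reads $|d_{12}-2u|\leq\sqrt{12\,\mathrm{Arch}(1+\delta^{2})}\,\sqrt{u}$, so $d_{12}\leq 2u+\sqrt{12\,\mathrm{Arch}(1+\delta^{2})}\,\sqrt{u}$. On a subinterval $[\theta_\ast,\pi-\theta_\ast]$ the function $\theta\mapsto\ln\tan(\theta/2)$ has derivative $1/\sin\theta\leq 1/\sin\theta_\ast$, hence $u\leq(\sin\theta_\ast)^{-1}|\theta_2-\theta_1|$ and $d_{12}$ is bounded by a constant depending only on $\delta$ and $\theta_\ast$ times $|\theta_2-\theta_1|^{1/2}$. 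To convert this hyperbolic estimate into one on $\eta$ I would use that vertical projection onto the $\eta$-axis is $1$-Lipschitz for $d_{\mathbb{H}^{2}}$, i.e.\ $|\ln\eta_i(\theta_1)-\ln\eta_i(\theta_2)|\leq d_{12}$; combined with $\eta_i\leq e^{M_\sigma}$ this gives $|\eta_i(\theta_1)-\eta_i(\theta_2)|\leq e^{M_\sigma}d_{12}$, hence a uniform $\tfrac12$-Hölder bound for $\eta_i$ on $[\theta_\ast,\pi-\theta_\ast]$.

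It remains to assemble these estimates into equicontinuity on all of $[0,\pi]$. The interior Hölder constant blows up as $\theta_\ast\to0$, so the neighbourhoods of the poles must be handled by the quadratic bound instead: given $\epsilon>0$, first choose $\theta_\ast$ so small that $e^{M_\sigma}(2\theta_\ast)^{2}<\epsilon/3$, which makes $\eta_i<\epsilon/3$ on $[0,2\theta_\ast]\cup[\pi-2\theta_\ast,\pi]$, and then choose the final $\rho<\theta_\ast$ from the $\tfrac12$-Hölder modulus on $[\theta_\ast,\pi-\theta_\ast]$. A short case analysis over the location of a pair $\theta_1,\theta_2$ with $|\theta_1-\theta_2|<\rho$ (both interior, both polar, or mixed, the $2\theta_\ast$-margin covering the mixed case) then gives $|\eta_i(\theta_1)-\eta_i(\theta_2)|<\epsilon$ for every $i$. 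Transferring through the bi-Lipschitz change of variables yields equicontinuity of $\{\eta_i\}$ on $[0,1]$; with the uniform sup-bound, Arzel\`a--Ascoli produces a subsequence along which $\eta_i\to\bar\eta$ in $C^{0}([0,1])$, and together with $L_i\to\bar L$ this gives $h_i\to\bar h=\bar L^{2}\,dx^{2}+\bar\eta(x)\,d\varphi^{2}$ in $C^{0}$. I expect this boundary analysis to be the main obstacle: Theorem \ref{Thmw}(ii) controls the datum only on interior compacta and degenerates at the poles, so the continuity of $\eta_i$ up to $\theta=0,\pi$ must be recovered separately from the explicit structure $\eta_i=e^{\sigma_i}\sin^{2}\theta$ and then patched to the interior estimate.
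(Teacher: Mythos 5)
Your proof is correct, and it shares the paper's overall architecture (uniform bounds on $L_i$ from Theorem \ref{ThmRL2}, uniform bounds on $\sigma_i$ from Theorem \ref{Thmpoint}, Ascoli--Arzel\`a for $\eta_i$, Bolzano--Weierstrass for $L_i$), but the key equicontinuity step is obtained by a genuinely different mechanism. The paper routes Theorem \ref{COMP} through Proposition \ref{PROC}: there, equicontinuity of $\omega_i$ and $\eta_i$ in $\theta$ is extracted from uniform $L^{2}$ bounds on the derivatives, $\int_0^\pi(\eta_i')^{2}\,d\theta\leq c(A,J)$, which come from the mass-functional bound $A\geq 4\pi e^{(\mathcal{M}-8)/8}$ of \eqref{I}--\eqref{II} together with $|\sigma_i|\leq c$; Cauchy--Schwarz then yields a global $\tfrac12$-H\"older modulus on all of $[0,\pi]$ in one stroke, with no separate treatment of the poles, and it controls $\omega_i$ as well. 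You instead derive an interior $\tfrac12$-H\"older modulus from the hyperbolic estimate \eqref{T4-2} of Theorem \ref{Thmw}(ii) via the $1$-Lipschitz projection $|\ln\eta_i(\theta_1)-\ln\eta_i(\theta_2)|\leq d_{12}$ (which is exactly the computation $d\geq{\rm Arch}\big[1+(\eta-\hat{\eta})^{2}/2\eta\hat{\eta}\big]$ already used in the proof of Theorem \ref{Thmpoint}), and you patch the poles with the quadratic vanishing $\eta_i\leq e^{M_\sigma}\sin^{2}\theta$; both steps are sound and the three-case analysis closes the argument. Your treatment of the coordinate change is in fact slightly tighter than the paper's: by applying Ascoli--Arzel\`a directly to the composites $x\mapsto\eta_i(\theta_i(x))$, using only the uniform two-sided bound on $dx/d\theta$, you avoid having to show that the reparametrizations themselves converge, a point the paper leaves implicit in the remark that one ``must change the coordinates from $\theta$ to $x$''. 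What the paper's $H^{1}$ route buys is uniformity up to the poles without extra work; what yours buys is independence from the functional $\mathcal{M}$, at the cost of the boundary patching you correctly single out as the delicate point.
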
  
\n The Theorem can be proved easily and directly from the proposition below. Note that the proposition also shows that the subsequence can be chosen in such a way that a limit for the rotational potential $\omega_{i}$ can also be extracted. Note too that it uses the coordinate $\theta$ rather than $x$. To prove Theorem \ref{COMP} one must change the coordinates from $\theta$ to $x$. The coordinates $(\theta,\varphi)$, where the metrics are expressed in the form $h_{i}=(A/4\pi) e^{-\sigma_{i}}d\theta +\eta_{i}\theta d\varphi^{2}$, are not appropriate because the sequence of coefficients $e^{-\sigma_{i}(\theta)}$ converges weakly but not necessarily in $C^{0}$ near the poles. The coordinates $(x,\varphi)$ reabsorb this problematic coefficient. 
  
\begin{Proposition}\label{PROC} Let $(\omega_{i}(\theta),\eta_{i}(\theta))$ be the data of a sequence of stable horizons $\{H_{i}\}$ of area $A$ and angular momentum $J\neq 0$. Then, there is a subsequence converging in $C^{0}$ to a datum $(\bar{\omega}(\theta),\bar{\eta}(\theta))$.

\end{Proposition}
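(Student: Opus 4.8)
The plan is to prove the proposition by an Arzel\`a--Ascoli argument applied separately to the families $\{\omega_i\}$ and $\{\eta_i\}$ on the closed interval $[0,\pi]$, and then to pass to a common subsequence. Since $A$ and $J\neq 0$ are the same for every $H_i$, the parameter $\delta$ is a fixed constant and the boundary data are pinned: $\omega_i(0)=-4J$, $\omega_i(\pi)=4J$ and $\eta_i(0)=\eta_i(\pi)=0$ for all $i$. It therefore suffices to produce, for each family, a uniform pointwise bound together with a uniform modulus of continuity on all of $[0,\pi]$. I expect the genuine difficulty to sit entirely at the poles, where the curves $(\omega_i,\eta_i)$ approach the ideal boundary $\{\eta=0\}$ and the hyperbolic geometry behind Theorem \ref{Thmw} degenerates.

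I would control the $\omega_i$ directly from the stability inequality (\ref{FINE}) with the constant trial function $\alpha\equiv 1$, avoiding the hyperbolic estimate altogether. Then $\nabla\alpha=0$, and because each $H_i$ is a topological sphere, Gauss--Bonnet gives $\int_{H}\kappa\,\sin\theta\,d\theta\,d\varphi=\tfrac{4\pi}{A}\int_{H}\kappa\,dA=\tfrac{16\pi^{2}}{A}$. A short computation in the areal coordinates, using $e^{\sigma}=\eta/\sin^{2}\theta$, reduces the right-hand side of (\ref{FINE}) to a fixed multiple of $\int_{0}^{\pi}\omega_i'^{2}/\sin^{3}\theta\,d\theta$, yielding the uniform estimate
\[
\int_{0}^{\pi}\frac{\omega_i'^{2}}{\sin^{3}\theta}\,d\theta\leq \frac{A^{3}}{8\pi^{3}}.
\]
By Cauchy--Schwarz this gives $|\omega_i(\theta'')-\omega_i(\theta')|\leq (A^{3}/8\pi^{3})^{1/2}\big(\int_{\theta'}^{\theta''}\sin^{3}\theta\,d\theta\big)^{1/2}$, a modulus of continuity valid uniformly in $i$ and uniformly in $\theta',\theta''\in[0,\pi]$, the poles included. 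Combined with the uniform bound $|\omega_i|\leq 2|J|\sqrt{4+\delta^{2}}$ from item (i) of Theorem \ref{Thmw}, Arzel\`a--Ascoli produces a subsequence with $\omega_i\to\bar\omega$ uniformly on $[0,\pi]$.

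For the $\eta_i$ I would take the uniform bound $0\leq\eta_i\leq 2|J|(\delta+\sqrt{4+\delta^{2}})=:M$ again from item (i) of Theorem \ref{Thmw}, and extract equicontinuity on compact subsets of $(0,\pi)$ from item (ii). Indeed, using the explicit formula (\ref{HYPM}), the bound (\ref{T4-2}) controls $d_{12}=d_{\mathbb{H}^{2}}(q(\theta_1),q(\theta_2))$ by a quantity tending to $0$ as $\theta_2\to\theta_1$ (uniformly in $i$, since $\delta$ is fixed), and the identity $\cosh d_{12}-1=\big[(\omega_1-\omega_2)^{2}+(\eta_1-\eta_2)^{2}\big]/(2\eta_1\eta_2)$ together with $\eta_i\leq M$ gives $(\eta_i(\theta_2)-\eta_i(\theta_1))^{2}\leq 2M^{2}(\cosh d_{12}-1)$. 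Hence $\{\eta_i\}$ is equicontinuous on every $[\theta_1,\pi-\theta_1]$.

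The hard part is the poles, where (\ref{T4-2}) is useless because $d_{\mathbb{H}^{2}}$ blows up; this is the main obstacle and it must be handled by an estimate independent of the hyperbolic distance. I would use the bound $e^{\sigma_i(\theta)}\leq Ae^{4}/2\pi$ for $\theta\in[0,\pi/3]\cup[2\pi/3,\pi]$ established after Theorem \ref{T5}, which depends only on $A$ and so holds uniformly in $i$; it gives $\eta_i(\theta)=e^{\sigma_i}\sin^{2}\theta\leq (Ae^{4}/2\pi)\sin^{2}\theta$, so the $\eta_i$ are uniformly small near the poles. A diagonal argument over $\theta_1=1/k$ then yields a subsequence with $\eta_i\to\bar\eta$ uniformly on compact subsets of $(0,\pi)$, and the uniform squeeze $0\le\eta_i(\theta)\le (Ae^{4}/2\pi)\sin^{2}\theta$ upgrades this to uniform convergence on all of $[0,\pi]$ while forcing $\bar\eta(0)=\bar\eta(\pi)=0$. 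Relabelling so that one subsequence serves both families gives $(\omega_i,\eta_i)\to(\bar\omega,\bar\eta)$ in $C^{0}$, with the limit continuous and carrying the correct boundary values, which is the assertion. The same pole mechanism could alternatively be used to control $\omega_i$ near the poles via Cauchy--Schwarz and the $\eta_i\le(Ae^{4}/2\pi)\sin^{2}\theta$ bound, but the $\alpha\equiv 1$ argument already delivers that for free.
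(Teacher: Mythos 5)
Your overall skeleton---Arzel\`a--Ascoli applied to $\{\omega_i\}$ and $\{\eta_i\}$ on $[0,\pi]$, with sup-bounds coming from item ({\it i}\,) of Theorem \ref{Thmw} and equicontinuity coming from integral estimates on derivatives---is the same as the paper's, but the execution differs in both halves. For $\eta_i$ the paper differentiates $\eta_i=e^{\sigma_i}\sin^2\theta$ and bounds $\int_0^\pi \eta_i'^2\,d\theta$ by combining the uniform bound $|\sigma_i|\leq c(A,J)$ (from Theorem \ref{Thmpoint}) with $\int_0^\pi\sigma_i'^2\sin\theta\,d\theta\leq c_2(A,J)$, the latter extracted from $A\geq 4\pi e^{(\mathcal{M}-8)/8}$ and the definition (\ref{II}); this delivers a single H\"older-$1/2$ modulus on all of $[0,\pi]$ in one stroke. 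Your route---equicontinuity on compacts of $(0,\pi)$ via (\ref{T4-2}) and the explicit distance (\ref{HYPM}), the uniform squeeze $\eta_i\leq (Ae^4/2\pi)\sin^2\theta$ near the poles, and a diagonal argument---is longer but correct, and it has the genuine merit of not requiring any $L^2$ control on $\sigma_i'$ and of isolating the pole behaviour, which is indeed where the hyperbolic estimate degenerates.

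The one step that does not survive scrutiny as written is the $\alpha\equiv 1$ computation. In (\ref{FINE}) the symbol $|\nabla\omega|^2$ must be read as $(\partial_\theta\omega)^2$, not as the $h$-norm squared of $d\omega$: this is forced by consistency with the derivation of (\ref{I})--(\ref{II}), since only with $|\nabla\omega|^2=\omega'^2$ does the choice $\alpha=e^{-\sigma/2}$ turn the right-hand side of (\ref{FINE}) into a multiple of $\int_0^\pi \omega'^2\eta^{-2}\sin\theta\,d\theta$, which is the term appearing in $\mathcal{M}$. Consequently the factor $e^{\sigma}$ does \emph{not} cancel, and $\alpha\equiv 1$ together with Gauss--Bonnet yields
\ben
\int_0^\pi \frac{\omega_i'^2}{\eta_i\sin\theta}\,d\theta=\int_0^\pi\frac{\omega_i'^2}{e^{\sigma_i}\sin^3\theta}\,d\theta\ \leq\ \frac{2A}{\pi},
\een
rather than your $\int_0^\pi\omega_i'^2\sin^{-3}\theta\,d\theta\leq A^3/8\pi^3$. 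Cauchy--Schwarz applied to the correct bound gives $|\omega_i(\theta'')-\omega_i(\theta')|\leq (2A/\pi)^{1/2}\big(\int_{\theta'}^{\theta''}e^{\sigma_i}\sin^3\theta\,d\theta\big)^{1/2}$, so you still need a uniform upper bound on $e^{\sigma_i}$---precisely the ingredient from Theorem \ref{Thmpoint} that the paper leans on and that you announce you are avoiding. The gap is reparable with material you already cite: $e^{\sigma_i}\leq Ae^4/2\pi$ on $[0,\pi/3]\cup[2\pi/3,\pi]$ from the discussion after Theorem \ref{T5}, while on $[\pi/3,2\pi/3]$ item ({\it i}\,) of Theorem \ref{Thmw} gives $e^{\sigma_i}=\eta_i/\sin^2\theta\leq \tfrac{4}{3}\cdot 2|J|(\delta+\sqrt{4+\delta^2})$. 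With that supplement inserted, your argument closes and the limit datum has the stated boundary values.
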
 
  
\begin{proof}[\bf Proof] By Ascoli-Arzel\`a it is enough to show that the sequences $\{\omega_{i}\}$ and $\{\eta_{i}\}$ are uniformly bounded (i.e. $|\omega^{i}|\leq c(A,J)$ and $|\eta^{i}|\leq c_(A,J)$) and equicontinuous (i.e. for all $\epsilon>0$ there is $\delta>0$ such that for any $\theta_{1},\theta_{2}$ with $|\theta_{2}-\theta_{1}|\leq \delta$ we have $|\omega^{i}(\theta_{2})-\omega^{i}(\theta_{1})|\leq \epsilon$ and $|\eta^{i}(\theta_{2})-\eta^{i}(\theta_{1})|\leq \epsilon$). By Theorem \ref{Thmpoint} the sequences $\{\omega_{i}\}$ and $\{\sigma_{i}\}$ are uniformly bounded. Hence also is the sequence $\{\eta_{i}=e^{\sigma_{i}}\sin^{2}\theta\}$. 
We assume then that $|\omega_{i}|\leq c(A,J)$, $|\sigma_{i}|\leq c(A,J)$ and $|\eta_{i}|\leq c(A,J)$. 

We prove next that the sequences are equicontinuous. Observe that if a sequence of functions $\{f_{i}\}$ satifyies $\int_{0}^{\pi} (f{'}_{i})^{2}d\theta\leq c$ then it is equicontinuous as then we would have $|f_{i}(\theta_{2})-f_{i}(\theta_{1})|\leq c^{1/2}|\theta_{2}-\theta_{1}|^{1/2}$. We will show next that the sequences $\{f_{i}=\omega_{i}\}$ and $\{f_{i}=\eta_{i}\}$ have this property. This will finish the proof.
The bound $A\geq e^{(\mathcal{M}(\omega_{i},\eta_{i})-8)/8}$, the bound $|\sigma_{i}|\leq c(A,J)$ and the definition of $\mathcal{M}$ from (\ref{II}) imply
\ben
\int_{0}^{\pi}\frac{w_{i}{'}^{2}}{\eta_{i}^{2}}\sin^{2}\theta d\theta\leq c_{1}(A,J)\qquad \text{and}\qquad \int_{0}^{\pi} \sigma_{i}'^{2}\sin\theta d\theta\leq c_{2}(A,J)
\een
for certain $c_{1}(A,J)$ and $c_{2}(A,J)$. Then, using $|\sigma_{i}|\leq c$ we compute 
\begin{align*}
\int_{0}^{\pi}\omega_{i}^{2} d\theta \leq e^{2c}\int_{0}^{\pi} \frac{\omega_{i}^{2}}{e^{2c}\sin^{3}\theta} d\theta\leq e^{2c} \int_{0}^{\pi} \frac{\omega_{i}{'}^{2}}{\eta_{i}^{2}} \sin\theta d\theta \leq c_{1}e^{2c}.
\end{align*}
On the other hand the following computation proves that $\int_{0}^{\pi}\eta_{i}{'}^{2}d\theta\leq c_{3}(A,J)$:
\begin{align*}
\int_{0}^{\pi} \eta_{i}^{2}d\theta & = \int_{0}^{\pi} (\sigma_{i}'e^{\sigma}\sin\theta - e^{\sigma_{i}}\cos\theta)^{2}d\theta\leq 2\int_{0}^{\pi} (\sigma_{i}^{2} e^{2\sigma_{i}}\sin^{2}\theta + e^{2\sigma_{i}}\cos^{2}\theta)d\theta\\
& \leq 2e^{2c_{}}\int_{0}^{\pi} \sigma_{i}^{2}\sin\theta d\theta +2\pi e^{2c_{}}\leq 2e^{2c_{}}c_{2}+2\pi e^{2c_{}}:=c_{3}(A,J).\qedhere
\end{align*}
Observe that as $|\sigma_{i}|\leq c$ then for any $\theta\neq 0,\pi$ we have $e^{-c}\sin^{2}\theta\leq \eta^{i}(\theta)\leq e^{c}\sin^{2}\theta$ and therefore $\bar{\eta}(\theta)\neq 0$.
\end{proof} 
 
\section{Proofs of the main results.}\label{PMR}

The proof of the results does not follow the order in which they were stated. The order of proof is the following. First we prove Theorem \ref{Thmw} and then Theorem \ref{T5}. After that we prove the bound (\ref{Rest3}) in Theorem \ref{ThmRL2} which is necessary to prove Theorem \ref{Thmpoint}. Finally we give the proofs of Theorems \ref{ThmRL} and \ref{ThmRL2}. Several auxiliary lemmas and propositions are proved in between the main results.

Before we start let us note that when the space-time metric is scaled by $\lambda^{2}$ the following scalings take place
\begin{gather*}
\sigma\rightarrow \sigma + \ln \lambda,\qquad \omega \rightarrow \lambda^{2}\omega,\\
A\rightarrow \lambda^{2}A,\qquad |J|\rightarrow \lambda^{2}|J|,\qquad R\rightarrow \lambda R,\qquad L\rightarrow \lambda L, \\
r\rightarrow \lambda r,\qquad l\rightarrow \lambda l,\qquad a\rightarrow \lambda^{2} a.
\end{gather*}
One can easily see from these scalings that the statements to be proved are scale invariant. For this reason very often we will assume $|J|=1/4$ which is a scale that simplifies considerably the calculations. The assumption will be recalled when used.

\subsection{Proof of Theorem \ref{Thmw}.}

For the proof of Theorem \ref{Thmw} we will use the following lemma. 
\begin{Lemma}\label{ML} Let $(\omega,\eta)$ be any data with $\sigma(0)=\sigma(\pi)$ and $-\omega(0)=\omega(\pi)=1$. For any two angles $0<\theta_{1}\leq \theta_{2}<\pi$ make 
$(\omega_{1},\eta_{1})=(\omega(\theta_{1}),\eta(\theta_{1}))=q_{1},\ (\omega_{2},\eta_{2})=(\omega(\theta_{2}),\eta(\theta_{2}))=q_{2}$ and
\be\label{DAL}
d_{12}=d_{\mathbb{H}^{2}}(q_{1},q_{2})\qquad \text{and}\qquad \alpha_{12}:=\frac{d_{12}}{2\ln \bigg[\frac{\tan \theta_{2}/2}{\tan \theta_{1}/2}\bigg]}.
\ee
Then we have 
\be\label{MFE}
e^{\displaystyle ({\mathcal M}-8)/4}\geq \bigg(\frac{(\omega_{1}+1)^{2}+\eta_{1}^{2}}{4\eta_{1}}\bigg)\bigg(\frac{(\omega_{2}-1)^{2}+\eta_{2}^{2}}{4\eta_{2}}\bigg)\, e^{\displaystyle d_{12}}\, e^{\displaystyle d_{12}(\alpha_{12}-1)^{2}/2\alpha_{12}}.
\ee
\end{Lemma}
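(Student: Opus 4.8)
The plan is to read $\mathcal{M}$, up to boundary terms, as the energy of the curve $\gamma(\theta)=(\omega(\theta),\eta(\theta))$ in the hyperbolic plane $\mathbb{H}^{2}$ underlying (\ref{HYPM}), and then to bound that energy from below on three subintervals. First I would pass to the parameter $t=\ln\tan(\theta/2)$, for which $d\theta/dt=\sin\theta$, $d(\cos\theta)/dt=-\sin^{2}\theta$, and $t$ sweeps all of $\mathbb{R}$ as $\theta$ runs over $(0,\pi)$. Writing $\dot{}=d/dt$ and using $\dot\eta/\eta=\dot\sigma+2\cos\theta$ (immediate from $\eta=e^{\sigma}\sin^{2}\theta$), the hyperbolic energy $\int(\dot\omega^{2}+\dot\eta^{2})/\eta^{2}\,dt$ develops a cross term $\int 4\dot\sigma\cos\theta\,dt$; integrating it by parts cancels the $4\sigma$ contribution to $\mathcal{M}$ and leaves an explicit boundary term, so that on any $[\theta_{a},\theta_{b}]\subset(0,\pi)$
\[
\mathcal{M}_{[a,b]}=\int_{t_{a}}^{t_{b}}\frac{\dot\omega^{2}+\dot\eta^{2}}{\eta^{2}}\,dt-\big[\,G(\theta)\,\big]_{\theta_{a}}^{\theta_{b}},\qquad G(\theta)=4(\sigma+1)\cos\theta+4\ln\tan(\theta/2),
\]
the integral being precisely the hyperbolic energy $E_{[a,b]}$ of $\gamma$ in $ds^{2}=(d\omega^{2}+d\eta^{2})/\eta^{2}$. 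I would then split $[0,\pi]$ at $\theta_{1},\theta_{2}$.

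On the middle piece I would apply Cauchy--Schwarz, $E_{[1,2]}\ge\big(\int_{t_{1}}^{t_{2}}|\dot\gamma|\,dt\big)^{2}/(t_{2}-t_{1})$, and then length $\ge$ distance, $\int|\dot\gamma|\ge d_{12}$. With $\Lambda:=t_{2}-t_{1}=\ln[\tan(\theta_{2}/2)/\tan(\theta_{1}/2)]$ and $d_{12}=2\alpha_{12}\Lambda$ this gives $E_{[1,2]}\ge d_{12}^{2}/\Lambda=4\alpha_{12}^{2}\Lambda$. On each end piece the $t$-interval is infinite, so Cauchy--Schwarz is useless; instead I would use the pointwise bound $|\dot\gamma|^{2}\ge 4|\dot\gamma|-4$, i.e. $(|\dot\gamma|-2)^{2}\ge0$, followed by length $\ge$ distance to the receding point $q(\theta_{0})$ near the pole. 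Letting $\theta_{0}\to0$ and using $\mathrm{Arch}(X)\sim\ln(2X)$, the distance expands as $d_{\mathbb{H}^{2}}(q_{1},q(\theta_{0}))=\ln\frac{(\omega_{1}+1)^{2}+\eta_{1}^{2}}{\eta_{1}}-\ln\eta(\theta_{0})+o(1)$, whose finite part is exactly the first factor in (\ref{MFE}); the pole $\theta=\pi$ yields $\ln\frac{(\omega_{2}-1)^{2}+\eta_{2}^{2}}{\eta_{2}}$ by the same computation. The constant $2$ here is the hyperbolic speed of the extreme-Kerr geodesic, and it is the unique value for which the three logarithmic divergences---from $-\ln\eta(\theta_{0})$, from the interval penalty, and from $\ln\tan(\theta/2)$ in $G(0^{+})$---cancel, leaving a clean data-independent remainder $4-8\ln2$ at each pole.

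Adding the three bounds, the whole of $G$ evaluated at the interior points $\theta_{1},\theta_{2}$ (both the $(\sigma+1)\cos\theta$ and the $\ln\tan$ parts) cancels between adjacent pieces. What survives is the middle's $\alpha_{12}^{2}\Lambda$, the two end factors above, the two pole constants $4-8\ln2$ (which after dividing by $4$ sum to $2-4\ln2$), and a single net $+\Lambda$ produced by the finite parts of the two ends' interval penalties $-(t_{1}-t_{0})$ and $-(t_{\pi}-t_{2})$. Hence $\mathcal{M}/4$ is bounded below by $\Lambda(\alpha_{12}^{2}+1)$ plus the two end factors plus $2-4\ln2$. Finally, the identity $\Lambda(\alpha_{12}^{2}+1)=d_{12}(\alpha_{12}^{2}+1)/2\alpha_{12}=d_{12}+d_{12}(\alpha_{12}-1)^{2}/2\alpha_{12}$ recasts this: the $-2$ shifts $\mathcal{M}/4$ to $(\mathcal{M}-8)/4$ and the $-4\ln2$ supplies the two factors $\tfrac14$ on the right, so exponentiating gives exactly (\ref{MFE}).

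The step I expect to be the main obstacle is the treatment of the two poles, where an $+\infty$ from the end energy must cancel against a $-\infty$ from $G$ leaving a \emph{clean} constant. This cancellation is delicate on two counts: one must estimate the end length by the genuine hyperbolic distance $d_{\mathbb{H}^{2}}(q_{1},q(\theta_{0}))$ rather than by a Busemann increment, since the latter leaves a spurious dependence on the subleading behaviour of $\omega$ at the pole; and one must use the sharp constant $2$ in $(|\dot\gamma|-2)^{2}\ge0$, since any other constant leaves an uncancelled logarithm. Once the change of variables and the energy identity are set up and these two choices are made, the remaining work---the telescoping of the $G$-terms and the rearrangement into the defect factor $(\alpha_{12}-1)^{2}/2\alpha_{12}$---is routine.
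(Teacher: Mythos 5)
Your proof is correct, and it shares the paper's skeleton --- the identity rewriting $\mathcal{M}$ as hyperbolic energy minus the boundary term $4\big((\sigma+1)\cos\theta+\ln\tan\tfrac{\theta}{2}\big)$, the change of variable $t=\ln\tan\tfrac{\theta}{2}$, the split at $\theta_{1},\theta_{2}$, and the telescoping of the boundary terms --- but it handles the two end pieces by a genuinely different mechanism. On the middle piece your Cauchy--Schwarz plus ``length $\geq$ distance'' is just the standard proof that the affinely parametrized geodesic minimizes energy, which is exactly what the paper does. On $[0,\theta_{1}]$ and $[\theta_{2},\pi]$, however, the paper solves the variational problem with one endpoint at the ideal points $(\mp1,0)$, asserting (with proof deferred to the references \cite{Clement:2012vb}, \cite{Chrusciel:2011iv}) that the minimizer is the explicit geodesic run at the speed compatible with regularity at the pole, and then reads off $e^{\sigma(0)}$ and $e^{\sigma(\pi)}$ from its closed form; you instead use the pointwise inequality $|\dot\gamma|^{2}\geq 4|\dot\gamma|-4$, the triangle inequality to a cutoff point $q(\theta_{0})$, and the asymptotics $\mathrm{Arch}(X)\sim\ln(2X)$ to extract the finite part $\ln\frac{(\omega_{1}+1)^{2}+\eta_{1}^{2}}{\eta_{1}}$ after the three logarithmic divergences cancel. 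I checked the bookkeeping: the $t_{0}$- and $\sigma(\theta_{0})$-divergences do cancel (using only that $\sigma$ is bounded at the pole), the residual constant per pole is $4-8\ln 2$, which indeed produces the two factors $\tfrac14$ and the shift from $\mathcal{M}$ to $\mathcal{M}-8$, and the net $+\Lambda$ combines with $4\alpha_{12}^{2}\Lambda$ to give $d_{12}+d_{12}(\alpha_{12}-1)^{2}/2\alpha_{12}$. Your version is self-contained where the paper leans on external references for the delicate ``endpoint at infinity'' minimization, and your observation that the constant $2$ in $(|\dot\gamma|-2)^{2}\geq 0$ is forced by regularity at the pole is exactly the point that the paper's parametrization $\gamma(t-t_{1})$ obscures (as written there it reads as unit speed, though the explicit formula for $e^{\sigma(0)}$ that follows is only consistent with speed $2$, so your account is in fact the cleaner one). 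What the paper's route buys in exchange is the identification of the extremal configuration itself, which it reuses in the discussion of the $\delta=0$ case after Theorem \ref{Thmw}.
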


\n \begin{proof}[\bf Proof of Lemma \ref{ML}.] Given any data $(\bar{\omega},\bar{\sigma})(\theta)$ defined over an interval $[\bar{\theta}_{1},\bar{\theta}_{2}]$ let's introduce the convenient notation
\ben
{\mathcal M}_{\bar{\theta}_{1}}^{\bar{\theta}_{2}}\big(\bar{\omega},\bar{\eta}\big)=\int_{\bar{\theta}_{1}}^{\bar{\theta}_{2}}\bigg(\bar{\sigma}'^{2}+4\bar{\sigma} +\frac{\bar{\omega}'^{2}}{\bar{\eta}^{2}}\bigg)\, \sin\theta\, d\theta
\een
where $\bar{\eta}=e^{\bar{\sigma}}\sin\theta$. This expression can be conveniently written (\cite{Acena:2010ws}) 
\be\label{FREL}
{\mathcal M}_{\bar{\theta}_{1}}^{\bar{\theta}_{2}}(\bar{\omega},\bar{\eta})=\int_{\bar{\theta}_{1}}^{\bar{\theta}_{2}} \bigg(\frac{\bar{\eta}'^{2}+\bar{\omega}'^{2}}{\bar{\eta}^{2}}\bigg) \sin\theta\, d\theta -4 \bigg((\bar{\sigma}(\theta)+1)\cos\theta+\ln \tan \frac{\theta}{2}\bigg)\, \bigg|_{\bar{\theta}_{1}}^{\bar{\theta}_{2}}
\ee
and observe that by making the change of variable $t=\ln \tan\theta/2$ we get
\be\label{EPATH}
\int_{\bar{\theta}_{1}}^{\bar{\theta}_{2}} \bigg(\frac{\bar{\eta}'^{2}+\bar{\omega}'^{2}}{\bar{\eta}^{2}}\bigg) \sin\theta\, d\theta=\int_{\bar{t}_{1}}^{\bar{t}_{2}} \bigg(\frac{\bar{\eta}'^{2}+\bar{\omega}'^{2}}{\bar{\eta}^{2}}\bigg)\, d t
\ee
where the derivative inside the integral is with respect to $t$. We note too that the right hand side is the energy of the path $(\bar{\omega},\bar{\eta})$ on the hyperbolic plane.  
For this reason, the formulas (\ref{FREL}) and (\ref{EPATH}) show that the minimum of ${\mathcal M}_{\theta_{1}}^{\theta_{2}}\big(\bar{\omega},\bar{\eta}\big)$ among all the paths $\big\{(\bar{\omega}, \bar{\sigma})\big\}$ defined over $[\theta_{1},\theta_{2}]$ and with boundary values $(\omega_{1},\sigma_{1})$ and $(\omega_{2},\sigma_{2})$, is reached at the only geodesic in the hyperbolic plane joining $(\omega_{1},\eta_{1})$ to $(\omega_{2},\eta_{2})$. More precisely if $\gamma(s)=(\bar{\omega}(s),\bar{\eta}(s))$ is the geodesic parametrized by arc-length $s$ starting at $(\omega_{1},\eta_{1})$ (when $s=0$) and ending at $(\bar{\omega}_{2},\bar{\eta}_{2})$ (when $s=d_{12}$), then the minimizing path is
\ben
(\bar{\omega},\bar{\eta})(t)=\gamma\bigg(\frac{d_{12}(t-t_{1})}{t_{2}-t_{1}}\bigg)=\gamma(2\alpha_{12}(t-t_{1})).
\een  
Note that because of this we have $(\bar{\omega}'^{2}+\bar{\eta}'^{2})/\bar{\eta}^{2}=4\alpha_{12}$. 
In this way if we denote the minimum by $\underline{\mathcal M}_{\theta_{1}}^{\theta_{2}}$ then from (\ref{FREL}) and (\ref{EPATH})
we have 
\ben
\underline{\mathcal M}_{\theta_{1}}^{\theta_{2}}=4\bigg((\alpha_{12}^{2}-1)\ln \tan \frac{\theta}{2}-(\sigma(\theta)+1)\cos\theta\bigg)\, \bigg|_{\theta_{1}}^{\theta_{2}}.
\een
On the other hand the minimum of ${\mathcal M}_{1}^{\theta_{2}}$ among all path $(\bar{\omega},\bar{\eta})$ defined over $[0,\theta_{1}]$ with boundary values 
$(-1,0)$ at $\theta=0$ and $(\omega_{1},\eta_{1})$ at $\theta_{1}$, is reached at the unique geodesic in $\mathbb{H}^{2}$ ``from'' $(-1,0)$ to $(\omega_{1},\eta_{1})$. This requires a bit more effort than the previous case, because strictly speaking $(-1,0)$ ``lies'' at infinity in the hyperbolic plane. Nevertheless, a proof can be given exactly as in \cite{Clement:2012vb} or \cite{Chrusciel:2011iv} and won't be repeated here. Being more concrete if $\gamma(s)$ is such geodesic parametrized by arc-length $s$ then the minimum is reached at 
\ben
(\bar{\omega},\bar{\eta})(t)=\gamma(t-t_{1}).
\een
In this way if we denote the minimum by $\underline{\mathcal M}_{0}^{\theta_{1}}$ then from (\ref{FREL}) and (\ref{EPATH}) we have
\ben
\underline{\mathcal M}_{0}^{\theta_{1}}=-4(\sigma(\theta)+1)\cos\theta\bigg|_{0}^{\theta_{1}}.
\een    
The value of $\sigma(0)$ is calculated from the explicit form of the geodesic $\gamma$ mentioned before. The explicit form of the geodesic is 
\ben
\omega(s)=\frac{ABe^{2s}}{B^{2}e^{2s}+1}-1,\qquad
\eta(s)=\frac{Ae^{s}}{B^{2}e^{2s}+1}
\een
where
\ben
A=\frac{(\omega_{1}+1)^{2}+\eta_{1}^{2}}{\eta_{1}},\qquad \text{and}\qquad B=\frac{\omega_{1}+1}{\eta_{0}}.
\een
If we make $s=\ln \tan \theta/2$ and recall that $\eta=e^{\sigma}\sin^{2}\theta$ we obtain the following expression for $\sigma(0)$
\ben
e^{\displaystyle \sigma(0)}=\frac{(\omega_{1}+1)^{2}+\eta_{1}^{2}}{4\eta_{1}}\frac{1}{\tan^{2}\frac{\theta_{1}}{2}}.
\een
One can proceed in the same way to find the minimum $\underline{\mathcal M}_{\theta_{2}}^{\pi}$ among all path $(\bar{\omega},\bar{\eta})$ defined over $[\theta_{2},\pi]$  with boundary values $(\omega_{2},\sigma_{2})$ at $\theta_{2}$ and $(1,0)$ at $\theta=\pi$. The result is 
\ben
\underline{\mathcal M}_{\theta_{2}}^{\pi}=-4(\sigma(\theta)-1)\cos\theta\, \bigg|_{\theta_{2}}^{\pi}
\een
where
\ben
e^{\displaystyle \sigma(\pi)}=\frac{\eta_{2}^{2}+(\omega_{2}-1)^{2}}{4\eta_{2}}\tan^{2}\frac{\theta_{2}}{2}.
\een
Substituting all the lower bounds obtained so far in the r.h.s of the inequality 
\ben
{\mathcal M}={\mathcal M}_{0}^{\theta_{1}}+{\mathcal M}_{\theta_{1}}^{\theta_{2}}+{\mathcal M}_{\theta_{2}}^{\pi}\geq \underline{\mathcal M}_{0}^{\theta_{1}}+\underline{\mathcal M}_{\theta_{1}}^{\theta_{2}}+\underline{\mathcal M}_{\theta_{2}}^{\pi}
\een
and manipulating the expression we get
\begin{align}
e^{\displaystyle ({\mathcal M}-8)/4}&\geq \bigg[\frac{\tan^{\displaystyle \alpha_{12}^{2}-1}\theta_{2}/2}{\tan^{\displaystyle \alpha_{12}^{2}-1}\theta_{1}/2}\bigg]\ e^{\displaystyle \ \big(\sigma(0)+\sigma(\pi)\big)}\\
\nonumber &=\bigg(\frac{(\omega_{1}+1)^{2}+\eta_{1}^{2}}{4\eta_{1}}\bigg)\bigg(\frac{\eta_{2}^{2}+(\omega_{2}-1)^{2}}{4\eta_{2}}\bigg)\ e^{\ \displaystyle d_{12}(\alpha_{12}^{2}+1)/2\alpha_{12}}\\
\nonumber &=\bigg(\frac{(\omega_{1}+1)^{2}+\eta_{1}^{2}}{4\eta_{1}}\bigg)\bigg(\frac{\eta_{2}^{2}+(\omega_{2}-1)^{2}}{4\eta_{2}}\bigg)\ e^{\displaystyle d_{12}}e^{\displaystyle d_{12}(\alpha_{12}-1)^{2}/2\alpha_{12}}
\end{align}
which is the desired inequality.
\end{proof}

\n \begin{proof}[\bf Proof of Theorem \ref{Thmw}]  The statement of Theorem \ref{Thmw} is scale invariant so it is enough to prove it when $|J|=1/4$.

({\it i}\,) In (\ref{MFE}) choose $\theta_{1}=\theta_{2}=\theta$ (and thus $d_{12}=0$) and use the notiation $(\omega,\eta):=(\omega,\eta)(\theta)$ to obtain
\be\label{FF1}
e^{\displaystyle ({\mathcal M}-8)/4}\geq \frac{\big((\omega+1)^{2}+\eta^{2}\big)\big((\omega-1)^{2}+\eta^{2}\big)}{16\eta^{2}}.
\ee
Then manipulate the r.h.s to obtain
\begin{align}\label{FF2}
\frac{\big((\omega+1)^{2}+\eta^{2}\big)\big((\omega-1)^{2}+\eta^{2}\big)}{16\eta^{2}}&
=\frac{(x^{2}+2\omega+1)(x^{2}-2\omega+1)}{16\eta^{2}}
=\frac{(x^{2}+1)^{2}-4\omega^{2}}{16\eta^{2}}\\
\nonumber &=\frac{(x^{2}-1)^{2}+4\eta^{2}}{16\eta^{2}} =
\frac{(\omega^{2}+\eta^{2}-1)^{2}}{16\eta^{2}}+\frac{1}{4}.
\end{align}
We can use this information in the inequality $A\geq 4\pi e^{({\mathcal M}-8)/8}$ to arrive at
\ben
A\geq 4\pi \sqrt{\frac{(\omega^{2}+\eta^{2}-1)^{2}}{16\eta^{2}}+\frac{1}{4}}
\een
which is (\ref{T31}) (recall $|J|=1/4$). 

({\it ii}\,) We move now to prove inequality (\ref{T4-2}). Denote 
\ben
\hat{q}_{1}:=(\hat{\omega}_{1},\hat{\eta}_{1}):=\frac{(\omega_{1},\eta_{1})}{x_{1}},\ \hat{q}_{2}=(\hat{\omega}_{2},\hat{\eta}_{2}):=\frac{(\omega_{2},\eta_{2})}{x_{2}}\qquad\text{and}\qquad d_{\hat{1}\hat{2}}=d_{\mathbb{H}^{2}}(\hat{q}_{1},\hat{q}_{2})
\een
where $x_{1}:=\sqrt{\omega_{1}^{2}+\eta_{1}^{2}}$ and $x_{2}:=\sqrt{\omega_{2}^{2}+\eta_{2}^{2}}$. Of course the points $\hat{q}_{1}$ and $\hat{q}_{2}$ lie in the unit semicircle in the half-plane $\{(\omega,\eta),\eta>0\}$. 
We start by showing that  
\be\label{DEST}
d_{12}\geq d_{\hat{1}\hat{2}}- 2{\rm Arch}(1+\delta^{2}).
\ee
To obtain this inequality it is enough to prove  
\be\label{DHE}
d_{i\hat{i}}\leq {\rm Arch}(1+\delta^{2}),\ \text{for}\ i=1,2,\ \text{(here } d_{i\hat{i}}=d_{\mathbb{H}^{2}}(q_{i},\hat{q}_{i}) \text{)}
\ee
and then use the triangle inequality $d_{\hat{1}\hat{2}}\leq d_{\hat{1}1}+d_{12}+d_{2\hat{2}}$. To prove (\ref{DHE}) recall first that the formula for the hyperbolic distance is
\ben
d_{i\hat{i}}={\rm Arch}\bigg[1+\underbrace{\frac{(\omega_{i}-\hat{\omega}_{i})^{2}+(\eta_{i}-\hat{\eta}_{i})^{2}}{2\eta_{i}\hat{\eta}_{i}}}_{\text{(I)}}\bigg].
\een
Then use $\eta_{i}=x_{i}\hat{\eta}_{i}$ and $\omega_{i}=x_{i}\hat{\omega}_{i}$ to estimate the under-braced term (I) as
\begin{align*}
\frac{(\omega_{i}-\hat{\omega}_{i})^{2}+(\eta_{i}-\hat{\eta}_{i})^{2}}{2\eta_{i}\hat{\eta}_{i}}&=\frac{(x_{i}-1)^{2}}{2x_{i}}\frac{\eta_{i}^{2}+\omega_{i}^{2}}{\eta_{i}^{2}}=\frac{(x_{i}-1)^{2}x_{i}}{2\eta_{i}^{2}}\\
&=\frac{x_{i}}{2(1+x_{i})^{2}} \frac{(\eta_{i}^{2}+\omega_{i}^{2}-1)}{\eta_{i}^{2}} \leq \frac{x_{i}}{2(1+x_{i})^{2}}\delta^{2}\leq \delta^{2}
\end{align*}
as wished (to get the first inequality ($\leq$) we have used (\ref{T31})). 

Let us see  in the sequel how to show (\ref{T4-2}) from the equation (\ref{DEST}) that we have just proved. Plug (\ref{DEST}) in the factor $e^{d_{12}}$ of (\ref{MFE}) to obtain
\begin{align}\label{TEDIOSA}
\bigg(\frac{A}{4\pi}\bigg)^{2} & e^{\displaystyle 2{\rm Arch} (1+\delta^{2})}\geq \underbrace{\bigg(\frac{(\omega_{1}+1)^{2}+\eta_{1}^{2}}{4\eta_{1}}\bigg)\bigg(
\frac{(\omega_{2}-1)^{2}+\eta_{2}^{2}}{4\eta_{2}}\bigg)\, e^{\displaystyle d_{\hat{1}\hat{2}}} }_{\text{(II)}} e^{\displaystyle \frac{d_{12}(\alpha_{12}-1)^{2}}{2\alpha_{12}}}.
\end{align}
We then show that the under-braced factor (II) 
can be estimated from below by $1/4$ (i.e. (II)$\geq 1/4$). To see this note that for points in the unit circle the following formula for the hyperbolic distance holds
\be\label{UCE}
e^{\displaystyle d_{\hat{1}\hat{2}}}=\frac{\hat{\eta}_{1}\hat{\eta}_{2}}{(\hat{\omega}_{1}+1)(-\hat{\omega}_{2}+1)}
\ee
and that with it we can compute
\begin{align*}
\nonumber \bigg(\frac{(\omega_{1}+1)^{2}+\eta_{1}^{2}}{4\eta_{1}}\bigg)&\bigg(\frac{(\omega_{2}-1)^{2}+ \eta_{2}^{2}}{4\eta_{2}}\bigg)\, e^{\displaystyle d_{\hat{1}\hat{2}}}=\\ 
\nonumber &=\frac{\big((x_{1}^{2}+1)/x_{1}+2\hat{\omega}_{1}\big)\big((x_{2}^{2}+1)/x_{2}-2\hat{\omega}_{2}\big)}{16\hat{\eta}_{1}\hat{\eta}_{2}}e^{\displaystyle d_{\hat{1}\hat{2}}}\\
\nonumber &=\frac{\big((x_{1}-1)^{2}/x_{1}+2(\hat{\omega}_{1}+1)\big)\big((x_{2}-1)^{2}/x_{2}+2(-\hat{\omega}_{2}+1)\big)}{16 (\hat{\omega}_{1}+1)(-\hat{\omega}_{2}+1)}\\
&=\frac{1}{4}\bigg(\frac{(x_{1}-1)^{2}}{2x_{1}(\hat{\omega}_{1}+1)}+1\bigg)\bigg(\frac{(x_{2}-1)^{2}}{2x_{2}(1-\hat{\omega}_{2})}+1\bigg)\geq \frac{1}{4} 
\end{align*}
where in the second equality we have used (\ref{UCE}) and where the last inequality follows from the fact that because $\hat{q}_{1}$ and $\hat{q}_{2}$ are in the unit semicircle then $1+\hat{\omega}_{1}>0$ and $1-\hat{\omega}_{2}>0$. Finally using the bound (II)$\geq 1/4$ in (\ref{TEDIOSA}) we obtain  
\ben
d_{12}\frac{(\alpha_{12}-1)^{2}}{2\alpha_{12}}\leq 2 {\rm Arch}\, (1+\delta^{2})+2\ln \frac{A}{2\pi}.
\een
The equation (\ref{T4-2}) follows then from using in this equation: (i) the definition of $\alpha_{12}$ in (\ref{DAL}), (ii) that $1+\delta^{2}/4=(A/2\pi)^{2}$ and (iii) that for any $y>0$ we have ${\rm Arch}\, (1+4y)\geq {\rm Arch}\, (1+y)\geq \ln (1+ y)$.
\end{proof}
\begin{figure}[h]
\centering
\includegraphics[width=6.4cm,height=8cm]{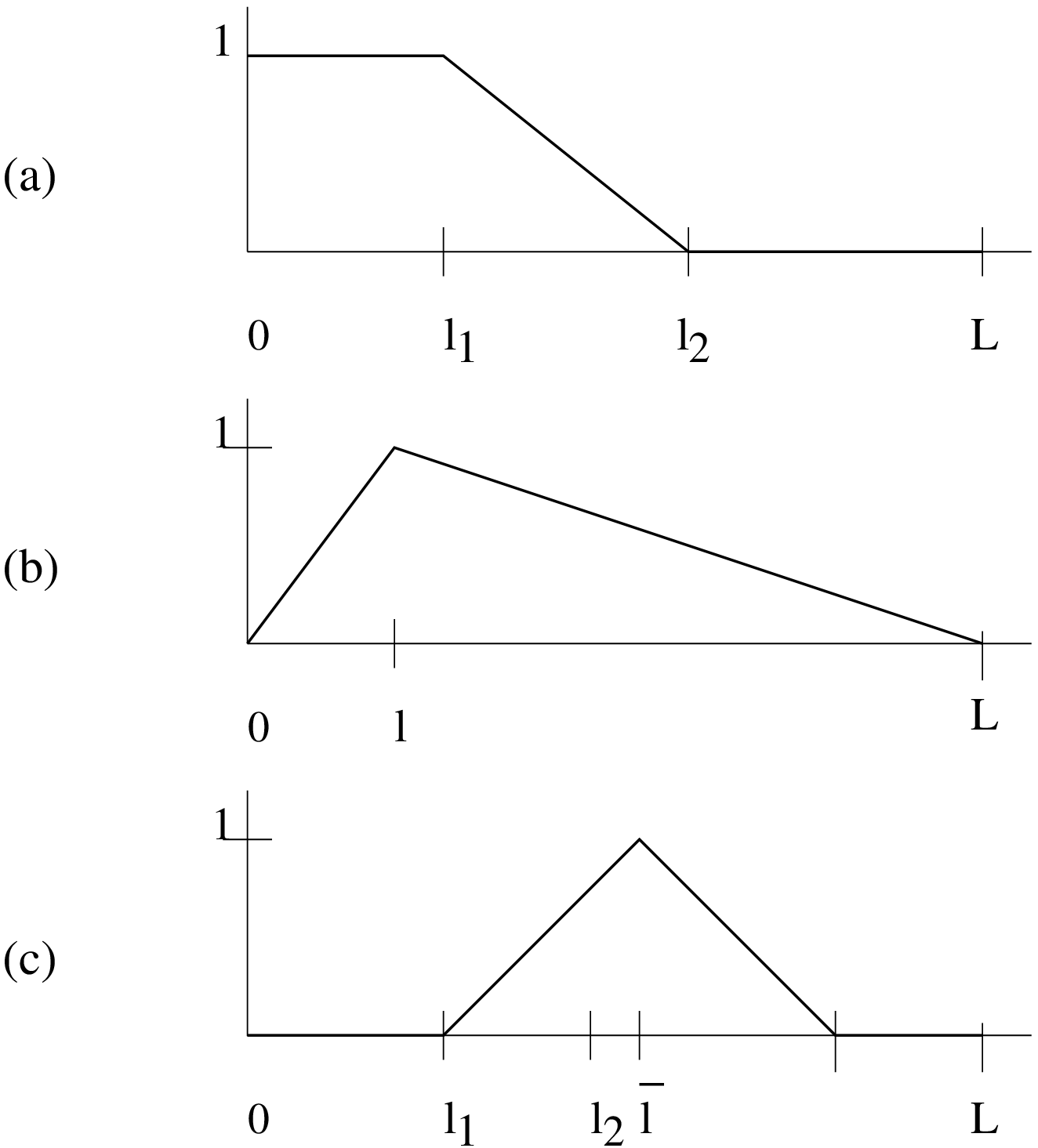}
\caption{}
\label{FGRA}
\end{figure} 
\subsection{Proof of Theorem \ref{T5}.}

\n \begin{proof}[\bf Proof of Theorem \ref{T5}] Let $\alpha=\alpha(l)$ be the linear function in $l\in[l_{1},l_{2}]$ that is one at $l_{1}$ and zero 
at $l_{2}$ (see the graph (a) in Figure \ref{FGRA} between $l_{1}$ and $l_{2}$). Denote by $\Omega_{12}$ the region enclosed by 
the orbits $C_{1}$ and $C_{2}$ at $l=l_{1}$ and $l=l_{2}$ respectively. Let $a_{12}=a_{2}-a_{1}$ be the area of $\Omega_{12}$ and let $l_{12}=l_{2}-l_{1}$. 
We claim that 
\be\label{LTF}
\int_{\Omega_{12}} \big( |\nabla \alpha|^{2}+\kappa\alpha^{2}\big)\, dA= 2\frac{r_{1}}{l_{12}}+r_{1}'-\frac{a_{12}}{l_{12}^{2}}
\ee
where $r_{1}'=(dr/dl)|_{l=l_{1}}$. To see this use that $|\nabla \alpha|^{2}=(\alpha')^{2}=1/l_{12}^{2}$ and that $\kappa=-r''/r$ (where $'=d/dl$) and integrate twice by parts the term $\int_{\Omega_{12}} -\alpha''\alpha\, dA$. 
An important consequence of (\ref{LTF}) is the following. If one takes the trial function $\alpha=\alpha(l)$ equal to one in $[0,l_{1}]$, zero in $[l_{2},L]$ and linear in $[l_{1},l_{2}]$ (see graph (a) in Figure \ref{FGRA}) then   
\ben
\int_{H} \big( |\nabla \alpha|^{2}+\kappa\alpha^{2}\big)\, dA= 2\pi+2\frac{r_{1}}{l_{12}}-\frac{a_{12}}{l_{12}^{2}}.
\een
Any stable horizon has the l.h.s (and therefore the r.h.s) of the previous equation non-negative. Therefore, choosing in it $l_{1}=0$ (therefore $r_{1}=0$) we obtain that for any stable horizon we must have $2\pi l^{2}_{2}\geq a_{2}$ for all $l_{2}$, 
which is the left inequality in (\ref{(a)}).  

Now, if we take a trial function $\alpha=\alpha(l)$, equal to one at $l_{1}$ with $l_{1}\leq L/2$ and linear on 
everyone of the intervals $[0,l_{1}]$, $[l_{1},L]$ (see Figure \ref{FGRA} graph (b)), then, using (\ref{LTF}) over $[0,l_{1}]$ and over $[l_{1},L]$ and summing up one obtains 
\ben
2r_{1}(\frac{1}{l_{1}}+\frac{1}{L-l_{1}})\geq \frac{a_{1}}{l_{1}^{2}}+\frac{A-a_{1}}{(L-l_{1})^{2}}.
\een
Therefore as $l_{1}\leq L/2$ we get
\be\label{T6F1D}
4\frac{r_{1}}{l_{1}}\geq \frac{A}{L^{2}}.
\ee
But as $r_{1}=da/dl|_{l=l_{1}}$ we obtain $4a'\geq (A/L^{2}) l$  (we are making $l_{1}=l$). Integrating we obtain $l^{2}\leq (8 L^{2}/A)a$ as long as $l\leq L/2$. If $l\geq L/2$ then $a(l)\geq a(l/2)\geq (A/(32L^{2}))l^{2}$ which is the right hand side of (\ref{(a)}). We have proved then (\ref{(a)}). 

Formula (\ref{(b)}) is exactly (\ref{T6F1D}). 

Finally we need to prove (\ref{(c)}). We first show that if $H$ is a stable horizon, then for any 
$\theta\leq \pi/3$ we have $\sigma(\theta)\leq c+4$ where $A=4\pi e^{c}$. To see this, let 
$0\leq \theta_{1}<\theta_{2}\leq \pi/3$. Using these angles define a trial function $\alpha$ as
\ben
\alpha=\left\{
\begin{array}{lcl}
e^{-\sigma(\theta_{1})/2} & {\rm if} & \theta\in [0,\theta_{1}],\\
e^{-\sigma(\theta)/2}        & {\rm if} & \theta\in [\theta_{1},\theta_{2}],\\
e^{-\sigma(\theta_{2})/2} & {\rm if} & \theta\in [\theta_{2},\pi]
\end{array}
\right.
\een
With this choice of $\alpha$ we have
\begin{align}\label{SC}
\int_{H}\big( |\nabla \alpha|^{2}+\kappa\alpha^{2}\big)\, dA=2\pi e^{-c}\bigg[ & e^{c-\sigma_{1}}+e^{c-\sigma_{2}}-(\sigma_{2}-c)\cos\theta_{2}+(\sigma_{1}-c)\cos\theta_{1}\\
\nonumber &-\int_{\theta_{1}}^{\theta_{2}}\big( \frac{\sigma'^{2}}{4}+(\sigma-c)\big)\sin\theta d\theta\bigg].
\end{align}
The calculation is straighforward and is explained at the end of the proof. Thus, if $H$ is stable the l.h.s of (\ref{SC}) is non-negative and we must have
\be\label{HHH}
e^{c-\sigma_{1}}+e^{c-\sigma_{2}}\geq (\sigma_{2}-c)\cos\theta_{2}-(\sigma_{1}-c)\cos\theta_{1}+\int_{\theta_{1}}^{\theta_{2}}(\sigma-c)\sin\theta d\theta
\ee
for any $0\leq \theta_{1}<\theta_{2}\leq \pi$. Suppose now that there is $\theta\in (0,\pi/3]$ such that 
$\sigma(\theta)\geq c+4$. Let $\theta_{2}$ be the first angle after the angle zero for which $\sigma$ is equal to $c+4$. 
If for any $\theta$ on $[0,\theta_{2}]$ we have $\sigma(\theta)\geq c$ then, choosing $\theta_{1}=0$ in (\ref{HHH}) we must have
\be\label{CONT}
1+e^{-4}\geq 4\cos\theta_{2}\geq 2
\ee
which is not possible. If there is $\theta$ in $[0,\theta_{2}]$ for which $\sigma(\theta)\leq c$ let $\theta_{1}$ be the first angle before $\theta_{2}$ for which $\sigma$ is equal to $c$. With these choices of $\theta_{1}$ and $\theta_{2}$ in (\ref{HHH}) we obtain again the inequality (\ref{CONT}) which is not possible.

To deduce from this (\ref{(c)}) we note that
\ben
\sin^{2}\theta=4\frac{a(\theta)(A-a(\theta))}{A^{2}}\leq 4\frac{a(\theta)}{A}.
\een
Therefore if $\theta\leq \pi/3$ we obtain
\ben
r^{2}(\theta)=(2\pi)^{2} e^{\sigma(\theta)}\sin^{2}\theta\leq \pi (4\pi e^{c}) e^{4} \sin^{2}\theta \leq 4\pi e^{4} a(\theta)
\een 
from which (\ref{(c)}) follows.

It remains to explain how to perform the calculation (\ref{SC}). We do that in what follows. Denote by $\Omega_{1}$, $\Omega_{12}$ and $\Omega_{2}$ the regions on $H$ corresponding to the $\theta$-intervals $[0,\theta_{1}]$, $[\theta_{1},\theta_{2}]$ and $[\theta_{2},\pi]$ respectively.
For the integration in $\Omega_{1}$, where $\alpha^{2}=e^{-\sigma_{1}}$ use Gauss-Bonnet, $\int_{\Omega_{1}}\kappa dA=2\pi-dr/dl |_{l_{1}}$, and that
\ben
\frac{d r}{d l}\bigg|_{l=l_{1}}=2\pi - 2\pi e^{-c+\sigma_{1}}\big(\frac{\sigma'_{1}}{2}\sin\theta +\cos\theta\big)
\een 
where $\sigma'_{1}=d\sigma/d\theta|_{\theta=\theta_{1}}$. We then obtain
\be\label{AAA}
\int_{\Omega_{1}} \kappa\alpha^{2}\, dA = 2\pi e^{-c}\big( e^{c-\sigma_{1}} - \cos \theta_{1} -\frac{\sigma'_{1}}{2}\sin\theta \bigg).
\ee 
Similarly we have
\be\label{BBB}
\int_{\Omega_{2}} \kappa\alpha^{2}\, dA= 2\pi e^{-c}\bigg(e^{c-\sigma_{2}} + \cos\theta_{2} +\frac{\sigma'_{2}}{2}\sin\theta_{2}
\bigg).
\ee 
For the integration on $\Omega_{12}$ use the expression 
\ben
\kappa=\bigg(\frac{-2\sigma'\cos\theta-\sin\theta \sigma'^{2}+2\sin\theta-(\sin\theta \sigma')'}{2\sin\theta}\bigg)\, e^{\displaystyle -2c+\sigma}
\een
and that $\alpha^{2}=e^{-\sigma}$
to obtain (after integrations by parts)
\begin{align}
\label{OUDOS} \int_{\Omega_{12}} \kappa\alpha^{2}\, dA= & - 2\pi e^{-c}\int_{\theta_{1}}^{\theta_{2}} \frac{\sigma'^{2}}{2}\sin\theta d\theta\\
\nonumber & + 2\pi e^{-c}(\cos\theta_{1}-\cos\theta_{2})\\
\nonumber & + 2\pi e^{-c}(\frac{\sigma'_{1}}{2}\sin\theta_{1} -\frac{\sigma'_{2}}{2}\sin\theta_{2})\\
\nonumber & +2\pi e^{-c}\bigg( (\sigma_{1}-c)\cos\theta_{1} - (\sigma_{2}-c)\cos\theta_{2} - \int_{\theta_{1}}^{\theta_{2}}(\sigma - c) \sin\theta d\theta\bigg).
\end{align} 
Finally add up (\ref{AAA}), (\ref{BBB}) and (\ref{OUDOS}) to deduce (\ref{SC}).
\end{proof}

\subsection{Proof of the bound (\ref{Rest3}) in Theorem \ref{ThmRL2}.}

The proof that there is $D(\delta)$ such that $D\geq D(\delta)$ follows directly from the next two lemmas whose proofs are given immediately after their statements.

\begin{Lemma}\label{L1} Let $H$ be a stable axisymmetric horizon with $4|J|=1$. Let $\Omega_{12}$ be the region on $H$ bounded by two orbits $C_{1}$ and $C_{2}$. Let $l_{12}=l_{2}-l_{1}$ and $a_{12}=a_{2}-a_{1}$ be the distance and the area between them respectively. Then, either $L\leq 5l_{12}$ or
$L\leq 4l_{12}(A/a_{12})^{4}$. Therefore
\ben
D=\frac{A}{L^{2}}\geq \frac{2\pi}{\displaystyle \bigg(\max\bigg\{5l_{12},4l_{12}(A/a_{12})^{4}\bigg\}\bigg)^{2}}.
\een
\end{Lemma}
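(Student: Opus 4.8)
The final \emph{Therefore} clause is immediate once the dichotomy is in hand: writing $M:=\max\{5l_{12},\,4l_{12}(A/a_{12})^{4}\}$ we have $L\le M$, hence $D=A/L^{2}\ge A/M^{2}$, and since $4|J|=1$ forces $A\ge 8\pi|J|=2\pi$ we obtain $D\ge 2\pi/M^{2}$, which is the displayed inequality. So all the content lies in the dichotomy itself, and my plan is to prove it by a contrapositive on the first alternative: assuming $L>5l_{12}$, I will produce $L\le 4l_{12}(A/a_{12})^{4}$. Since then $L-l_{12}>4l_{12}=l_{1}+(L-l_{2})>4l_{12}$, at least one of the two polar caps cut off by $C_{1}$ and $C_{2}$ — say the northern cap $[0,l_{1}]$ with $l_{1}>2l_{12}$ — carries the bulk of the meridian, so it suffices to bound the cap lengths in terms of $l_{12}$ and $a_{12}$.

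The elementary inputs I would assemble are all already available. First, the master stability identity (\ref{LTF}) for piecewise-linear trial functions, which as in the proof of Theorem \ref{T5} gives the universal area–distance bound $a(l)\le 2\pi l^{2}$ together with its south-pole mirror $A-a(l)\le 2\pi(L-l)^{2}$. Second, the tent-function relations $a_{12}\le 2\pi l_{12}^{2}+2r_{i}l_{12}$ for $i=1,2$, obtained by taking $\alpha$ equal to $1$ on one cap and linear across $\Omega_{12}$. Third, the pointwise orbit bound extracted from Theorem \ref{Thmw}(\,i\,): inequality (\ref{T31}) forces $\eta\le\tfrac12(\delta+\sqrt{\delta^{2}+4})$, equivalently $r=2\pi\sqrt{\eta}\le R_{\max}$, capping every circle length by a quantity depending only on $A$ through $\delta$. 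Writing $l_{1}=\tfrac{A}{4\pi}\int_{0}^{\theta_{1}}e^{-\sigma/2}\,d\theta$, a bound on the cap length is a bound on how negative $\sigma$ (equivalently how small $\eta$) can get over the cap.

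The real difficulty — and the reason the bound must involve the area $a_{12}$ and not $l_{12}$ alone — is excluding a long, thin \emph{neck}: a stretch of meridian along which $\eta$ collapses and $l$ elongates while almost no area accrues. Pure stability in the weak form $\int_{H}(|\nabla\alpha|^{2}+\kappa\alpha^{2})\,dA\ge 0$ cannot rule this out; that form only ever yields \emph{lower} bounds on $L$, as the test $\alpha=1-l/L$ giving $D\le 2\pi$ illustrates. An upper bound on $L$ genuinely requires the angular-momentum term of (\ref{FINE}), which is exactly what Theorem \ref{Thmw} has packaged: where $\eta$ is small, (\ref{T31}) pins $\omega$ to $\pm1$, so the required transit of $\omega$ across the horizon cannot be routed through an arbitrarily thin region without incurring a forbidden amount of rotational energy $\int \omega'^{2}/\eta^{2}$. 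Quantifying this — turning ``$\Omega_{12}$ carries a definite fraction $a_{12}/A$ of the area'' into a lower bound on $\eta$ over the relevant range, and hence into a cap-length bound — is where the explicit power $(A/a_{12})^{4}$ and the constants $4,5$ should surface, presumably from a short chain combining $a(l)\le 2\pi l^{2}$ with the energy estimate of Theorem \ref{Thmw}. I expect this neck-exclusion to be the main obstacle; the remaining step, summing the annulus and cap contributions to recover $L\le 4l_{12}(A/a_{12})^{4}$, should be routine bookkeeping.
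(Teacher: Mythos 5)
Your handling of the \emph{Therefore} clause is correct ($A\ge 8\pi|J|=2\pi$, so $D\ge A/M^{2}\ge 2\pi/M^{2}$), and the contrapositive setup (assume $L>5l_{12}$, derive the quartic bound) matches the paper. But the actual proof of the dichotomy is missing: everything after the setup is a plan whose decisive step --- ``turning the fact that $\Omega_{12}$ carries area $a_{12}$ into a cap-length bound'' --- is explicitly deferred, and the tool you propose for it is the wrong one. Your central claim, that the bare stability form $\int_{H}(|\nabla\alpha|^{2}+\kappa\alpha^{2})\,dA\ge 0$ ``only ever yields lower bounds on $L$'' and that the upper bound genuinely requires the angular-momentum term of (\ref{FINE}), is false, and the paper's proof is a direct counterexample. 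Lemma \ref{L1} is proved using \emph{only} the $\ge 0$ form: after reducing WLOG to the case $(l_{1}+l_{2})/2\le L/2$ (so $L-l_{1}\ge L/2$ and $L-l_{1}\ge 3l_{12}$, whence $[l_{2},(L+l_{1})/2]\subset[0,L]$), one runs a one-parameter family of tent functions $\alpha_{\bar l}$ peaked at $\bar l\in[l_{2},(L+l_{1})/2]$ and supported on the symmetric interval $[l_{1},2\bar l-l_{1}]$. Identity (\ref{LTF}), applied on each of the two linear pieces, turns stability into
\begin{equation*}
4\,r(\bar l)\;\ge\;\frac{a(\bar l)-a(l_{1})}{\bar l-l_{1}},
\end{equation*}
and since $r(\bar l)=a'(\bar l)$ this is the differential inequality $4\,(a-a_{1})'/(a-a_{1})\ge 1/(\bar l-l_{1})$. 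Integrating over $\bar l$ from $l_{2}$ to $(L+l_{1})/2$ gives $\big((a((L+l_{1})/2)-a_{1})/a_{12}\big)^{4}\ge (L-l_{1})/(2l_{12})\ge L/(4l_{12})$, and bounding the numerator by $A$ yields $L\le 4l_{12}(A/a_{12})^{4}$. That is where the power $4$ and the constants come from; no rotational energy is involved, and indeed the statement of Lemma \ref{L1} makes no reference to $J$ beyond normalization. The angular momentum enters the chain only in Lemma \ref{L2}, where Theorem \ref{Thmw} is used to \emph{locate} orbits $C_{1},C_{2}$ with $a_{12}$ bounded below and $l_{12}$ bounded above in terms of $A$ alone.

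Concretely, then, the gap is twofold: (a) the key quantitative step is not carried out, only anticipated; and (b) the mechanism you anticipate (pinning $\omega$ near $\pm1$ where $\eta$ is small and invoking $\int\omega'^{2}/\eta^{2}$) is not the one that produces the stated bound, and pursuing it would at best prove a different lemma. The insight you are missing is that a \emph{family} of tent trial functions, indexed by the peak location and with support width tied to the distance from $C_{1}$, converts the pointwise consequence of weak stability into an integrable ODE for the area function $a(\bar l)$; it is the exponential (here quartic, after taking fourth roots) growth forced on $a$ that caps $L$ once the total area $A$ is finite. Your static inputs ($a\le 2\pi l^{2}$, the single tent relation $a_{12}\le 2\pi l_{12}^{2}+2r_{1}l_{12}$, and the global cap $r\le R_{\max}$ from (\ref{T31})) are all consistent with a long thin neck of small area and cannot, by themselves, yield the dichotomy.
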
 

\begin{Lemma}\label{L2} Let $H$ be a stable axisymmetric horizon with $4|J|=1$ and area $A$. Then there are orbits $C_{1}$ and $C_{2}$ for which (following the notation of the Lemma \ref{L1})
\begin{align}
\label{CON1}& a_{12}\geq \tilde{a}(A)>0,\\ 
\label{CON2}& l_{12}\leq \tilde{l}(A)<\infty
\end{align}
for certain functions $\tilde{a}(A)$ and $\tilde{l}(A)$.
\end{Lemma}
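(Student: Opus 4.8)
The plan is to exhibit a single pair of orbits $C_{1},C_{2}$ straddling a well-chosen ``central'' angle and to bound their separation and enclosed area only through $A$ (recall $4|J|=1$ throughout, so $\omega(0)=-1$, $\omega(\pi)=1$, and (\ref{T31}) reads $(\eta^{2}+\omega^{2}-1)^{2}\leq\delta^{2}\eta^{2}$). The guiding principle is that the orbit length $r=2\pi\sqrt{\eta}$ should be bounded below on the region between $C_{1}$ and $C_{2}$: once $r\geq r_{**}(A)>0$ there, the bound $l_{12}=\int_{a_{1}}^{a_{2}}da/r\leq a_{12}/r_{**}$ is immediate and the area lower bound is elementary. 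The one genuine obstacle is to secure such a uniform lower bound on $r$ (equivalently on $\eta$) over a region whose area does \emph{not} degenerate, since (\ref{T31}) controls $\eta$ from above everywhere but from below only where $\omega$ is small.

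First I would locate a good center: since $\omega(0)=-1<0<1=\omega(\pi)$, continuity gives $\theta_{0}\in(0,\pi)$ with $\omega(\theta_{0})=0$, and evaluating (\ref{T31}) there yields $(\eta(\theta_{0})^{2}-1)^{2}\leq\delta^{2}\eta(\theta_{0})^{2}$, hence $\eta(\theta_{0})\geq\eta_{0}:=(-\delta+\sqrt{\delta^{2}+4})/2>0$ and $r(\theta_{0})\geq r_{0}:=2\pi\sqrt{\eta_{0}}$, all depending only on $\delta$ (equivalently on $A$). The crucial second step is to keep $\theta_{0}$ away from the poles using Theorem \ref{T5}, part (\ref{(c)}): if $a(\theta_{0})\leq A/4$ then $r_{0}^{2}\leq r(\theta_{0})^{2}\leq 4\pi e^{4}a(\theta_{0})$, so $a(\theta_{0})\geq r_{0}^{2}/(4\pi e^{4})$, while the south-pole form of (\ref{(c)}) handles $a(\theta_{0})\geq 3A/4$ symmetrically. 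With $a_{\flat}:=\min\{r_{0}^{2}/(4\pi e^{4}),A/4\}>0$ this forces $a(\theta_{0})\in[a_{\flat},A-a_{\flat}]$, i.e. $\cos\theta_{0}\in[-1+2a_{\flat}/A,\,1-2a_{\flat}/A]$, so $\theta_{0}\in[\theta_{\flat},\pi-\theta_{\flat}]$ with $\theta_{\flat}=\theta_{\flat}(A)>0$. This removes the danger that the zero of $\omega$, i.e. the only place where $\eta$ is known to be large, hides in a region of vanishing area near a pole.

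Third, I would propagate the lower bound on $\eta$ to a fixed neighborhood using the hyperbolic estimate (\ref{T4-2}). Taking $\tau:=\theta_{\flat}/2$, every $\theta\in[\theta_{0}-\tau,\theta_{0}+\tau]$ lies in $[\theta_{\flat}/2,\pi-\theta_{\flat}/2]$, on which $\tan(\theta/2)$ is bounded away from $0$ and $\infty$; hence $\big|\ln[\tan(\theta/2)/\tan(\theta_{0}/2)]\big|\leq B(A)$ and (\ref{T4-2}) gives $d_{\mathbb{H}^{2}}\big((\omega,\eta)(\theta),(0,\eta(\theta_{0}))\big)\leq\rho(A)<\infty$. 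Since vertical projection is distance non-increasing in $\mathbb{H}^{2}$, i.e. $\big|\ln(\eta(\theta)/\eta(\theta_{0}))\big|\leq d_{\mathbb{H}^{2}}$, this yields $\eta(\theta)\geq\eta(\theta_{0})e^{-\rho(A)}\geq\eta_{0}e^{-\rho(A)}=:\eta_{**}(A)>0$, and so $r(\theta)\geq 2\pi\sqrt{\eta_{**}}=:r_{**}(A)$ on the whole interval.

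Finally, I would take $C_{1},C_{2}$ at the angles $\theta_{0}\mp\tau$. The enclosed area is $a_{12}=A\sin\theta_{0}\sin\tau\geq A\sin\theta_{\flat}\sin(\theta_{\flat}/2)=:\tilde{a}(A)>0$, giving (\ref{CON1}), while using $dl=(A/4\pi)e^{-\sigma/2}d\theta=(A/4\pi)(\sin\theta/\sqrt{\eta})\,d\theta$,
\[
l_{12}=\int_{\theta_{0}-\tau}^{\theta_{0}+\tau}\frac{A}{4\pi}\frac{\sin\theta}{\sqrt{\eta}}\,d\theta\leq\frac{A}{4\pi\sqrt{\eta_{**}}}\int_{\theta_{0}-\tau}^{\theta_{0}+\tau}\sin\theta\,d\theta\leq\frac{A}{2\pi\sqrt{\eta_{**}}}=:\tilde{l}(A)<\infty,
\]
which is (\ref{CON2}); both $\tilde a$ and $\tilde l$ manifestly depend on $A$ alone. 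I expect the second step (confining $\theta_{0}$ to the central band via Theorem \ref{T5}) to be the real content, with the hyperbolic propagation in the third step a close second; everything after is bookkeeping with $A$-dependent constants.
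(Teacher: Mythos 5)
Your proof is correct, and while it rests on the same three ingredients as the paper's --- the zero $\theta_{0}$ of $\omega$ as anchor (where (\ref{T31}) forces $\eta(\theta_{0})\geq(-\delta+\sqrt{\delta^{2}+4})/2$), Theorem \ref{T5}(\ref{(c)}) to confine that anchor to a central band $[\theta_{\flat},\pi-\theta_{\flat}]$, and Theorem \ref{Thmw}(ii) --- it deploys the last ingredient in a genuinely different way. The paper takes $C_{1},C_{2}$ on a level set $\eta=\tilde{\eta}$ bracketing the zero of $\omega$, and uses (\ref{T4-2}) together with the limits $d_{12}\to\infty$, $\alpha_{12}\to 1$ as $\tilde{\eta}\to 0$ to argue that for $\tilde{\eta}$ small enough one endpoint must approach a pole, so that $[\theta_{1},\theta_{2}]$ swallows a fixed angular band of definite area; the resulting $\tilde{a}(A),\tilde{l}(A)$ are therefore obtained through an implicit choice of $\tilde{\eta}(A)$. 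You instead fix the angular band $[\theta_{0}-\theta_{\flat}/2,\theta_{0}+\theta_{\flat}/2]$ outright and propagate the lower bound on $\eta$ across it by reading (\ref{T4-2}) as an upper bound on hyperbolic distance over a bounded range of $\ln\tan(\theta/2)$, then projecting via the standard half-plane fact $|\ln(\eta_{1}/\eta_{2})|\leq d_{\mathbb{H}^{2}}$. This avoids the limiting argument entirely, yields explicit constants $\tilde{a}(A)=A\sin\theta_{\flat}\sin(\theta_{\flat}/2)$ and $\tilde{l}(A)=A/(2\pi\sqrt{\eta_{**}})$, and has the incidental advantage that your $C_{1},C_{2}$ sit at prescribed angles rather than on a level set of $\eta$ whose location is only known qualitatively. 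All the individual steps check out (the confinement of $\theta_{0}$ via the north- and south-pole forms of (\ref{(c)}), the bound $\ln[\tan(\theta_{2}/2)/\tan(\theta_{1}/2)]\leq B(A)$ on the band, the area identity $a_{12}=A\sin\theta_{0}\sin\tau$, and the length estimate with $dl=(A/4\pi)(\sin\theta/\sqrt{\eta})\,d\theta$), and there is no circularity since Theorems \ref{Thmw} and \ref{T5} are both established before (\ref{Rest3}) in the paper's order of proof.
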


From these two lemmas the claim (\ref{Rest3}) of Theorem \ref{ThmRL2} is now direct. We state it as a corollary.

\begin{Corollary} Let $H$ be a stable axisymmetic horizon of area $A$ and $J\neq 0$. Then, there is $D(\delta)>0$ such that
\ben
D(\delta)\leq D=\frac{A}{L^{2}}.
\een
\end{Corollary}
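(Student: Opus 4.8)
The plan is to derive the lower bound $D(\delta) \le D = A/L^2$ directly from the two lemmas, which is exactly how the excerpt sets things up: the Corollary is an immediate consequence of Lemma \ref{L1} and Lemma \ref{L2}. Since the statement is scale-invariant, I would first reduce to the normalization $4|J| = 1$ used throughout this subsection; under this scaling both $D = A/L^2$ and $\delta = 2\sqrt{(A/8\pi|J|)^2 - 1} = 2\sqrt{(A/2\pi)^2 - 1}$ become functions of $A$ alone, so it suffices to produce a positive lower bound for $D$ depending only on $A$ (equivalently, only on $\delta$).

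The main step is to combine the two lemmas. Lemma \ref{L2} supplies a single pair of orbits $C_1, C_2$ whose enclosed area $a_{12}$ is bounded below by $\tilde{a}(A) > 0$ and whose separation $l_{12}$ is bounded above by $\tilde{l}(A) < \infty$. I would feed precisely this pair into Lemma \ref{L1}. Lemma \ref{L1} then yields the dichotomy $L \le 5 l_{12}$ or $L \le 4 l_{12} (A/a_{12})^4$, and hence in either case
\be
L \le \max\Bigl\{5 l_{12},\, 4 l_{12}(A/a_{12})^4\Bigr\} \le \max\Bigl\{5\tilde{l}(A),\, 4\tilde{l}(A)(A/\tilde{a}(A))^4\Bigr\} =: \Lambda(A),
\ee
where the second inequality uses $l_{12} \le \tilde{l}(A)$ and $a_{12} \ge \tilde{a}(A)$ (the latter makes the factor $(A/a_{12})^4$ largest when $a_{12}$ is smallest). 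Since $\tilde{a}(A) > 0$ and $\tilde{l}(A) < \infty$, the quantity $\Lambda(A)$ is finite and positive, so $D = A/L^2 \ge A/\Lambda(A)^2 > 0$. Setting $D(\delta) := A/\Lambda(A)^2$ — re-expressed through the relation $1 + \delta^2/4 = (A/2\pi)^2$ as a function of $\delta$ — completes the normalized case, and scale invariance lifts it to general $|J|$.

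The only delicate point is bookkeeping rather than mathematics: Lemma \ref{L1} is stated for an \emph{arbitrary} pair of orbits and gives a bound that \emph{degrades} as $a_{12}$ shrinks or $l_{12}$ grows, so the entire content of the argument is that Lemma \ref{L2} guarantees the existence of \emph{one} good pair for which these quantities are simultaneously controlled uniformly in terms of $A$. I would therefore be careful to apply Lemma \ref{L1} to the specific orbits furnished by Lemma \ref{L2}, and to track the direction of the monotonicity when replacing $l_{12}$ and $a_{12}$ by their extremal bounds, so that the resulting $\Lambda(A)$ is a genuine upper bound for $L$. With that in hand the conclusion is immediate, which is why the excerpt records it as a corollary rather than a standalone theorem.
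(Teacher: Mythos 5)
Your proposal is correct and follows exactly the route the paper intends: reduce by scale invariance to $4|J|=1$, feed the specific pair of orbits produced by Lemma \ref{L2} into the dichotomy of Lemma \ref{L1}, and track the monotonicity so that $L$ is bounded above by a finite function of $A$ (equivalently of $\delta$). The paper simply records this combination as ``direct,'' and your write-up supplies the same bookkeeping it leaves implicit.
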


\n \begin{proof}[\bf Proof of Lemma \ref{L1}] If $L\leq 5l_{12}$ there is nothing to prove. Assume then that $L> 5l_{12}$ and assume without loss of generality that the middle point between $l_{1}$ and $l_{2}$ (that is $(l_{1}+l_{2})/2$) lies in the interval $[0,L/2]$ (that is $L/2\geq (l_{1}+l_{2})/2$). These two facts imply directly that
\ben
L-l_{1}\geq L/2 \qquad \text{and}\qquad L-l_{1}\geq 3l_{12}
\een
and from them we get 
\ben
\frac{L+l_{1}}{2}\leq \frac{L}{2}+l_{1}\leq L\qquad \text{and}\qquad \frac{L+l_{1}}{2}-l_{2}\geq \frac{l_{1}}{2}+3\frac{l_{2}-l_{1}}{2}\geq \frac{l_{2}-l_{1}}{2}.
\een 
Therefore the interval $[l_{2},(L+l_{1})/2]$ lies inside $[0,L]$ and has a length greater or equal than $l_{12}/2$. Now, for every $\bar{l}\in [l_{2},(L+l_{1})/2]$ consider the trial function $\alpha_{\bar{l}}=\alpha_{\bar{l}}(l)$ 
\begin{equation}
\alpha_{\bar{l}}(l)=\left\{
\begin{array}{lll}
0 & \text{ if } & l\leq l_{1} \text{ or } l\geq \bar{l}+(\bar{l}-l_{1})=2\bar{l}-l_{1},\\
1 & \text{ if } & l=\bar{l},\\
\text{linear} & \text{ when } &l_{1}\leq l\leq \bar{l},\\ 
\text{linear } & \text{ when } & \bar{l}\leq l\leq 2\bar{l}-l_{1}
\end{array}
\right.
\end{equation}
described in Figure \ref{FGRA} graph (c). We use this trial function now and with the help of (\ref{LTF}) (used twice, over $[l_{1},\bar{l}]$ and over $[\bar{l},2\bar{l}-l_{1}]$) we obtain easily
\ben
\int_{H} (|\nabla \alpha_{\bar{l}}|^{2}+\kappa \alpha_{\bar{l}}^{2})\, dA = 4\frac{r(\bar{l})}{\bar{l}-l_{1}} - \frac{a(\bar{l})-a(l_{1})}{(\bar{l}-l_{1})^{2}} - \frac{a(2\bar{l}-l_{1})-a(\bar{l})}{(\bar{l}-l_{1})^{2}}.
\een
In particular, if $H$ is stable then the r.h.s is non-negative and we have
\ben
4r(\bar{l}) \geq \frac{a(\bar{l})-a(l_{1})}{\bar{l}-l_{1}}.
\een
But $r(\bar{l})=d a(\bar{l})/d\bar{l}:=a'(\bar{l})$ and therefore $4a'(\bar{l})/(a(\bar{l})-a(l_{1}))\geq 1/(\bar{l}-l_{1})$. Integrating this inequality for $\bar{l}$ between $l_{2}$ and $(L+l_{1})/2$ we obtain
\ben
\bigg(\frac{a((L+l_{1})/2)-a(l_{1})}{a(l_{2})-a(l_{1})}\bigg)^{4}\geq \frac{(L-l_{1})/2}{l_{2}-l_{1}}.
\een
As $\bar{a}((L-l_{1})/2)\leq A$ and $(L-l_{1})/2\geq L/4$ we deduce
\ben
4l_{12}\bigg(\frac{A}{a_{12}}\bigg)^{4}\geq L
\een
as wished.
\end{proof}

\n \begin{proof}[\bf Proof of Lemma \ref{L2}] In this proof we are assuming that $|J|=1/4$. Take into account therefore that as $(A/2\pi)^{2}=1+\delta^{2}/4$ any function of $A$ can be thought as a function of $\delta$ and vice-versa.

To start, recall that the graph of the data $(\omega,\eta)$ inside the half plane $\{(\omega,\eta)/\eta>0\}$ lies between two arcs of circles passing through $(-1,0)$ and $(1,0)$ but cutting the half-line $\{\eta>0\}$ at the points $(\delta+\sqrt{\delta^{2}+4})/2$ and $(-\delta +\sqrt{\delta^{2}+4}))/2$ respectively (see Figure \ref{Fig-CAG}). Observe too that $(-\delta +\sqrt{\delta^{2}+1}))/2>1/(1+\delta)$. Therefore for any $\tilde{\eta}<1/(1+\delta)$ and angle $\theta$ such that $\eta(\theta)=\tilde{\eta}$ we have either $\omega(\theta)<0$ or 
$\omega(\theta)>0$. For any $\tilde{\eta}<1/(1+\delta)$ define the angles $\theta_{1}=\theta_{1}(\tilde{\eta})$ and $\theta_{2}=\theta_{2}(\tilde{\eta})$ by
\begin{align*}
\theta_{1}=\max \bigg\{\theta/\eta(\theta)=\tilde{\eta}\ \text{and}\ \omega(\theta)<0\bigg\}\qquad\text{and}\qquad
\theta_{2}=\min \bigg\{\theta>\theta_{1}/\eta(\theta)=\tilde{\eta}\ \text{and}\ \omega(\theta)>0\bigg\}.
\end{align*}
With this definition of $\theta_{1}$ and $\theta_{2}$ we clearly have 
\begin{enumerate}
\item[(c0)] $0<\theta_{1}<\theta_{2}<\pi$, and,  
\item[(c1)]  $\eta(\theta_{1})=\eta(\theta_{2})=\tilde{\eta}$, and,
\item[(c2)]  $\eta(\theta)\geq \tilde{\eta}$ when $\theta\in [\theta_{1},\theta_{2}]$, and,
\item[(c3)]  $\omega_{1}=\omega(\theta_{1})<0$ and $\omega_{2}=\omega(\theta_{2})>0$.
\end{enumerate}
\begin{figure}[h]
\centering
\includegraphics[width=6cm,height=5cm]{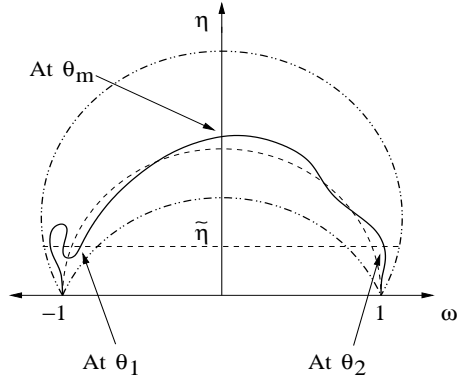}
\caption{The angles $\theta_{1}, \theta_{m}$ and $\theta_{2}$.}
\label{Fig-CAG}
\end{figure} 
Because of (c3) there is $\theta_{m}\in (\theta_{1},\theta_{2})$ such that $\omega(\theta_{m})=0$. Observe that at $\theta_{m}$ we have
\be\label{HW}
\eta(\theta_{m})>\frac{-\delta+\sqrt{\delta^{2}+4}}{2}>\frac{1}{1+\delta}.
\ee
Recall now from the discussion after Theorem \ref{T5} that when $\theta\in [0,\pi/3]\cup [2\pi/3,\pi]$ we have
\ben
\eta(\theta)=e^{\sigma(\theta)}\sin^{2}\theta\leq Ae^{4}\sin^{2}\theta.
\een
From this fact and (\ref{HW}) we deduce that either $\theta_{m}\in (\pi/3,2\pi/3)$ or 
\ben
\frac{1}{1+\delta}<\eta(\theta_{m})\leq Ae^{4}\sin^{2}\theta_{m}.
\een
It follows then that there is $\gamma(\delta)=\gamma(A)\in (0,\pi/2)$ independent of $\tilde{\eta}(<1/(1+\delta)$ such that $\theta_{m}\in [\gamma(A),\pi-\gamma(A)]$. We will use this information below.

In the following denote by $d_{12}$ the hyperbolic distance between $(\omega_{1},\eta_{1})=(\omega_{1},\tilde{\eta})$ and $(\omega_{2},\eta_{2})=(\omega_{2},\tilde{\eta})$. We will use also $\alpha_{12}$ as in (\ref{DAL}). The proof of the Lemma will come from using the inequalities 
\begin{align}
\label{I1} & d_{12} \geq {\rm Arch}\ \frac{1}{(1+\delta)^{2}\tilde{\eta}^{2}},\\
\label{I2} & \frac{(\alpha_{12}-1)^{2}}{\alpha_{12}}\leq 3\frac{{\rm Arch}\ (1+\delta^{2})}{d_{12}},\\
\label{I3} & \bigg(\min\bigg\{\tan \frac{\theta_{1}}{2},\tan \frac{\pi-\theta_{2}}{2}\bigg\}\bigg)^{2}\leq e^{\displaystyle -d_{12}/2\alpha_{12}}
\end{align}
which are deduced as follows. The inequality (\ref{I1}) follows from
\be\label{PI1}
d_{12}={\rm Arch}\bigg[1+\frac{(\omega_{2}-\omega_{1})^{2}}{2\tilde{\eta}^{2}}\bigg]
\ee
and by noting that
\ben
(\omega_{2}-\omega_{1})^{2}\geq \omega_{1}^{2}+\omega_{2}^{2}\geq 2(\frac{1}{(1+\delta)^{2}}-\tilde{\eta}^{2})
\een
where for the first inequality we use the conditions $\omega_{1}<0$, $\omega_{2}>0$, and for the second we use that for $i=1,2$ we have $\omega_{i}^{2}+\tilde{\eta}^{2}\geq 1/(1+\delta)^{2}$ (the graph of $(\omega,\eta)$ lies outside the disc of center $(0,0)$ and radius $1/(1+\delta)$). The inequality (\ref{I2}) is precisely (\ref{T4-2}) and finally the inequality (\ref{I3}) follows from (\ref{DAL}) and after noting that,  
$\tan (\theta_{2}/2)=1/\tan \big((\pi-\theta_{2})/2\big)$. 

Now, from (\ref{I1}), (\ref{I2}) and (\ref{I3}) we deduce directly the limits
\begin{align*}
\lim_{\tilde{\eta}\rightarrow 0} d_{12}\ = \infty,\qquad \lim_{\tilde{\eta}\rightarrow 0} \alpha_{12} = 1,\qquad\text{and}\qquad
\lim_{\tilde{\eta}\rightarrow 0} \min\bigg\{\tan \frac{\theta_{1}}{2},\tan \frac{\pi-\theta_{2}}{2}\bigg\} = 0.
\end{align*}
Therefore one can chose $\tilde{\eta}=\tilde{\eta}(A)$ such that 
\ben
\min\bigg\{\tan \frac{\theta_{1}}{2},\tan \frac{\pi-\theta_{2}}{2}\bigg\}\leq \tan \frac{\gamma(A)}{4}.
\een
Hence, either $\theta_{1}\leq \gamma(A)/2$ or $\pi-\theta_{2}\leq \gamma(A)/2$. But $\theta_{m}\in [\gamma(A),\pi-\gamma(A)]$ and therefore either $[\gamma(A)/2,\gamma(A)]\subset [\theta_{1},\theta_{2}]$ or $[\pi-\gamma(A),\pi-\gamma(A)/2]\subset [\theta_{1},\theta_{2}]$. We use now this crucial fact to show (\ref{CON1}) and (\ref{CON2}). 

If we denote by $C_{1}$ and $C_{2}$ the axisymmmetric orbits at $\theta_{1}$ and $\theta_{2}$ respectively, then the area $a_{12}$ between them is greater or equal than the area contained either between the orbits with angles $\gamma(A)/2$ and $\gamma(A)$, or between the orbits with angles $\pi-\gamma(A)$ and $\pi-\gamma(A)/2$. In either case  such area is $\tilde{a}(A):=A(\cos \gamma(A)/2 -\cos \gamma(A))/2$. Thus, with this definition of $\tilde{a}(A)$, we have $a_{12}\geq \tilde{a}(A)$ which is (\ref{CON1}).

On the other hand the length $l_{12}$ between $C_{1}$ and $C_{2}$ can be estimated from above by
\ben
l_{12}=\sqrt{\frac{A}{4\pi}}\int_{\theta_{1}}^{\theta_{2}}e^{-\sigma(\theta)/2}d\theta= \sqrt{\frac{A}{4\pi}}\int_{\theta_{1}}^{\theta_{2}}\frac{\sin \theta}{\sqrt{\eta(\theta)}} d\theta \leq \frac{2}{\sqrt{\tilde{\eta}}}\sqrt{\frac{A}{4\pi}}:=\tilde{l}(A)
\een
where we have used $\eta=e^{\sigma}\sin^{2}\theta$ and that, because of ($c_{2}$), between $\theta_{1}$ and $\theta_{2}$ we have $\eta\geq \tilde{\eta}$. With this definition of $\tilde{l}(A)$ we have $l_{12}\leq \tilde{l}(A)$ which is (\ref{CON2}).
\end{proof}

\subsection{Proof of Theorem \ref{Thmpoint}.}

\begin{proof}[\bf Proof of Theorem \ref{Thmpoint}] Again, the statement of Theorem \ref{Thmpoint} is scale invariant and therefore we can assume without loss of generality that $|J|=1/4$. Take into account below that $(A/2\pi)^{2}=1+\delta^{2}/4$ and therefore that any function of $A$ can be thought as a function of $\delta$ and vice-versa.

({\it i}\,) We need to show that there are functions $F_{1}(\delta)$ and $F_{2}(\delta)$ such that for any stable axi-symmetric horizon with data $(\omega,\eta)$ we have $|\sigma-\sigma_{E}|\leq F_{1}(\delta)$ and $|\omega-\omega_{E}|\leq F_{2}(\delta)$. We prove first the later bound and then the former.

{\it The bound $|\omega-\omega_{E}|\leq F_{2}(\delta)$}: We know that the graph of $(\omega,\eta)$ lies inside the region enclosed by the segment $[-1,1]$ on the $\omega$-axis and the arc of a circle of center $(0,\delta/2)$, which starts at $(-1,0)$ and ends at $(1,0)$ (see Figure \ref{Fig-CAGG}). The radius of such circle is $\sqrt{1+\delta^{2}/4}$. Therefore $|\omega|\leq \sqrt{1+\delta^{2}/4}$, which gives the bound
$|\omega-\omega_{E}|\leq \sqrt{1+\delta^{2}/4}+1:=F_{2}(\delta)$ (because $|\omega_{E}|$ is obviously bounded by one (recall $4|J|=1$)).

{\it The bound $|\sigma-\sigma_{E}|\leq F_{1}(\delta)$}: This bound will follow as the result of the two items (``$\bullet$") below. Let $\hat{\theta}\in [0,\pi/2]$ such that $(1-\cos\hat{\theta})/2=1/128$. In the first item ($\bullet$) we prove that $|\sigma(\theta)-\sigma_{E}(\theta)|\leq G_{1}(\delta)$ for certain function $G_{1}(\delta)$ and as long as $\theta\in ([0,\hat{\theta}]\cup [\pi-\hat{\theta},\pi])$. In the second item ($\bullet$) we instead show that $|\sigma(\theta)-\sigma_{E}(\theta)|\leq G_{2}(\delta)$ for certain function $G_{2}(\delta)$ and as long as $\theta\in [\hat{\theta},\pi-\hat{\theta}]$.
Thus, after the two items we will have proved that $|\sigma(\theta)-\sigma_{E}(\theta)|\leq \max\{G_{1}(\delta),G_{2}(\delta)\}:=F_{1}(\delta)$ for any $\theta\in [0,\pi]$ as wished.

$\bullet$ By ({\it ii}\,) in Theorem \ref{T5} we have
\ben
\bar{c}_{1}D^{2}(\delta) a \leq r^{2}\leq \bar{c}_{2} a
\een
for constants $\bar{c}_{1},\bar{c}_{2}$ and as long as $a/A=(1-\cos\theta)/2\leq 1/128$. Recalling that $r^{2}=4\pi e^{\sigma}\sin^{2}\theta$ and that $\sigma_{E}=\ln \big[1/(1+\cos^{2}\theta)\big]$ we get
\ben
\frac{\bar{\bar{c}}_{1}D^{2}(\delta)A(1-\cos\theta)(1+\cos^{2}\theta)}{\sin^{2}\theta}\leq e^{\displaystyle\, \sigma-\sigma_{E}}\leq \frac{\bar{\bar{c}}_{2}A(1-\cos\theta)(1+\cos^{2}\theta)}{\sin^{2}\theta}
\een
for constants $\bar{\bar{c}}_{1}$ and $\bar{\bar{c}}_{2}$. It follows that $|\sigma(\theta)-\sigma_{E}(\theta)|\leq G_{1}(\delta)$ for certain function $G_{1}(\delta)$ and as long as $\theta\in [0,\hat{\theta}]$. By symmetry the similar result applies for $\theta\in [\pi-\hat{\theta},\pi]$. 

$\bullet$ To simplify the notation below we make $(\omega(\theta),\eta(\theta))=(\omega,\eta)$, $(\omega(\hat{\theta}),\eta(\hat{\theta}))=(\hat{\omega},\hat{\eta})$, $\sigma=\sigma(\theta)$ and $\hat{\sigma}=\sigma(\hat{\theta})$. To start we observe that $d((\omega,\eta),(\hat{\omega},\hat{\eta}))\geq 2|\sigma-\hat{\sigma}|$ which is the result of the following computation 
\begin{align*}
d((\omega,\eta),(\hat{\omega},\hat{\eta})))&={\rm Arch}\bigg[1+\frac{(\omega-\hat{\omega})^{2}+(\eta-\hat{\eta})^{2}}{2\eta\hat{\eta}}\bigg]\\
&\geq {\rm Arch}\bigg[ 1+\frac{(\eta-\hat{\eta})^{2}}{2\eta\hat{\eta}}\bigg]={\rm Arch}\bigg[\frac{(\eta/\hat{\eta})^{2}+(\hat{\eta}/\eta)^{2}}{2}\bigg]\\
&={\rm Arch}\big[ \cosh 2(\sigma-\hat{\sigma})\big]=2|\sigma-\hat{\sigma}|.
\end{align*}
We will make use now of (\ref{T4-2}) with $\theta_{1}=\theta$ and $\theta_{2}=\hat{\theta}$, namely
\ben
\bigg[d\big((\omega,\eta),(\hat{\omega},\hat{\eta})\big)-\ln \frac{\tan \hat{\theta}/2}{\tan \theta/2}\bigg]^{2}\leq 4(\ln \frac{\tan\hat{\theta}/2}{\tan\theta/2}){\rm Arch}(1+\delta^{2}).
\een
From this and the inequality $d((\omega,\eta),(\hat{\omega},\hat{\eta}))\geq 2|\sigma-\hat{\sigma}|$ observed before, we get
\ben
2|\sigma-\hat{\sigma}|\leq \ln \frac{\tan \hat{\theta}/2}{\tan \theta/2} +\sqrt{4(\ln \frac{\tan\hat{\theta}/2}{\tan\theta/2}){\rm Arch}(1+\delta^{2})}.
\een
Thus there is $G_{3}(\delta)$ such that for any $\theta\in [\hat{\theta},\pi-\hat{\theta}]$ we have $|\sigma-\hat{\sigma}|\leq G_{3}(\delta)$.
But $|\sigma(\theta)-\sigma_{E}(\theta)|\leq |\sigma(\theta)-\sigma(\hat{\theta})|+|\sigma(\hat{\theta})-\sigma_{E}(\hat{\theta})|+|\sigma_{E}(\hat{\theta})-\sigma_{E}(\hat{\theta})|$ and thus
\ben
|\sigma(\theta)-\sigma_{E}(\theta)|\leq G_{3}(\delta)+G_{1}(\delta)+G_{4}(\hat{\theta}):=G_{2}(\delta).
\een
This finishes (i).

({\it ii}\,) We proceed by contradiction and assume that there is a sequence of data $(\omega^{i},\eta^{i})$ of a sequence of stable horizons $H_{i}$ with $|J_{i}|=1/4$ and having $\lim A_{i}=2\pi$ but not converging to the extreme Kerr horizon $(\omega_{E},\eta_{E})$ (with $|J|=1/4$). From Proposition \ref{PROC} and the discussion after the statement of Theorem \ref{Thmw} we deduce that there is a subsequence of $(\omega^{i},\eta^{i})$ converging in $C^{0}$ to $(\bar{\omega},\bar{\eta})$ of the form     
\begin{align}
\label{E65}\bar{\eta}=\frac{2 (\tan^{2} \frac{\theta}{2})\big/(\tan^{2} \frac{\bar{\theta}}{2})}{1+(\tan^{4} \frac{\theta}{2})\big/(\tan^{4} \frac{\bar{\theta}}{2})
},\qquad\bar{\omega} = \frac{-1+(\tan^{4} \frac{\theta}{2})\big/(\tan^{4} \frac{\bar{\theta}}{2})}{1+(\tan^{4} \frac{\theta}{2})\big/(\tan^{4} \frac{\bar{\theta}}{2})}
\end{align}
where $\bar{\theta}\neq \pi/2$. We will still index such subsequence with ``{i}''.  We will show that this implies that for sufficiently big $i$, the black hole $H_{i}$ is not stable, which is against the assumption. The instability for $i$ big enough is shown by finding a trial function, to be denoted as $\alpha_{\epsilon^{i}}$, for which ${\mathcal S}(h^{i},\alpha_{\epsilon^{i}})<0$, where ${\mathcal S}(h,\alpha)$, for a given metric $h$ and function $\alpha$, is defined to simplify notation here and below as
\ben
{\mathcal S}(h,\alpha)=\int_{H} \big(|\nabla\alpha|^{2}+\kappa \alpha^{2}\big)\, dA.
\een 

We start by noting that the limit metric 
\ben
\bar{h}=\frac{1}{2}e^{\displaystyle -\bar{\sigma}}\, d\theta^{2}+e^{\displaystyle \bar{\sigma}}\sin^{2}\theta\, d\varphi^{2}
\een
where $\bar{\sigma}$ is defined through $\bar{\eta}=e^{\bar{\sigma}}\sin^{2}\theta$ has an angle defect $\beta_{N}$ at the north pole $N$ (i.e. at $\theta=0$) equal to 
\ben
\beta_{N}:=2\pi - \frac{d r}{d l}\bigg|_{l=0}=2\pi (1-\frac{1}{ \tan^{2}\bar{\theta}/2})
\een
and an angle defect at the south pole $S$ (i.e. at $\theta=\pi$) equal to 
\ben
\beta_{S}=2\pi( 1- \tan^{2}\bar{\theta}/2).
\een
Thus, if $\bar{\theta}<\pi/2$ we have $\beta_{N}<0$ and $\beta_{S}>0$, while if instead $\bar{\theta}>\pi/2$ then we have $\beta_{N}>0$ and $\beta_{S}<0$. Assume without loss of generality that $\bar{\theta}<\pi/2$ and therefore that $\beta_{N}<0$. This will be used crucially later.

Denote by $\bar{L}$ the $\bar{h}$-length of the meridian. Define now a function $\alpha(l)$ on $(0,\bar{L}/2]$ by 
\ben
\alpha(l)=\ln\ln \bigg[\frac{e\bar{L}}{2l}\bigg]
\een
and for any $\epsilon<\bar{L}/2$ define $\alpha_{\epsilon}(l)$ on $[0,\infty)$ by 
\ben
\alpha_{\epsilon}(l)=\left\{
\begin{array}{lcl}
\alpha(\epsilon) & \text{when} & l\leq \epsilon,\\
\alpha(l)              & \text{when} & \epsilon \leq l \leq \bar{L}/2,\\
0                           & \text{when} & l\geq \bar{L}/2.
\end{array}
\right.
\een

For any smooth metric $h$ with $L\geq \bar{L}/2$ we can compute ${\mathcal S}(h,\alpha_{\epsilon})$ in the form
\be\label{PSS}
{\mathcal S}(h,\alpha_{\epsilon})=2\pi(1-\phi'(\epsilon))\,\alpha^{2}(\epsilon)+2\pi \int_{\epsilon}^{\bar{L}/2} (\phi\, \alpha'^{2} - \phi''\alpha^{2})\, dl
\ee
where $\phi$ comes from writing $h=dl^{2}+\phi^{2}(l)d\varphi^{2}$, that is $\phi=\sqrt{\eta}$. Note that the limit metric $\bar{h}=dl^{2}+\bar{\phi}^{2}d\varphi^{2}$ is not smooth at the poles and therefore the functional value ${\mathcal S}(\bar{h},\alpha_{\epsilon})$ is, a priori, not well defined. Nevertheless as $\bar{\phi}(l)$ is a smooth function on $[0,\bar{L}]$ the right hand side of (\ref{PSS}) makes also perfect sense if we use $\phi=\bar{\phi}$. 

We prove now two fundamental facts:
\begin{enumerate}
\item[\rm (F1).] If $\epsilon^{*}(\leq \bar{L}/2)$ is sufficiently small then
\ben
\underbrace{2\pi(1-\bar{\phi}'(\epsilon^{*}))\,\alpha^{2}(\epsilon^{*})\vphantom{\int_{\epsilon}}}_{\rm (I)} +\underbrace{2\pi \int_{\epsilon^{*}}^{\bar{L}/2} (\bar{\phi}\, \alpha'^{2} - \bar{\phi}''\alpha^{2})\,dl\,}_{\rm (II)} <\, 0
\een
\item[\rm (F2).] For any $\epsilon^{*}(\leq \bar{L}/2)$ there is a sequence $\epsilon^{i}\rightarrow \epsilon^{*}$ such that
\be\label{RHS}
{\mathcal S}(h^{i},\alpha_{\epsilon_{i}})\rightarrow 2\pi(1-\bar{\phi}'(\epsilon^{*}))\alpha^{2}(\epsilon^{*}) +2\pi \int_{\epsilon^{*}}^{\bar{L}/2} (\bar{\phi}\, \alpha'^{2} - \bar{\phi}''\alpha^{2})\, dl.
\ee  
\end{enumerate}
From (F1) and (F2) it will follow that for $i$ big enough there is $\epsilon^{i}$ close to $\epsilon^{*}$ ($\epsilon^{*}$ as in (F1)) such that ${\mathcal S}(h^{i},\alpha_{\epsilon^{i}})<0$. Thus, we will be done with the proof of Theorem \ref{Thmpoint} after proving (F1) and (F2).

{\it Proof of F1}. From the limits
\begin{align*}
& \lim_{\epsilon^{*}\rightarrow 0} 2\pi(1-\phi'(\epsilon^{*}))=\beta_{N}<0,\\
& \lim_{\epsilon^{*}\rightarrow 0} \alpha^{2}(\epsilon^{*})=\bigg(\ln\ln \bigg[\frac{e\bar{L}}{2\epsilon^{*}}\bigg]\bigg)^{2}\rightarrow +\infty.
\end{align*}
it is deduced that the under-braced term (I) diverges to minus infinity as $\epsilon^{*}$ tends to zero. Hence, to prove (F1) it is enough to prove a bound for $|$(II)$|$ independent of $\epsilon^{*}$. To show this we use the following expansion that the reader can check directly from (\ref{E65}) (recall $\bar{\phi}=\bar{\eta}$) 
\ben
\bar{\phi}(l)=\frac{1}{\tan ^{2}\bar{\theta}/2} l + \frac{\sqrt{2}}{6\tan^{3}\bar{\theta}/2} l^{3} + O(l^{4}).
\een
From it one easily shows that the function $\bar{\phi}'' \alpha^{2}$ is bounded on $[0,\bar{L}]$ and therefore that $|\int_{\epsilon^{*}}^{\bar{L}/2}\bar{\phi}''\alpha^{2}\, dl|\leq  \int_{0}^{\bar{L}} |\phi''\alpha^{2}|\, dl <\infty$. Also, as $(\alpha')^{2}=1/(l\ln l)^{2}$, we have 
\ben
\int_{\epsilon^{*}}^{\bar{L}/2} \alpha'^{2}\bar{\phi} dl\leq \int_{0}^{\bar{L}} \alpha'^{2}\bar{\phi}\, dl<\infty.
\een
This finishes the proof of (F1).

{\it Proof of F2.} An easy application of Rolle's Theorem shows that, as $\phi_{i}$ converges in $C^{0}$ to $\bar{\phi}$, there is a sequence $\epsilon^{i}\rightarrow \epsilon^{*}$ such that $\phi'_{i}(\epsilon^{i})\rightarrow \bar{\phi}'(\epsilon^{*})$. We use this sequence $\epsilon_{i}$ below. After an integrating by parts we obtain the following expression for ${\mathcal S}(h^{i},\alpha_{\epsilon^{i}})$	  
\ben
{\mathcal S}(h^{i},\alpha_{\epsilon^{i}})=2\pi(1-\phi_{i}')\,\alpha^{2}\big|_{\epsilon^{i}}
+\phi'_{i}\alpha^{2}\big|_{\epsilon^{i}}-2\phi_{i}\alpha\alpha'\big|_{\epsilon^{i}}-\int_{\epsilon^{i}}^{\bar{L}/2}2\phi_{i}(\alpha'^{2}+\alpha\alpha'')dl
+2\pi \int_{\epsilon^{i}}^{\bar{L}/2} \phi_{i}\, \alpha'^{2} dl
\een
where we did not write the evaluations at $\bar{L}/2$ which vanish because $\alpha(\bar{L}/2)=0$.
Now as $\phi'_{i}(\epsilon^{i})\rightarrow \bar{\phi}'(\epsilon^{*})$ and $\phi^{i}$ converges to $\bar{\phi}$ in $C^{0}$ we can take the term by term limit in the previous expression to obtain
\ben
2\pi(1-\bar{\phi}')\,\alpha^{2}\big|_{\epsilon^{*}}
+\bar{\phi}'\alpha^{2}\big|_{\epsilon^{*}}-2\bar{\phi}\alpha\alpha'\big|_{\epsilon^{*}}-\int_{\epsilon^{*}}^{\bar{L}/2}2\bar{\phi}(\alpha'^{2}+\alpha\alpha'')dl
+2\pi \int_{\epsilon^{*}}^{\bar{L}/2} \bar{\phi}\, \alpha'^{2} dl.
\een
Undoing the integration by parts we get the right hand side of (\ref{RHS}), as wished.
\end{proof} 
\subsection{Proof of Theorems \ref{ThmRL} and \ref{ThmRL2}.}
\n \begin{proof}[\bf Proof of Theorem \ref{ThmRL}]  It is enough to prove the theorem when $|J|=1/4$.  Recall that the graph of $(\omega,\eta)$ lies between two arcs cutting the $\eta$-axis at the points $(-\delta +\sqrt{\delta^{2}+4})/2$ and $(\delta+\sqrt{\delta^{2}+4})/2$. It follows that
\be\label{Rest11}
R=\max \big\{2\pi \sqrt{\eta}\big\} \leq 2\pi \sqrt{\frac{\delta+\sqrt{\delta^{2}+4}}{2}}.
\ee
On the other hand when the graph of $(\omega,\eta)$ crosses the $\eta$-axis we have $\eta\geq (\delta+\sqrt{\delta^{2}+4})/2$ and therefore  
\be\label{Rest22}
R=\max \big\{2\pi \sqrt{\eta}\big\}\geq 2\pi \sqrt{\frac{-\delta+\sqrt{\delta^{2}+4}}{2}}.\qedhere
\ee
\end{proof}
\n \begin{proof}[{\bf Proof of Theorem \ref{ThmRL2}}] We assume $|J|=1/4$. To obtain the upper bound in (\ref{Rest2}) use (\ref{Rest11}) and 
\ben
\sqrt{\frac{\delta+\sqrt{\delta^{2}+4}}{2}}\leq \big(\delta^{2}+4\big)^{1/4}=\sqrt{\frac{A}{\pi}}.
\een
To obtain the lower bound instead use (\ref{Rest22}) and
\ben
\sqrt{\frac{-\delta+\sqrt{\delta^{2}+4}}{2}}=\sqrt{ \frac{2}{\delta+\sqrt{\delta^{2}+4}}} \geq \sqrt{\frac{\pi}{A}}.
\een 

The first inequality in (\ref{Rest4}) is just $A\geq 8\pi |J|$. The second is (\ref{(a)}) of Theorem \ref{T5} when $l=L$ and therefore $a=A$. 

The first inequality in (\ref{Rest3}) is a consequence of the obvious $LR\geq A$. To obtain the second inequality use $R\leq \sqrt{4\pi A}$ and $\sqrt{A}\leq \sqrt{2\pi}L$ that we have proved before.
\end{proof}

\bibliographystyle{plain}

\bibliography{Master.bib}

\begin{thebibliography}{10}

\bibitem{Acena:2010ws}
Andres Acena, Sergio Dain, and Maria~E. Gabach~Clement.
\newblock {Horizon area--angular momentum inequality for a class of axially
  symmetric black holes}.
\newblock {\em Class.Quant.Grav.}, 28:105014, 2011.

\bibitem{Andersson:2007fh}
Lars Andersson, Marc Mars, and Walter Simon.
\newblock {Stability of marginally outer trapped surfaces and existence of
  marginally outer trapped tubes}.
\newblock {\em Adv.Theor.Math.Phys.}, 12, 2008.

\bibitem{Andersson:2007gy}
Lars Andersson and Jan Metzger.
\newblock {The Area of horizons and the trapped region}.
\newblock {\em Commun.Math.Phys.}, 290:941--972, 2009.

\bibitem{Bekenstein:1980jp}
Jacob~D. Bekenstein.
\newblock {A Universal Upper Bound on the Entropy to Energy Ratio for Bounded
  Systems}.
\newblock {\em Phys.Rev.}, D23:287, 1981.

\bibitem{MR1908823}
Hubert~L. Bray.
\newblock Proof of the {R}iemannian {P}enrose inequality using the positive
  mass theorem.
\newblock {\em J. Differential Geom.}, 59(2):177--267, 2001.

\bibitem{Chrusciel:2011iv}
Piotr~T. Chrusciel, Michal Eckstein, Luc Nguyen, and Sebastian~J. Szybka.
\newblock {Existence of singularities in two-Kerr black holes}.
\newblock {\em Class.Quant.Grav.}, 28:245017, 2011.

\bibitem{Clement:2012vb}
Maria E.~Gabach Clement, Jose~Luis Jaramillo, and Martin Reiris.
\newblock {Proof of the area-angular momentum-charge inequality for
  axisymmetric black holes}.
\newblock {\em Class.Quant.Grav.}, 30:065017, 2013.

\bibitem{2011PhRvL.107e1101D}
S.~{Dain} and M.~{Reiris}.
\newblock {Area-Angular-Momentum Inequality for Axisymmetric Black Holes}.
\newblock {\em Physical Review Letters}, 107(5):051101, July 2011.

\bibitem{DomicianodeSouza:2003rq}
Armando Domiciano~de Souza, P.~Kervella, S.~Jankov, L.~Abe, F.~Vakili, et~al.
\newblock {The Spinning - top Be star Achernar from VLTI - VINCI}.
\newblock {\em Astron.Astrophys.}, 407:L47--L50, 2003.

\bibitem{Gibbons:2009xm}
G.W. Gibbons.
\newblock {Birkhoff's invariant and Thorne's Hoop Conjecture}.
\newblock 2009.

\bibitem{Gibbons:2012ac}
G.W. Gibbons.
\newblock {What is the Shape of a Black Hole?}
\newblock {\em AIP Conf.Proc.}, 1460:90--100, 2012.

\bibitem{Gibbons:2009qe}
G.W. Gibbons, C.A.R. Herdeiro, and C.~Rebelo.
\newblock {Global embedding of the Kerr black hole event horizon into
  hyperbolic 3-space}.
\newblock {\em Phys.Rev.}, D80:044014, 2009.

\bibitem{MR0424186}
S.~W. Hawking and G.~F.~R. Ellis.
\newblock {\em The large scale structure of space-time}.
\newblock Cambridge University Press, London, 1973.
\newblock Cambridge Monographs on Mathematical Physics, No. 1.

\bibitem{Hennig:2008yw}
Jorg Hennig, Marcus Ansorg, and Carla Cederbaum.
\newblock {A Universal inequality between angular momentum and horizon area for
  axisymmetric and stationary black holes with surrounding matter}.
\newblock {\em Class.Quant.Grav.}, 25:162002, 2008.

\bibitem{Hod:2000ju}
Shahar Hod.
\newblock {Universal entropy bound for rotating systems}.
\newblock {\em Phys.Rev.}, D61:024018, 2000.

\bibitem{2011PhRvD..84l1503J}
J.~L. {Jaramillo}, M.~{Reiris}, and S.~{Dain}.
\newblock {Black hole area-angular-momentum inequality in nonvacuum
  spacetimes}.
\newblock {\em Physical Review Letters D}, 84(12):121503, December 2011.

\bibitem{Lehner:2011wc}
Luis Lehner and Frans Pretorius.
\newblock {Final State of Gregory-Laflamme Instability}.
\newblock 2011.

\bibitem{1818}
K.S. Thorne.
\newblock Nonspherical gravitational collapse: A short review.
\newblock In J.R. Klauder, editor, {\em Magic Without Magic: John Archibald
  Wheeler. A Collection of Essays in Honor of his Sixtieth Birthday}, pages
  231--258. W.H. Freeman, San Francisco, 1972.

\end{thebibliography}

\end{document}